\documentclass[11pt]{article}
\usepackage{stix2}
\usepackage[margin=1in]{geometry}
\usepackage{amsmath,amssymb,amsthm,mathtools}
\usepackage{enumitem}
\usepackage{microtype}
\usepackage[hidelinks]{hyperref}

\newcommand{\CLS}{\mathsf{CLS}}

\theoremstyle{plain}
\newtheorem{theorem}{Theorem}[section]
\newtheorem{lemma}[theorem]{Lemma}
\newtheorem{proposition}[theorem]{Proposition}
\newtheorem{corollary}[theorem]{Corollary}
\newtheorem{definition}{Definition}

\theoremstyle{remark}
\newtheorem{remark}[theorem]{Remark}

\title{On the Complexity of Finding Fixed Points\\ for Set-Valued Contractions}
\author{Emmanouil-Vasileios Vlatakis-Gkaragkounis\thanks{Department of Computer Sciences, University of Wisconsin-Madison, email: \texttt{vlatakis@wisc.edu}} \and Pucheng Xiong\thanks{Department of Computer Sciences, University of Wisconsin-Madison, email: \texttt{pxiong79@wisc.edu}}}
\date{}

\begin{document}
\maketitle
\begin{abstract}
In this paper, we study the computational complexity of finding fixed points for set-valued contractions. We first formulate a computational problem for Nadler's fixed-point theorem: \textsc{Projected-Nadler}, and prove that it is \textsf{CLS}-complete by showing its equivalence to \textsc{Continuous-LocalOpt}. We then establish a stronger converse for Nadler's fixed point theorem that can be applied as a tool to analyze the convergence rate of set-valued basic iteration procedure. Finally, we reduce large-margin triplet stationarity problem to \textsc{Projected-Nadler}. Together with its \textsf{CLS}-hardness introduced in \cite{YanEtAlTriplet}, this yields \textsf{CLS}-completeness of large-margin triplet stationarity.
\end{abstract}

{\small \tableofcontents}

\section{Introduction}
\label{sec:introduction}

Many computations are iterations. Gradient descent, power iteration,
expectation--maximization, alternating minimization, and coordinate descent
all repeatedly apply an update rule in the hope that the resulting trajectory
settles at a fixed point. For a single-valued map $f$ on a complete metric
space, Banach's contraction principle gives the cleanest possible
certificate: if
\[
    d(f(x),f(y))\leq c\,d(x,y),\qquad c<1,
\]
then $f$ has a unique fixed point and every orbit converges to it at a
geometric rate. The converse Banach theory shows that this certificate is
surprisingly universal: under robust global convergence, one can often
change the metric so that the dynamics becomes a contraction. Its
computational counterpart is equally sharp--finding the fixed point of a
succinctly represented contraction is complete for continuous local search
\cite{Meyers1967,DaskalakisPapadimitriou2011,DaskalakisTzamosZampetakis2018}.

\paragraph{From functions to correspondences.}
Many algorithms are not intrinsically single-valued. Their next state may
depend on a tie, an arrival order, a coordinate permutation, an equilibrium
selection, or an adversarially chosen minimizer. Fixing one selection in
advance can hide the robustness question that matters: does convergence hold
for \emph{every} admissible sequence of choices? Random-permutation cyclic
coordinate descent is a representative example. Once the distribution over
permutations is discarded, one epoch sends the current point to the finite
set of all possible epoch outputs. Its natural deterministic model is a
correspondence, and each trajectory of that correspondence represents a
possibly adaptive sequence of coordinate orders.

Correspondences also arise directly in equilibrium computation. Best-response
maps, feasible-action maps, generalized variational inequalities, and
quasi-variational inequalities are set-valued; when the feasible region
itself depends on the current point, replacing the model by a predetermined
selection changes the problem
\cite{VI_and_NE,GQVI_chan,complexity_of_QVI,quasivariational-inequalities-local}.
The relevant fixed-point condition is $x\in F(x)$.

A useful geometric picture comes from iterated function systems. If
$w_1,\ldots,w_m$ are contractions, the Hutchinson operator
\[
    \mathcal H(S)=\bigcup_{i=1}^m w_i(S)
\]
is a contraction on the hyperspace of nonempty compact sets equipped with the
Hausdorff metric, and its fixed point is a self-similar attractor
\cite{Hutchinson1981,BarnsleyDemko1985}. Strictly speaking, $\mathcal H$ is a
single-valued Banach contraction whose \emph{points} are compact sets; it is
not a genuinely set-valued map on that hyperspace. Figure~\ref{fig:hutchinson-fractal}
nevertheless makes the Hausdorff geometry behind Nadler's theorem concrete.

\begin{figure}[t]
\centering
\includegraphics[width=0.40\linewidth]{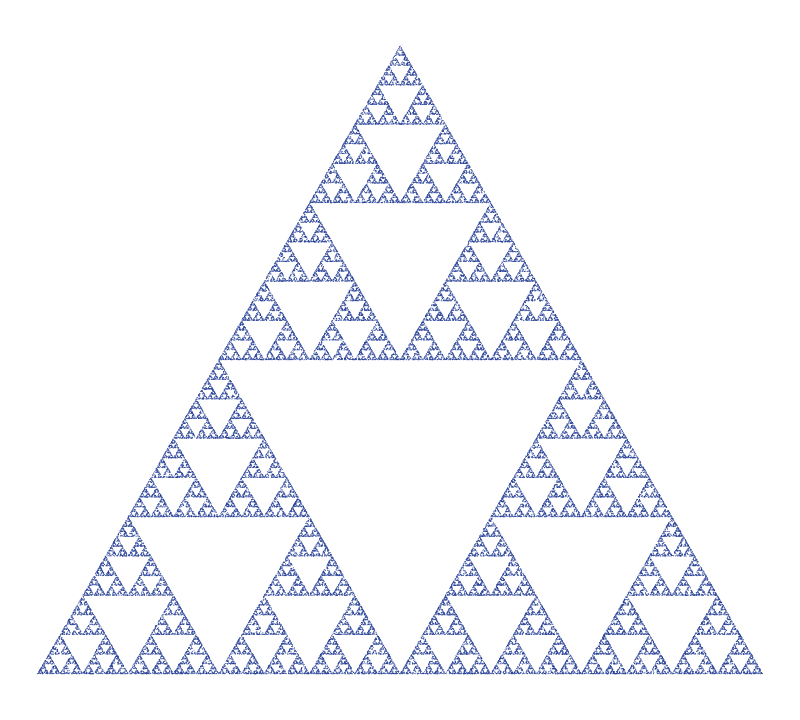}
\caption{A Hausdorff-metric fixed point. The Sierpi\'nski attractor is the
unique fixed point of the Hutchinson operator
$\mathcal H(S)=w_1(S)\cup w_2(S)\cup w_3(S)$ on the hyperspace of nonempty
compact subsets of $\mathbb R^2$. The operator is single-valued on a
hyperspace, but it illustrates the metric geometry shared by Banach and
Nadler contractions.}
\label{fig:hutchinson-fractal}
\end{figure}

\paragraph{The missing computational act.}
Nadler's theorem is the contraction principle for correspondences. If
$(X,d)$ is complete, each $F(x)$ is nonempty, closed, and bounded, and
\[
    H_d(F(x),F(y))\leq k\,d(x,y),\qquad k<1,
\]
then $F$ has a fixed point \cite{Nadler1969}. Nadler's theorem sits naturally
between two pillars of computational fixed-point theory. It is set-valued
like Kakutani's theorem, whose computational formulations lead to
$\mathsf{PPAD}$, and contractive like Banach's theorem, whose computational
formulation leads to $\CLS$
\cite{PapadimitriouVlatakisGkaragkounisZampetakis2023,DaskalakisTzamosZampetakis2018}.
Yet the computational content of Nadler's theorem has not received a
corresponding classification.

\begin{center}
\setlength{\fboxsep}{9pt}
\fbox{\begin{minipage}{0.88\linewidth}
\centering
\textbf{Can Nadler's fixed-point guarantee be captured by a natural total
search problem?}
\end{minipage}}
\end{center}
\noindent Conversely,
\begin{center}
\setlength{\fboxsep}{9pt}
\fbox{\begin{minipage}{0.88\linewidth}
\centering
\textbf{When can robust global convergence of a
set-valued iteration be certified as Hausdorff contraction after a change of
metric?}
\end{minipage}}
\end{center}

A satisfactory formulation must provide succinct access to the fibers,
make approximate fixedness verifiable, and return checkable witnesses when a
claimed contraction or regularity property is false. It must also allow the
metric to be part of the instance: the energy metrics and potential
ultrametrics that reveal contraction below are not interchangeable with the
ambient Euclidean metric.

\paragraph{The projection barrier.}
The most tempting computational representation is a nearest-point circuit
$\Pi_F(x,z)$ and its diagonal selection
\[
    P(x)=\Pi_F(x,x).
\]
A subtlety arises immediately. Hausdorff--Lipschitz variation of convex
fibers does \emph{not} generally make $P$ Lipschitz. The correct moving-set
estimate is ``asymptotically'' square-root:
\[
 \|\Pi_A(z)-\Pi_B(z)\|_2^2
 \leq
 d_{\mathrm H}(A,B)
 \bigl(\operatorname{dist}(z,A)+\operatorname{dist}(z,B)\bigr).
\]
On a bounded domain, a Hausdorff--Lipschitz correspondence therefore gives a
H\"older-$1/2$ diagonal projection in general, and the exponent is sharp.
This is the same precision phenomenon that makes computational Kakutani
projection arguments delicate. Any proof that silently promotes the
projection to a Lipschitz map loses a square root.

Our formulation accommodates this geometry rather than assuming it away. We
fix a rational exponent $\beta\in(0,1]$--with $\beta=1/2$ as the canonical
convex Euclidean case--and treat a claimed $\beta$-H\"older bound for the
selected projection as a separately verifiable representation condition.
The main technical bridge is that, for every fixed $\beta$, H\"older
continuous local search has exactly the power of ordinary continuous local
search. A succinct Kuhn-simplicial interpolant turns an $n$-dimensional
H\"older circuit into a Lipschitz one using only $n+1$ grid values per
simplex; this avoids the exponential $2^n$ cost of multilinear interpolation
and is essential for our polynomial-dimensional applications.

At the same time, the dynamical heart of Nadler's theorem does not require a
continuous selector at all. For
\(
    r(x)=\operatorname{dist}_d(x,F(x)),
\)
an exact nearest-point update satisfies
\(
    r(P(x))\leq k\,r(x).
\)
This selector-free residual contraction powers the local-search reduction
and survives under almost-precise projection oracles.

\subsection{Our contributions and techniques}

\paragraph{1. A computational Nadler problem.} To begin, our preliminary contribution is the well-defined computational version of Nadler's theorem, under the projection oracle access.
For every fixed rational $\beta \in (0,1]$, we formulate the total search problem $\beta$-\textnormal{\textsc{Projected-Nadler}}. The input comprises a complete metric $d$, a correspondence $F \colon X \rightrightarrows X$, a nearest-point projection circuit $\Pi_F(x,z)$ returning a closest point to $z$ in $F(x)$, a claimed Hausdorff contraction factor $k \in (0,1)$, and claimed H\"older/metric regularity bounds. The diagonal selection $P(x) = \Pi_F(x,x)$ induces the residual $r(x) = d(x,P(x))$. This representation is analogous to computational formulations of Kakutani fixed points \cite{PapadimitriouVlatakisGkaragkounisZampetakis2023}, but it deliberately omits small-volume outputs to avoid trivializing singleton correspondences and invalidating our hardness reduction. A valid solution is an approximate fixed point, a projection-based certificate that the Hausdorff contraction condition fails, or a succinct violation of one of the global regularity bounds. Given the aforementioned definition, our first results  establish that
\[
 \boxed{
 \quad \beta\text{-}\textnormal{\textsc{Projected-Nadler}}
 \ \equiv_p\
 \ \beta\text{-}\textnormal{\textsc{H\"older-LocalOpt}}
 \ \equiv_p\
 \ \textnormal{\textsc{Continuous-LocalOpt}}.\quad}
\]
Consequently, $\beta$-\textnormal{\textsc{Projected-Nadler}} is $\CLS$-complete (Theorem~\ref{thm:nadler-clo-equivalence}) relative to the canonical complete problem for $\CLS$ \cite{DaskalakisPapadimitriou2011}.

Membership uses the diagonal selection $P$ as the update map and the normalized residual $r(x)$ as the potential function; whenever the claimed Hausdorff contraction holds, $r(P(x)) \le k r(x)$, meaning a failure of sufficient residual descent directly yields an approximate fixed point or certifies a contraction violation. 

For the converse reduction, we encode a continuous local-search instance with update map $f$ and potential $p$ as the singleton correspondence $F_f(x) = \{f(x)\}$. Equipping its domain with a complete ultrametric whose nonzero distances track the potential values of their endpoints, a contraction violation exposes a point at which the local-search potential fails to decrease sufficiently. The use of an explicit input metric is essential here, as this complete ultrametric is not topologically equivalent to the Euclidean metric. Finally, as a tool of independent interest, we prove a H\"older-to-Lipschitz interpolation theorem that operates in polynomial dimension, preserves succinctness, and incurs only a polynomial precision loss for fixed $\beta$, thereby guaranteeing that the H\"older-$1/2$ regularity naturally produced by moving convex projections does not move the problem outside $\CLS$.

\vspace{-1em}
\paragraph{2. A quantitative converse to Nadler's theorem.}
Our second contribution concerns the analytical universality of Nadler's
theorem. Classical converse results for Banach's theorem show that globally
convergent single-valued iterations can be made contractive after a suitable
change of metric \cite{DaskalakisTzamosZampetakis2018}, and for
correspondences the corresponding hypothesis must be about \emph{every}
branch of the iteration at once: since different admissible choices generate
different trajectories, convergence of one selected sequence says nothing
about the rest. We therefore study correspondences for which every
admissible trajectory converges robustly, and locally uniformly, to an
endpoint $x^*$, meaning $F(x^*)=\{x^*\}$ --- strictly stronger than the
ordinary fixed-point condition $x^*\in F(x^*)$, and necessary because a
Nadler contraction can otherwise have many fixed points. 

Under these
hypotheses, we show that for every prescribed contraction factor
$c\in(0,1)$ and accuracy scale $\varepsilon>0$ there exists a complete
metric $D_{c,\varepsilon}$ that generates the original topology, makes $F$
a Hausdorff $c$-contraction, and transfers small distances back to the
original metric (Theorem~\ref{thm:daskalakis-nadler-converse}). This
strengthens the set-theoretic converses surrounding Fryszkowski's problem
\cite{Comaneci2017,Luchian2018}, which identify \emph{when} some complete
contracting metric exists but supply neither a prescribed topology nor a
quantitative transfer back to it --- exactly the two features an
algorithmic convergence analysis needs.

The construction follows the same potential--metric duality as our
complexity result, run in reverse: instead of a potential producing a
metric directly, the convergence hypothesis first manufactures a
\emph{rank} that advances along every admissible branch, read off the depth
of a point in the nested global images of $F$; a Hausdorff-nonexpansive
closure of the original metric then controls the underlying geometry; and
weighting this closure by rank before taking its geodesic (shortest-chain)
closure converts uniform rank progress into a strict, topology-preserving
contraction, together with residual error bounds comparing
$\operatorname{dist}_{D_{c,\varepsilon}}(x,F(x))$ to the distance from
$x^*$. The next contribution shows this abstract certificate at work on a
concrete algorithm.

\vspace{-1em}
\paragraph{3. Robust convergence of coordinate descent under arbitrary
orderings.} As an illustration of the converse theorem, we study random-permutation
cyclic coordinate descent on strongly convex quadratics --- precisely the
kind of correspondence the introduction motivated, whose next state depends
on an unmodeled ordering choice. Selecting a single coordinate at a time is
not the right object: the one-coordinate correspondence admits branches
that revisit the same nonoptimal point forever, since every coordinate
update fixes an entire hyperplane rather than only the true minimizer. The
full-epoch correspondence removes this obstruction, since every coordinate
is updated exactly once per transition, and we show it contracts uniformly
over \emph{all} $n!$ orderings in the energy norm induced by the
objective's Hessian (Proposition~\ref{prop:rpcd-nadler-contraction}). 

This
yields a pathwise convergence theorem --- not merely one in expectation ---
that holds for deterministic, adaptive, randomized, or even adversarial
sequences of epoch orderings alike
(Corollaries~\ref{cor:rpcd-all-order-convergence}
and~\ref{cor:rpcd-residual-bound}). A two-dimensional example shows why
this choice of metric is not incidental: the same full-epoch map can
strictly \emph{expand} the standard Euclidean norm, even though every
ordering converges to the unique minimizer
(Section~\ref{sec:appendix-rpcd}). The energy metric is thus not a
convenient proxy for Euclidean distance but the metric the correspondence
actually contracts under, exactly the kind of non-Euclidean instance our
computational formulation was designed to admit.
\vspace{-1em}

\paragraph{4. Large-margin triplet stationarity.}
Our fourth contribution shows that \textsc{Projected-Nadler} is not only an
abstract characterization but a computational primitive that applies
directly to a concrete, nonconvex learning objective: finding an
approximate first-order stationary point of a large-margin triplet loss. We
consider a weighted triplet-loss objective over a box-constrained embedding
domain, the loss first introduced in the FaceNet paper
\cite{FaceNetTripletLoss} and now one of the most prominent contrastive
losses. In the strict large-margin regime, where the margin $\alpha$ is at
least the embedding dimension $d_{\mathrm{emb}}$, every hinge agrees on the
feasible box with its underlying quadratic expression, so the objective is
globally smooth even though it need not be convex.

We reduce this problem to \textsc{Projected-Nadler}
(Theorem~\ref{thm:large-margin-triplet-to-nadler}) by pairing the
objective's projected-gradient update $T$ with the singleton correspondence
$F_{\mathcal L}(x)=\{T(x)\}$ and a potential ultrametric built from the
normalized loss --- the same construction our $\CLS$-hardness proof uses,
now instantiated on a concrete objective rather than an arbitrary
local-search instance. An approximate Nadler fixed point is then
necessarily a fixed point of the projected-gradient map, and hence an exact
first-order stationary point; a returned contraction violation instead
identifies a witness point with insufficient potential decrease, which a
standard projected-gradient estimate converts into an approximate
stationary point; and the remaining regularity outputs are ruled out by
explicit Lipschitz bounds on $T$ and on the potential metric, so the
reduction never needs to fall back on an unhelpful violation. Combined with
the $\CLS$-hardness of exact triplet-loss stationarity proved by Yan et al.
\cite{YanEtAlTriplet}, this yields $\CLS$-completeness of
\textsc{Large-Margin-Triplet-FOSP}
(Corollary~\ref{cor:large-margin-triplet-cls-complete}).

\subsection{Organization and AI disclosure}
 Section~\ref{sec:preliminaries} introduces the mathematical, computational, and optimization notions used throughout the paper.
Section~\ref{sec:nadler-continuous-local-opt} formulates
$\beta$-\textsc{Projected-Nadler} and Section \ref{sec:appendix-nadler-complexity} proves its \textsf{CLS}-completeness.
Section~\ref{sec:converse-nadler} develops the
new converse to Nadler's theorem with proofs in Section \ref{sec:appendix-converse-nadler}, while Section \ref{sec:appendix-rpcd} illustrate an example of how the new converse can be applied to analyze the convergence of algorithms. Section~\ref{sec:large-margin-triplet-nadler} introduces the large-margin triplet stationarity problem and the proof of its \textsf{CLS}-membership (and \textsf{CLS}-completeness) is provided in Section \ref{sec:appendix-triplet-loss}. The additional background  for proofs in Section \ref{sec:appendix-rpcd}, \ref{sec:appendix-nadler-complexity} and \ref{sec:appendix-converse-nadler} is introduced in Section \ref{sec:topological-preliminaries}.

\paragraph{AI disclosure.}
We used generative-AI systems  (ChatGPT 5.6 sol, ChatGPT 6-Astra and Fable 5) for language editing, literature navigation for Section \ref{sec:nadler-continuous-local-opt}, \ref{subsec:fryszkowski}, \ref{sec:large-margin-triplet-nadler}, and \ref{sec:appendix-rpcd},
and proof auditing. The authors verified and take responsibility for every
mathematical statement, proof, and citation.


\section{Preliminaries}
\label{sec:preliminaries}

\paragraph{Notation and computational model.}
For a positive integer $n$, let $[n]=\{1,\ldots,n\}$. We write
$\mathbb{R}_{\geq 0}$ for the nonnegative reals, $\langle\cdot,\cdot\rangle$
for the Euclidean inner product, and $\|\cdot\|_p$ for the standard
$\ell_p$ norm. For a metric $d$ and a set $A$, we write
$\operatorname{diam}_d(A)=\sup_{x,y\in A}d(x,y)$. Unless stated otherwise,
all numerical parameters are rational and represented in binary.

Functions and metrics are represented by polynomial-size arithmetic circuits
over rational constants and the operations
$\{+,-,\times,\min,\max,>\}$. The comparison gate returns one or zero
according to the outcome of the comparison. Structural properties such as
metricity and completeness are
treated as promises, while the regularity properties needed by the reductions
admit explicit violation outputs.

We use polynomial-time many-one reductions between total search problems. Thus,
a reduction maps each source instance to a target instance in polynomial time
and provides a polynomial-time decoder that maps every valid solution of the
target instance to a valid solution of the source instance. We write
$\mathcal A\leq_p\mathcal B$ for such a reduction and
$\mathcal A\equiv_p\mathcal B$ when reductions hold in both directions.


\paragraph{Metric spaces.}
A metric on a set $X$ is a function $d\colon X\times X\to\mathbb{R}_{\geq 0}$ such that, for all $x,y,z\in X$, $d(x,y)=0$ if and only if $x=y$, $d(x,y)=d(y,x)$ (symmetry), and $d(x,z)\leq d(x,y)+d(y,z)$ (triangle inequality). A metric is called an ultrametric if it admits a stronger triangle inequality: $d(x,z)\leq \max\{d(x,y),d(y,z)\}$. A sequence $(x_t)$ is
$d$-Cauchy if, for every $\varepsilon>0$, all sufficiently late pairs satisfy
$d(x_s,x_t)<\varepsilon$. The metric space $(X,d)$ is complete if every
$d$-Cauchy sequence converges to a point of $X$.


\paragraph{Correspondences and Hausdorff distance.}
A correspondence, or set-valued map, from $X$ to $Y$ is a map
$F\colon X\rightrightarrows Y$ assigning a subset $F(x)\subseteq Y$ to every
$x\in X$. We write $\mathrm{CB}(X)$ for the family of nonempty closed and
bounded subsets of $X$, and $\mathcal K(X)$ for the family of nonempty compact
subsets of $X$.

For a nonempty set $A\subseteq X$, the point-to-set distance is
$\operatorname{dist}_d(z,A)=\inf_{a\in A}d(z,a)$. For nonempty closed and
bounded sets $A,B\subseteq X$, the Hausdorff distance induced by $d$ is
\[
H_d(A,B)
=
\max\left\{
\sup_{a\in A}\operatorname{dist}_d(a,B),
\sup_{b\in B}\operatorname{dist}_d(b,A)
\right\}.
\]

We also write $d_H$ when the underlying metric is clear. A correspondence
$F\colon X\to\mathcal K(X)$ is Hausdorff-continuous if
$d(x_t,x)\to 0$ implies $H_d(F(x_t),F(x))\to 0$. It is Hausdorff
$L$-Lipschitz if $H_d(F(x),F(y))\leq Ld(x,y)$ for all $x,y\in X$.

A point $x^*$ is a fixed point of $F$ if $x^*\in F(x^*)$. It is
an endpoint if $F(x^*)=\{x^*\}$. Every endpoint is a fixed point, but
a fixed point need not be an endpoint.

\paragraph{Contractions and Nadler's theorem.}
A correspondence $F\colon X\rightrightarrows X$ is a Hausdorff
$k$-contraction if $k\in[0,1)$ and
\[
H_d(F(x),F(y))
\leq
k\,d(x,y)
\qquad
\text{for all }x,y\in X.
\]

\begin{theorem}[Nadler's fixed-point theorem {\cite{Nadler1969}}]
Let $(X,d)$ be complete and let
$F\colon X\to\mathrm{CB}(X)$ be a Hausdorff $k$-contraction for some
$k<1$. Then $F$ has a fixed point.
\end{theorem}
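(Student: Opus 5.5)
The plan is Nadler's original iterative argument: build a Cauchy sequence of iterates with geometrically shrinking steps, pass to the limit using completeness, and show the limit lies in its own image. The only subtlety is that $F(x)$ need not be compact, so nearest points may not exist. We therefore use an approximate selection with a slack that decays geometrically.

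\textbf{Construction of the sequence.} Fix $x_0\in X$ and choose any $x_1\in F(x_0)$. If $d(x_0,x_1)=0$, then $x_0$ is already a fixed point, so assume otherwise. Fix $q$ with $k<q<1$; if $k=0$, any $q\in(0,1)$ works after a trivial adjustment. Inductively, given $x_{n-1}$ and $x_n\in F(x_{n-1})$, we have
\[
\operatorname{dist}_d\bigl(x_n,F(x_n)\bigr)\le H_d\bigl(F(x_{n-1}),F(x_n)\bigr)\le k\,d(x_{n-1},x_n).
\]
By the definition of the infimum, we can then pick $x_{n+1}\in F(x_n)$ with
\[
d(x_n,x_{n+1})\le q\,d(x_{n-1},x_n).
\]
This strict inequality is available whenever $d(x_{n-1},x_n)>0$. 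If that distance is $0$, then $x_{n-1}=x_n\in F(x_{n-1})$ and we stop with a fixed point. The choice $q>k$ is exactly what absorbs the failure of the infimum to be attained.

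\textbf{Cauchy property and limit.} The recursion gives $d(x_n,x_{n+1})\le q^n d(x_0,x_1)$. For $m>n$, the triangle inequality then yields
\[
d(x_n,x_m)\le \frac{q^n}{1-q}\,d(x_0,x_1),
\]
so $(x_n)$ is Cauchy. By completeness it converges to some $x^*\in X$.

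\textbf{The limit is a fixed point.} Since $x_{n+1}\in F(x_n)$, we get
\[
\operatorname{dist}_d\bigl(x_{n+1},F(x^*)\bigr)\le H_d\bigl(F(x_n),F(x^*)\bigr)\le k\,d(x_n,x^*)\to 0.
\]
Because $z\mapsto\operatorname{dist}_d(z,A)$ is $1$-Lipschitz, it follows that
\[
\operatorname{dist}_d\bigl(x^*,F(x^*)\bigr)\le d(x^*,x_{n+1})+k\,d(x_n,x^*)\to 0.
\]
Since $F(x^*)$ is closed, distance zero means $x^*\in F(x^*)$. Boundedness of the fibers is used only so that $H_d$ is finite and the contraction inequality makes sense.

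The main obstacle is the selection step. Without compactness one cannot take exact nearest points, and the argument must avoid any chain of choices whose constants accumulate. Using $q\in(k,1)$ settles this, because it keeps the ratio of consecutive steps uniformly below $1$. An additive slack $\epsilon_n$ with $\sum\epsilon_n<\infty$ would be an alternative.
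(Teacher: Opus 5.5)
Your proof is correct. The paper gives no proof of this statement; it quotes it as background from \cite{Nadler1969}. Your argument is essentially Nadler's original one: build iterates by approximate nearest-point selection, show they are Cauchy, and pass to the limit using $\operatorname{dist}_d(x_{n+1},F(x^*))\le k\,d(x_n,x^*)$ and the closedness of $F(x^*)$.

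The only real difference is how you absorb the non-attained infimum. Nadler uses an additive slack: he picks $x_{n+1}\in F(x_n)$ with $d(x_n,x_{n+1})\le H_d(F(x_{n-1}),F(x_n))+k^n$. This gives the summable bound $d(x_n,x_{n+1})\le k^n d(x_0,x_1)+n k^n$ and needs no case split. Your multiplicative slack $q\in(k,1)$ gives the cleaner geometric bound $d(x_n,x_{n+1})\le q^n d(x_0,x_1)$. In exchange, it forces you to treat $d(x_{n-1},x_n)=0$ separately, which you do correctly.

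Your caveat about $k=0$ is unnecessary. Any $q\in(0,1)$ already satisfies $q>k$, so no adjustment is needed.
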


Unlike the single-valued Banach theorem, Nadler's theorem does not in general
guarantee uniqueness. This distinction motivates the endpoint assumption in
the converse theorem of Section~\ref{sec:converse-nadler}.

\paragraph{Basic iteration procedure for correspondences.}
For a continuous self-map $f\colon X\to X$, the iteration
$x_{t+1}=f(x_t)$ generates a unique orbit, and any convergent orbit has a
fixed-point limit. For a correspondence $F\colon X\rightrightarrows X$, each
step instead permits an arbitrary choice $x_{t+1}\in F(x_t)$. Hence, an initial
point generates a family of admissible trajectories, whose exact-time
reachable sets are defined by $F^{[0]}(x)=\{x\}$ and
$F^{[t+1]}(x)=\bigcup_{y\in F^{[t]}(x)}F(y)$.

If $F$ has a closed graph, the limit of any convergent admissible trajectory
is a fixed point, but this provides no control over the other branches. The
natural notion of global convergence therefore requires
$H_d(F^{[t]}(x),\{x^*\})\to 0$ for every initial point $x$. Moreover,
$x^*\in F(x^*)$ does not make $x^*$ absorbing; this requires the stronger
condition $F(x^*)=\{x^*\}$. A selection such as
$P(x)=\Pi_F(x,x)$ restores a deterministic iteration, but describes only one
branch rather than the dynamics of the full correspondence.

\paragraph{Smoothness, projections, and normal cones.}
Let $K\subseteq\mathbb{R}^q$ be nonempty, closed, and convex. The Euclidean
projection onto $K$ is
$\Pi_K(z)=\arg\min_{u\in K}\|z-u\|_2$. It is single-valued and nonexpansive.

A differentiable function $\mathcal L\colon K\to\mathbb{R}$ is
$\Lambda$-smooth if
$\|\nabla\mathcal L(x)-\nabla\mathcal L(y)\|_2
\leq\Lambda\|x-y\|_2$ for all $x,y\in K$. Its projected-gradient update with
step size $\varepsilon>0$ is
$T_\varepsilon(z)=\Pi_K(z-\varepsilon\nabla\mathcal L(z))$.

The normal cone of $K$ at $z\in K$ is

\[
N_K(z)
=
\left\{
n\in\mathbb{R}^q:
\langle n,u-z\rangle\leq 0
\text{ for every }u\in K
\right\}.
\]

Projection optimality gives
$y=\Pi_K(x)$ if and only if $x-y\in N_K(y)$.

\paragraph{First-order stationarity.}
For a differentiable objective $\mathcal L$ over $K$, a point $z\in K$ is an
$\varepsilon$-first-order stationary point, abbreviated
$\varepsilon$-FOSP, if

\[
\langle u-z,\nabla\mathcal L(z)\rangle
\geq
-\varepsilon
\qquad
\text{for every }u\in K.
\]

For $\varepsilon=0$, this condition is equivalent to
$0\in\nabla\mathcal L(z)+N_K(z)$. In particular, a fixed point of the
projected-gradient map
$z=\Pi_K(z-\varepsilon\nabla\mathcal L(z))$ is an exact first-order stationary point.

\section{Nadler's Problem and Its Complexity}
\label{sec:nadler-continuous-local-opt}

Nadler's fixed-point theorem extends Banach's contraction principle to
set-valued maps: a Hausdorff contraction with nonempty closed and bounded
values on a complete metric space has a fixed point~\cite{Nadler1969}. In
this section, we formulate a computational version of this theorem and
characterize its complexity. Our formulation represents each value of the
correspondence by an exact projection circuit. This choice parallels the
projection-based formulation of computational Kakutani fixed
points~\cite{PapadimitriouVlatakisGkaragkounisZampetakis2023}, but
replaces the Hausdorff-Lipschitz condition by the strict contraction condition required by
Nadler's theorem.

We intentionally omit the small-volume output used in weak-separation-oracle
formulations of Kakutani \cite{PapadimitriouVlatakisGkaragkounisZampetakis2023}. Indeed, for a singleton correspondence
$F(x)=\{f(x)\}$, one has $\operatorname{vol}(F(x))=0$ at every point. Such an
output would therefore solve every singleton instance trivially and would
invalidate the singleton construction used in the hardness reduction below.

Our main result establishes a polynomial-time equivalence between the two
search problems defined below. The forward reduction uses the fixed-point
residual as a local-search potential, whereas the reverse reduction encodes
the local-search potential into a complete ultrametric.




We now introduce the computational formulation of Nadler's theorem. We take $X=[0,1]^3$ to match the standard formulation of
\textsc{Continuous-LocalOpt}. This choice is inessential: the usual variants
over rational polytopes and polynomial-dimensional boxes will not affect our argument.

\begin{definition}[$\beta$-\textsc{Projected-Nadler}]
\label{def:projected-nadler}
An instance consists of well-behaved circuits $\Pi,d$ where $\Pi(x,y) \in F(x)$ and $d: X \times X \rightarrow [0,D]$, and rational parameters
$\varepsilon_1,\varepsilon_2,L_P,L_d,D>0,$ and $k \in (0,1]$. We promise that $0 < \varepsilon_2 < \varepsilon_1(1-k)$ and $(X, d)$ is a complete metric space. 
Set $P(x)=\Pi(x,x)$ and $r(x)=d(x,P(x))$. A valid output is one of the following:
\begin{enumerate}
\item[\textnormal{(N1)}] A point $x\in X$ such that
$r(x) \leq\varepsilon_1$.

\item[\textnormal{(N2)}] Points $x,y,z\in X$ such that, for
$u=\Pi(x,z), v=\Pi(y,u)$ or $u=\Pi(y,z), v=\Pi(x,u)$,
\begin{equation}
 d(u,v)>k d(x,y)+\varepsilon_2.
\label{eq:pn-buffered-output}
\end{equation}

\item[\textnormal{(N3)}] Points $x,y\in X$ such that
$\|P(x)-P(y)\|_1>L_P\|x-y\|_1^\beta$.

\item[\textnormal{(N4)}] Points $a,b,a',b'\in X$, with $a\neq b$ and $a'\neq b'$, such that
\begin{equation}
|d(a,b)-d(a',b')|
>L_d\bigl(\|a-a'\|_1+\|b-b'\|_1\bigr).
\label{eq:pn-distance-violation}
\end{equation}
\end{enumerate}
\end{definition}

We do not promise that the arithmetic circuits in the input produce exact outputs, as exact outputs (e.g., exact Euclidean projections) can be irrational, which cannot be handled by our arithmetic circuits. 

The witness in output \textnormal{(N2)} is directly verifiable in
polynomial time by evaluating the well-behaved circuits $\Pi$ and $d$
on the rational witness points. To explain the geometric meaning of this
certificate, let $\rho$ be an underlying metric, and suppose that
$F$ has nonempty $\rho$-closed and $\rho$-bounded values. Assume that
$|d(a,b)-\rho(a,b)|\leq \nu$ for all $a,b\in X$, and that every
response $q=\Pi(w,t)$ satisfies
\begin{align*}
    \operatorname{dist}_{\rho}(q,F(w)) \leq \tau,
    \qquad
    \rho(t,q) \leq
    \operatorname{dist}_{\rho}(t,F(w))+\eta.
\end{align*}
For $u=\Pi(x,z)$ and $v=\Pi(y,u)$, the triangle inequality and these
accuracy guarantees give
\begin{align*}
    H_{\rho}\bigl(F(x),F(y)\bigr)
    &\geq \operatorname{dist}_{\rho}\bigl(u,F(y)\bigr)-\tau\\
    &\geq \rho(u,v)-\tau-\eta\\
    &\geq d(u,v)-\tau-\eta-\nu\\
    &> k\,\rho(x,y)
       +\varepsilon_2-\tau-\eta-(1+k)\nu.
\end{align*}
Consequently, whenever
$\tau+\eta+(1+k)\nu\leq\varepsilon_2$,
an output of type \textnormal{(N2)} certifies a genuine violation of
Hausdorff contraction. Exact metric evaluation and feasible responses
correspond to $\rho=d$ and $\tau=\nu=0$; exact projection additionally
gives $\eta=0$. In general, however, approximate responses need not
satisfy $\Pi(x,u)=u$ for $u\in F(x)$, and the positive tolerance
$\varepsilon_2$ means that this witness format is not asserted to capture
every strict Hausdorff-contraction violation.

Metricity, completeness, and the relevant accuracy
guarantees are geometric promises, not properties certified by the
output predicates. Outputs \textnormal{(N3)} and \textnormal{(N4)}
expose failures of the claimed $\beta$-H\"older regularity of $P$ and
off-diagonal Lipschitz regularity of $d$, respectively. 

The condition $\varepsilon_2<(1-k)\varepsilon_1$ ensures that
$\Delta=(1-k)\varepsilon_1-\varepsilon_2$ is positive. We use $P$
as the local-search update, $r/D$ as the potential, and $\Delta/D$
as the tolerance. Every solution of this local-search instance can
then be converted in polynomial time into a valid output of type
\textnormal{(N1)}--\textnormal{(N4)}. The argument uses the enforced
bounds $P(X)\subseteq X$ and $0\le d(x,y)\le D$, together with
$d(x,x)=0$. It does not otherwise require $d$ to be a metric,
the space to be complete, or the projection responses to be accurate.
These additional assumptions are needed only to interpret the
outputs geometrically.


In order to show the complexity for $\beta$-\textsc{Projected-Nadler}, we introduce the following standard complete problem for the class
$\CLS$~\cite{DaskalakisTzamosZampetakis2018}.

\begin{definition}[\textsc{Continuous-LocalOpt}]
\label{def:continuous-local-opt}
An instance consists of arithmetic circuits
$f\colon[0,1]^3\to[0,1]^3$ and $p\colon[0,1]^3\to[0,1]$, together with
positive rational parameters $\delta$ and $\lambda$. A valid output is one of the following.
\begin{enumerate}
    \item[\textnormal{(CO1)}] A point $x\in[0,1]^3$ satisfying
    $p(f(x))\geq p(x)-\delta$.

    \item[\textnormal{(CO2)}] Points $x,y\in[0,1]^3$ satisfying
    $\|f(x)-f(y)\|_1>\lambda\|x-y\|_1$.

    \item[\textnormal{(CO3)}] Points $x,y\in[0,1]^3$ satisfying
    $|p(x)-p(y)|>\lambda\|x-y\|_1$.
\end{enumerate}
\end{definition}

The class $\CLS$ consists of the total search problems that admit a
polynomial-time many-one reduction to \textsc{Continuous-LocalOpt}. As discussed in \cite{DaskalakisPapadimitriou2011}, the choice of $[0,1]^3$ and $\ell_1$ norms is also inessential, and high-dimensional polytopes as well as other $\ell_p$ norms can also be used in the definition without any crucial effect on the complexity. 

One may observe that the $\beta$-H\"older continuity condition (N4) in $\beta$-\textsc{Projected-Nadler} could fail to recover the Lipschitz continuity conditions (CO2) and (CO3) in \textsc{Continous-LocalOpt} in the reduction. Therefore, we define a problem called  $\beta$-\textsc{H\"older-LocalOpt}, which replaced the Lipschitz continuity conditions in \textsc{Continuous-LocalOpt} by H\"older continuity conditions (e.g. $\|f(x)-f(y)\|_1>\lambda\|x-y\|^\beta_1$). We show that $\beta$-\textsc{H\"older-LocalOpt} is \textsf{CLS}-complete for any fixed rational $\beta \in (0,1]$.

Together, by reduction from \textsc{Continuous-LocalOpt} and reduction to $\beta$-\textsc{H\"older-LocalOpt}, we can prove the following theorem that establish the complexity of $\beta$-\textsc{Projected-Nadler}.

\begin{theorem}
\label{thm:nadler-clo-equivalence}
The problem $\beta$-\textsc{Projected-Nadler} is $\CLS$-complete for any fixed rational $\beta \in (0,1]$.
\end{theorem}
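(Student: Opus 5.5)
The proof splits into membership and hardness, with the H\"older-to-Lipschitz bridge as the main technical step.

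\textbf{Membership.} I reduce $\beta$-\textsc{Projected-Nadler} to $\beta$-\textsc{H\"older-LocalOpt}, and then $\beta$-\textsc{H\"older-LocalOpt} to \textsc{Continuous-LocalOpt}. The first reduction follows the construction sketched after Definition~\ref{def:projected-nadler}. The update map is $f=P$ and the potential is $p=r/D\in[0,1]$. The tolerance is $\delta=\Delta/D$, where $\Delta=(1-k)\varepsilon_1-\varepsilon_2$. The H\"older constant $\lambda$ is polynomial in $L_P,L_d,D$.

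Suppose (CO1) returns $x$ with $r(P(x))\ge r(x)-\Delta$. If $r(x)\le\varepsilon_1$, we output (N1). Otherwise, take $y=P(x)$ and $z=x$, so that $u=\Pi(x,x)=P(x)=y$ and $v=\Pi(y,y)=P(y)$. Then
\[
d(u,v)=r(P(x))\ge r(x)-\Delta > k\,r(x)+\varepsilon_2 = k\,d(x,y)+\varepsilon_2,
\]
where the strict inequality uses $r(x)>\varepsilon_1$. This is an (N2) witness.

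A H\"older violation of $f$ is directly an (N3) witness. For a H\"older violation of $p$, write
\[
|r(x)-r(y)|=|d(x,P(x))-d(y,P(y))|\le L_d\bigl(\|x-y\|_1+L_P\|x-y\|_1^\beta\bigr).
\]
This is bounded by $\lambda\|x-y\|_1^\beta$ on the bounded box. So a violation of $p$ must expose either an (N3) failure at $(x,y)$ or an (N4) failure at $(a,b,a',b')=(x,P(x),y,P(y))$. The degenerate case $x=P(x)$ gives $r(x)=0$, which is already an (N1) solution, so the side condition $a\neq b$ is harmless.

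\textbf{H\"older-LocalOpt to \textsc{Continuous-LocalOpt}.} I would use Kuhn-simplicial interpolation on a grid of mesh $h=1/N$. The piecewise-linear interpolants $\tilde f,\tilde p$ are evaluated from the $n+1$ vertices of the simplex containing $x$, found by sorting coordinates. They are Lipschitz with constant about $\lambda h^{\beta-1}$ times the dimension. Moreover, $\tilde f$ and $\tilde p$ stay within about $\lambda h^\beta$ of $f$ and $p$ wherever the H\"older bounds hold between grid neighbours.

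I choose $h$ so that $\lambda h^\beta\ll\delta$, and set the new tolerance to $\delta/2$. Decoding works as follows. A Lipschitz violation of $\tilde f$ or $\tilde p$ must come from some pair of adjacent simplex vertices violating the H\"older bound, and we locate that pair. An approximate local optimum $x$ of $(\tilde f,\tilde p)$ either yields $p(f(x))\ge p(x)-\delta$, or again exposes a H\"older violation near $x$ or near $f(x)$. Because $\beta$ is fixed, $h^{-1}$ is polynomial in $\lambda/\delta$, so all numbers stay polynomial-size.

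\textbf{Hardness.} I reduce \textsc{Continuous-LocalOpt} to $\beta$-\textsc{Projected-Nadler} via the singleton correspondence $\Pi(x,z)=f(x)$ and a potential ultrametric. Roughly, $d(a,b)=\max\{\tilde p(a),\tilde p(b)\}$ for $a\neq b$ and $d(a,a)=0$, where $\tilde p=p+c$ is shifted to be bounded below by $c>0$. This is an ultrametric, and it is complete because Cauchy sequences are eventually constant. The smoothing is needed because this metric is discontinuous on the diagonal, which is why (N4) only tests pairs with $a\neq b$. Off the diagonal, $d$ is Lipschitz with $L_d=\lambda$, so any (N4) witness yields (CO3).

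Here $r(x)=\max\{\tilde p(x),\tilde p(f(x))\}$ whenever $f(x)\neq x$. An exact fixed point $f(x)=x$ trivially gives (CO1), so assume $f(x)\neq x$.
\begin{itemize}
\item For (N1), choose $\varepsilon_1<c$, so that (N1) is impossible unless $x$ is a fixed point.
\item (N3) with $\beta$-H\"older exponent is implied by $\lambda$-Lipschitzness on the box, so a violation gives (CO2).
\item (N2) reduces to $u=f(x)$, $v=f(y)$ and says $\max\{\tilde p(f(x)),\tilde p(f(y))\}>k\max\{\tilde p(x),\tilde p(y)\}+\varepsilon_2$. Taking the larger side, say $\tilde p(f(x))\ge\tilde p(f(y))$, we get $\tilde p(f(x))>k\,\tilde p(x)+\varepsilon_2$. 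With $k$ close to $1$, namely $1-k\approx\delta/2$, and $\varepsilon_2$ tiny, this gives $p(f(x))\ge p(x)-\delta$, which is (CO1).
\end{itemize}
The remaining parameter constraint $\varepsilon_2<\varepsilon_1(1-k)$ is met by taking $\varepsilon_2$ small enough.

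I expect the main obstacle to be the interpolation step. It requires matching the H\"older precision loss against $\delta$ while keeping the decoding of Lipschitz violations back to H\"older violations local and polynomial-time.
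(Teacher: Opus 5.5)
Your decomposition matches the paper's. For membership you route through $\beta$-\textsc{H\"older-LocalOpt} with $f=P$, $p=r/D$, $\delta=\Delta/D$. You use the same (N1)/(N2) decoding via $y=P(x)$, $z=x$, and obtain (N3)/(N4) from a H\"older violation of $r/D$. For hardness you use the singleton correspondence $\Pi(x,z)=f(x)$ with a shifted potential ultrametric; the paper uses shift $1$. Your (N2) decoding via the larger of $\tilde p(f(x)),\tilde p(f(y))$ is a slightly more direct form of the paper's contradiction argument. These parts are sound, provided you take $L_P\ge 3^{1-\beta}\lambda$ (since $\|x-y\|_1$ can reach $3$) and handle $\delta\ge 2$ separately so that $k\in(0,1)$.

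The H\"older-to-Lipschitz step has two problems.

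(i) The mesh condition $\lambda h^\beta\ll\delta$ is too weak. At a returned point $x$ with $z=\tilde f(x)$, the decoder must run the paper's four checks:
\begin{itemize}
\item $f$ and $p$ between $x$ and its own simplex vertices;
\item $p$ between $z$ and its simplex vertices;
\item $p$ between $f(x)$ and $z$.
\end{itemize}
H\"older bounds between grid neighbours alone say nothing about $g(x)-I_hg(x)$ at a non-grid point. Even when all four checks pass, you only get $\|f(x)-z\|_1\le\lambda(nh)^\beta$, hence $|p(f(x))-p(z)|\le\lambda^{1+\beta}(nh)^{\beta^2}$. For $\beta=1/2$, your choice $h\approx\delta^2$ leaves this error of order $\sqrt\delta\gg\delta$. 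So every check can pass while $p(f(x))<p(x)-\delta$, and your decoder has no valid output. The paper's mesh satisfies $3\bar\lambda^2(nh)^{\beta^2}\le\delta/2$, which still has polynomial bit length for fixed $\beta$.

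(ii) Setting the Lipschitz parameter near $\lambda h^{\beta-1}$ obliges you to convert an arbitrary (CO2)/(CO3) pair $(x,y)$ into an adjacent-vertex H\"older violation. ``We locate that pair'' is not routine. Since $\lambda$ and $\delta$ are binary-encoded, $1/h$ can be exponential in the input length, so the segment $[x,y]$ may cross exponentially many simplices. You would need a bisection argument that keeps a violating half, then splits a short segment at its polynomially many simplex-boundary crossings. The paper avoids this entirely. Because $f$ and $p$ take values in $[0,1]^r$, the interpolant is $(r/h)$-Lipschitz unconditionally (Lemma~\ref{lem:pn-kuhn}(ii)). With Lipschitz parameter $n/h$, outputs (CO2)/(CO3) cannot occur, and only (CO1) needs decoding.
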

Our hardness reductions admit similar proof ideas to the reductions for \textsc{Banach} in \cite{DaskalakisTzamosZampetakis2018}, except for some modifications accommodating the different problem formulations. We leave the details and proofs in this section to Appendix \ref{sec:appendix-nadler-complexity}. 

\section{Converse Nadler's Fixed Point Theorem}
\label{sec:converse-nadler}
We start with a survey of the known mathematical results that converse Nadler's theorem in Section \ref{subsec:fryszkowski}. We
also explain why these converses are not enough to prove that Nadler’s fixed point theorem is a universal tool for analyzing the convergence of iterative algorithms. Then, in Section \ref{subsec:daskalakis-nadler-converse}, we prove a stronger converse theorem of Nadler’s fixed point theorem for the analysis of iterative algorithms.

\subsection{Fryszkowski's problem and set-theoretic remetrization}
\label{subsec:fryszkowski}
Classical converses to Banach's contraction principle ask when a self-map can
be made contractive by replacing the ambient metric. Fryszkowski proposed the
corresponding remetrization problem for set-valued maps, as recorded by
Jachymski \cite{Jachymski2000}. For a nonempty set $\Omega$, write
$2^\Omega_\ast:=2^\Omega\setminus\{\emptyset\}$. Given a correspondence
$F:\Omega\to 2^\Omega_\ast$, define its induced action on nonempty subsets by
$\widehat F(A):=\bigcup_{x\in A}F(x)$, and write $\widehat F^{\,n}$ for its
$n$-fold iterate.
\begin{definition}[Fryszkowski's problem]
Given $\alpha\in(0,1)$, a nonempty set $\Omega$, and a correspondence
$F:\Omega\to 2^\Omega_\ast$, determine necessary and/or sufficient conditions
for the existence of a complete metric $d$ on $\Omega$ under which $F$ is a
Nadler $\alpha$-contraction:
\begin{equation}
    H_d(F(x),F(y))
    \leq
    \alpha d(x,y)
    \qquad\text{for all }x,y\in\Omega.
    \label{eq:fryszkowski-target}
\end{equation}    
\end{definition}
Here $H_d$ denotes the generalized Hausdorff distance, which may take
the value $+\infty$ on arbitrary subsets. This is a set-theoretic
remetrization problem: no metric or topology on $\Omega$ is fixed in advance,
and the goal is to identify conditions on the dynamics of $F$ that permit
some complete contracting metric.

The results below concern correspondences with a distinguished
\emph{endpoint} $z$, meaning that $F(z)=\{z\}$. This is stronger than the
usual fixed-point condition $z\in F(z)$. The stronger condition is natural in
a converse theorem that requires every branch of the correspondence to move
toward the same terminal state: once the process reaches $z$, no admissible
update may leave it.

To answer Fryszkowski's problem, Andrei Com\u{a}neci developed a
Lyapunov-type formulation that serves as the starting point for his partial
answers to this problem \cite{Comaneci2017}. The following characterization
applies to every $\alpha\in(0,1)$ and to complete metrics that are not
required to be bounded. It replaces the search for a metric by the search for
a scalar rank function that vanishes exactly at the endpoint and decreases
geometrically along every admissible successor.

\begin{proposition}[Com\u{a}neci's rank-function characterization
    \cite{Comaneci2017}]
\label{prop:comaneci-rank}
Let $\alpha\in(0,1)$, let $\Omega$ be nonempty, and suppose that
$F:\Omega\to 2^\Omega_\ast$ has an endpoint $z$. The following statements are
equivalent.
\begin{enumerate}
    \item There exists a complete metric $d$ on $\Omega$ such that $F$ is a
    Hausdorff $\alpha$-contraction with respect to $d$.

    \item There exists a function $\phi:\Omega\to[0,\infty)$ such that
    $\phi^{-1}(0)=\{z\}$ and
    \begin{equation}
        \sup_{u\in F(x)}\phi(u)
        \leq
        \alpha\phi(x)
        \qquad\text{for every }x\in\Omega.
        \label{eq:comaneci-rank}
    \end{equation}
\end{enumerate}
\end{proposition}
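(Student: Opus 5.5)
The two implications are proved separately, and neither requires a fixed-point argument: the endpoint condition $F(z)=\{z\}$ does essentially all of the work. For (1)$\Rightarrow$(2) I will take the rank function to be the distance to the endpoint. For (2)$\Rightarrow$(1) I will build an explicit ultrametric from $\phi$, in the same spirit as the potential ultrametric used for the hardness direction of Theorem~\ref{thm:nadler-clo-equivalence}.

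\textbf{(1)$\Rightarrow$(2).} Let $d$ be a complete metric for which $F$ is a Hausdorff $\alpha$-contraction, and set $\phi(x)=d(x,z)$. Then $\phi^{-1}(0)=\{z\}$ because $d$ is a metric. Fix $x\in\Omega$ and $u\in F(x)$. Since $F(z)=\{z\}$, we have $d(u,z)=\operatorname{dist}_d(u,F(z))$, and therefore
\[
\phi(u)=\operatorname{dist}_d(u,F(z))\leq H_d(F(x),F(z))\leq \alpha\, d(x,z)=\alpha\phi(x).
\]
Taking the supremum over $u\in F(x)$ gives \eqref{eq:comaneci-rank}. The generalized Hausdorff distance is finite here because the contraction inequality bounds it by $\alpha d(x,z)$. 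Completeness of $d$ is not needed in this direction.

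\textbf{(2)$\Rightarrow$(1).} Given $\phi$, define $d(x,x)=0$ and $d(x,y)=\max\{\phi(x),\phi(y)\}$ for $x\neq y$.

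\emph{$d$ is a metric.} Positivity for $x\neq y$ holds because at most one of $x,y$ equals $z$, so at least one of $\phi(x),\phi(y)$ is positive. Symmetry is immediate. For the strong triangle inequality, suppose $x\neq w$ and let $y$ be arbitrary. If $y=x$ or $y=w$ the inequality is trivial. Otherwise
\[
d(x,w)\leq\max\{\phi(x),\phi(y),\phi(w)\}=\max\{d(x,y),d(y,w)\},
\]
so $d$ is an ultrametric.

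\emph{$d$ is complete.} Let $(x_n)$ be $d$-Cauchy. If it is eventually constant, it converges. Otherwise, for every large $n$ there is a later $m$ with $x_m\neq x_n$, which gives $\phi(x_n)\leq d(x_n,x_m)$. The Cauchy property then forces $\phi(x_n)\to0$. Consequently $d(x_n,z)\leq\phi(x_n)\to 0$, because $\phi(z)=0$ and so $d(x_n,z)=\phi(x_n)$ whenever $x_n\neq z$. Hence $x_n\to z$.

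\emph{$F$ is a Hausdorff $\alpha$-contraction.} If $x=y$, both sides of \eqref{eq:fryszkowski-target} are zero. For $x\neq y$, take any $u\in F(x)$ and any $v\in F(y)$. Using \eqref{eq:comaneci-rank},
\[
d(u,v)\leq\max\{\phi(u),\phi(v)\}\leq\alpha\max\{\phi(x),\phi(y)\}=\alpha\, d(x,y),
\]
and this bound also holds trivially when $u=v$. Hence $\operatorname{dist}_d(u,F(y))\leq\alpha d(x,y)$, and symmetrically with $x$ and $y$ interchanged. This yields $H_d(F(x),F(y))\leq\alpha d(x,y)$.

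\textbf{Main obstacle.} The only delicate step is completeness in (2)$\Rightarrow$(1). It works because any non-eventually-constant Cauchy sequence is forced down to the level set $\phi^{-1}(0)=\{z\}$, and the condition $\phi^{-1}(0)=\{z\}$ is used both there and for positivity of $d$. Everything else reduces to one-line inequalities.
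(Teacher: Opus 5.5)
Your proof is correct and complete. There is no in-paper argument to compare it against: the paper states this proposition only as background, citing Com\u{a}neci~\cite{Comaneci2017}, and never proves it.

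In (1)$\Rightarrow$(2), the choice $\phi=d(\cdot,z)$ together with $F(z)=\{z\}$ gives $\phi(u)=\operatorname{dist}_d(u,F(z))\le H_d(F(x),F(z))\le\alpha\phi(x)$. In (2)$\Rightarrow$(1), your max-ultrametric has three properties:
\begin{itemize}
\item it is positive off the diagonal, because $\phi$ vanishes only at $z$;
\item it is complete, because a Cauchy sequence that is not eventually constant has $\phi(x_n)\to0$ and hence converges to $z$;
\item for $x\neq y$ it satisfies the pointwise bound $d(u,v)\le\alpha d(x,y)$ for all $u\in F(x)$, $v\in F(y)$, which is stronger than the Hausdorff bound.
\end{itemize}
Incidentally, you do not use the endpoint hypothesis in this direction, and in fact (2) implies it: $\phi(z)=0$ forces $F(z)\subseteq\phi^{-1}(0)=\{z\}$.

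The construction in the paper closest to yours is the potential ultrametric $d_p(x,y)=1+\max\{p(x),p(y)\}$ of Lemmas~\ref{lem:clo-to-nadler} and~\ref{lem:triplet-potential-ultrametric}. There the potential only drops additively by $\delta$. The offset $1$ is what converts that additive drop into the multiplicative factor $1-\delta/2$, through the diameter bound $d_p\le2$. The same offset makes the space uniformly discrete, so completeness is automatic.

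Your hypothesis is multiplicative, $\sup_{u\in F(x)}\phi(u)\le\alpha\phi(x)$. This lets you drop the offset and obtain an exact $\alpha$-contraction. The price is the short completeness argument you supply, in which the limit of a non-eventually-constant Cauchy sequence is forced to be $z$. Replacing the maximum by $\phi(x)+\phi(y)$ would also work, yielding a complete metric that is in general not an ultrametric.
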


The supremum in \eqref{eq:comaneci-rank} is important: the rank must decrease
for every $u\in F(x)$, rather than only along one selected trajectory. Thus,
Com\u{a}neci's characterization isolates the dynamical content of the
remetrization problem in a single Lyapunov-type condition that controls all
branches of the set-valued iteration.

Com\u{a}neci then used this characterization to obtain the first of the two
partial answers highlighted in his paper: a criterion for the stronger case
in which the witnessing metric is required to be bounded. The relevant
condition is expressed in terms of the descending sequence of global images
\begin{equation}
    \Omega
    =
    \widehat F^{\,0}(\Omega)
    \supseteq
    \widehat F^{\,1}(\Omega)
    \supseteq
    \widehat F^{\,2}(\Omega)
    \supseteq
    \cdots.
    \label{eq:nested-global-images}
\end{equation}
The set $\widehat F^{\,n}(\Omega)$ contains all states that can be reached
after $n$ admissible updates from some initial state in $\Omega$. Hence, the
intersection of these sets consists of the states that remain globally
reachable at arbitrarily large depths.

Com\u{a}neci proved that, for $\alpha\in(0,1/2)$, collapse of these global
images to the endpoint is equivalent to the existence of both a bounded rank
function and a complete bounded contracting metric. Later, Luchian removed the restriction $\alpha<1/2$ and established the same characterization for the full contraction range $\alpha\in(0,1)$ \cite{Luchian2018}.

\begin{theorem}[Com\u{a}neci--Luchian bounded remetrization
    \cite{Comaneci2017,Luchian2018}]
\label{thm:comaneci-luchian}
Let $\alpha\in(0,1)$, let $\Omega$ be nonempty, and let
$F:\Omega\to 2^\Omega_\ast$ have an endpoint $z$. The following statements
are equivalent.
\begin{enumerate}
    \item The nested global images collapse to the endpoint:
    \begin{equation}
        \bigcap_{n\geq0}\widehat F^{\,n}(\Omega)
        =
        \{z\}.
        \label{eq:nested-images-collapse}
    \end{equation}

    \item There exists a bounded function $\phi:\Omega\to[0,\infty)$ such
    that $\phi^{-1}(0)=\{z\}$ and
    \begin{equation}
        \sup_{u\in F(x)}\phi(u)
        \leq
        \alpha\phi(x)
        \qquad\text{for every }x\in\Omega.
        \label{eq:bounded-comaneci-rank}
    \end{equation}

    \item There exists a complete bounded metric $d$ on $\Omega$ such that
    $F$ is a Hausdorff $\alpha$-contraction with respect to $d$.
\end{enumerate}
Com\u{a}neci proved the equivalence for $\alpha\in(0,1/2)$, and Luchian
extended it to every $\alpha\in(0,1)$.
\end{theorem}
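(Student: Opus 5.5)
The plan is to prove the cycle (1)$\Rightarrow$(2)$\Rightarrow$(3)$\Rightarrow$(1). I will build the rank from the depth of a point in the nested global images \eqref{eq:nested-global-images}, and the metric from the rank by an ultrametric construction. This construction works uniformly in $\alpha\in(0,1)$, so no case split at $\alpha=1/2$ should be needed. Proposition~\ref{prop:comaneci-rank} is not strictly required, but the argument is its bounded analogue.

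\emph{(1)$\Rightarrow$(2).} Since $F(z)=\{z\}$, we have $z\in\widehat F^{\,n}(\Omega)$ for every $n$. For $x\neq z$, hypothesis \eqref{eq:nested-images-collapse} and the nesting give a largest integer $n(x)\ge 0$ with $x\in\widehat F^{\,n(x)}(\Omega)$. Put $\phi(z)=0$ and $\phi(x)=\alpha^{n(x)}$ for $x\ne z$, so that $\phi\le 1$ is bounded and $\phi^{-1}(0)=\{z\}$.

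If $u\in F(x)$ and $x\in\widehat F^{\,n(x)}(\Omega)$, then $u\in\widehat F^{\,n(x)+1}(\Omega)$. Hence either $u=z$, or $n(u)\ge n(x)+1$ and so $\phi(u)\le\alpha\phi(x)$. The case $x=z$ is trivial because then $F(x)=\{z\}$. Taking the supremum over $u\in F(x)$ gives \eqref{eq:bounded-comaneci-rank}.

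\emph{(2)$\Rightarrow$(3).} Define $d(x,y)=\max\{\phi(x),\phi(y)\}$ for $x\neq y$ and $d(x,x)=0$. This is positive off the diagonal because $\phi$ vanishes only at $z$, it is symmetric, and it satisfies the ultrametric inequality. It is bounded by $\sup\phi$.

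For the contraction, fix $x\neq y$ and $u\in F(x)$, and pick any $v\in F(y)$. Then
\[
\operatorname{dist}_d(u,F(y))\le d(u,v)\le\max\{\phi(u),\phi(v)\}\le\alpha\max\{\phi(x),\phi(y)\}=\alpha d(x,y).
\]
By symmetry this yields $H_d(F(x),F(y))\le\alpha d(x,y)$, and the bound is finite, so the generalized Hausdorff distance causes no issue.

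For completeness, consider a $d$-Cauchy sequence. If it is not eventually constant, then infinitely often $x_n\ne x_m$ with $n,m$ large, and each such pair forces $\phi(x_n)\le d(x_n,x_m)$. One then checks that $\phi(x_n)\to0$ along the tail, noting that terms equal to a later distinct term are also controlled. Since $d(x_n,z)=\phi(x_n)$ whenever $x_n\ne z$, the sequence converges to $z$.

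\emph{(3)$\Rightarrow$(1).} Let $M=\operatorname{diam}_d\Omega<\infty$. For $w\in F(x)$, we have $d(w,z)=\operatorname{dist}_d(w,F(z))\le H_d(F(x),\{z\})\le\alpha d(x,z)$. Inducting on $n$ gives $d(w,z)\le\alpha^nM$ for every $w\in\widehat F^{\,n}(\Omega)$. Hence the intersection of the global images is contained in $\{z\}$, and it contains $z$ since $z$ is an endpoint.

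I expect the main care point to be the completeness argument in (2)$\Rightarrow$(3), specifically the bookkeeping for Cauchy sequences that repeat values without being eventually constant. The remaining steps are direct verifications.
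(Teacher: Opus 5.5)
The paper does not prove this theorem; it quotes it from Com\u{a}neci and Luchian, so there is no in-paper proof to compare against. Your cycle (1)$\Rightarrow$(2)$\Rightarrow$(3)$\Rightarrow$(1) is correct and self-contained. It is also built from two ingredients the paper uses elsewhere:
\begin{itemize}
\item Your depth $n(x)$ is exactly the global-image depth $\tau(x)$ from Step~3 of the proof of the strong converse, which the paper attributes to Luchian's $n_x$. The key fact is the same in both places: $u\in F(x)$ forces $u\in\widehat F^{\,n(x)+1}(\Omega)$.
\item Your metric $\max\{\phi(x),\phi(y)\}$ is the potential ultrametric $d_p$ of the hardness reduction, without the additive constant $1$.
\end{itemize}
Dropping that constant is what lets a multiplicative rank $\alpha^{n(x)}$ give an exact factor $\alpha$. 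The paper's $d_p$, by contrast, only turns an additive potential drop $\delta$ into the factor $1-\delta/2$. The cost is that your metric is no longer uniformly discrete: every $x\neq z$ is isolated, but $z$ can be an accumulation point.

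That cost is small. The completeness step you flagged takes one line. Fix $\epsilon>0$ and let $N$ be the Cauchy index. If the sequence is not eventually constant, then every $n\ge N$ has some $m\ge N$ with $x_m\neq x_n$. Hence $\phi(x_n)\le d(x_n,x_m)<\epsilon$ and $d(x_n,z)\le\phi(x_n)$, so $x_n\to z$.

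Two further remarks:
\begin{itemize}
\item (3)$\Rightarrow$(1) uses only that $d$ is bounded, not that it is complete.
\item Your (2)$\Rightarrow$(3) uses the generalized Hausdorff distance with no requirement that the values $F(x)$ be $d$-closed. This matches how the paper formulates Fryszkowski's problem, and for that formulation nothing in your argument depends on $\alpha<1/2$. If you instead required $d$-closed values, as in Nadler's $\mathrm{CB}$ setting, that direction would need adjusting. A value set that accumulates at $z$ without containing $z$ is not closed in your ultrametric.
\end{itemize}
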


The theorem gives a complete set-theoretic criterion for bounded
remetrization under the endpoint assumption: the endpoint must be the only
state that survives every level of the global iteration. Com\u{a}neci's
second partial answer treats finite spaces. He showed that, when $\Omega$ is
finite and $\{z\}$ is the unique nonempty subset fixed by $\widehat F$, a
complete metric making $F$ an $\alpha$-contraction exists for every
$\alpha\in(0,1)$ \cite{Comaneci2017}. Luchian further studied the stronger
equality form of contractivity, namely Nadler $\alpha$-similarities, under an
additional non-overlap assumption \cite{Luchian2018}.

The Com\u{a}neci--Luchian results answer Fryszkowski's question at a purely
mathematical level. They do not begin with a reference metric, and the
constructed metric is not required to generate a prescribed topology or to
be quantitatively comparable with any pre-existing notion of distance.
Consequently, approximation or convergence in the remetrized space need not
translate into approximation or convergence in the metric relevant to an
iterative algorithm. The next subsection strengthens the remetrization goal
by requiring both topological equivalence and quantitative error transfer.

\subsection{A new converse for Nadler's theorem}
\label{subsec:daskalakis-nadler-converse}

Daskalakis, Tzamos, and Zampetakis showed that a globally convergent
single-valued iteration can be made contractive under an equivalent complete
metric that also transfers approximation guarantees back to the original
metric \cite{DaskalakisTzamosZampetakis2018}. For a correspondence, the
correct analogue of global convergence must control all admissible choices,
not merely one selected trajectory.

Let $(X,d)$ be compact and let $F:X\to\mathcal{K}(X)$. Define
$F^{[0]}(x):=\{x\}$ and
$F^{[t+1]}(x):=\widehat F(F^{[t]}(x))$. Thus, $F^{[t]}(x)$ is the set of all
states reachable after exactly $t$ admissible updates from $x$. We say that
$F$ is \emph{robustly globally convergent} to an endpoint $x^*$ if
\begin{equation}
    \lim_{t\to\infty}
    H_d\!\left(F^{[t]}(x),\{x^*\}\right)=0
    \qquad\text{for every }x\in X.
    \label{eq:pointwise-robust-convergence}
\end{equation}
We additionally require local uniformity: there is an open neighborhood
$U$ of $x^*$ such that
\begin{equation}
    \lim_{t\to\infty}
    \sup_{x\in U}
    H_d\!\left(F^{[t]}(x),\{x^*\}\right)=0.
    \label{eq:local-uniform-robust-convergence}
\end{equation}
Condition \eqref{eq:pointwise-robust-convergence} says that every admissible
trajectory converges, uniformly over all choices made at each fixed time.
This is strictly stronger than the existence of one convergent selection. With the assumptions above, we prove the following converse to the Nadler's theorem, analogous to the result of Daskalakis, Tzamos, and Zampetakis \cite{DaskalakisTzamosZampetakis2018}.

\begin{theorem}[strong converse to Nadler's theorem]
\label{thm:daskalakis-nadler-converse}
Let $(X,d)$ be a compact metric space and let
$F:X\to\mathcal{K}(X)$ be continuous with respect to the Hausdorff metric
$H_d$. Suppose that $F$ has an endpoint $x^*$ and satisfies
\eqref{eq:pointwise-robust-convergence} and
\eqref{eq:local-uniform-robust-convergence}. Then, for every contraction
factor $c\in(0,1)$ and every accuracy scale $\varepsilon>0$, there exists a
metric $D_{c,\varepsilon}$ on $X$ with the following properties.
\begin{enumerate}
    \item The metrics $D_{c,\varepsilon}$ and $d$ generate the same topology.
    In particular, $(X,D_{c,\varepsilon})$ is compact and complete.

    \item The correspondence is a Hausdorff $c$-contraction:
    \begin{equation}
        H_{D_{c,\varepsilon}}(F(x),F(y))
        \leq
        cD_{c,\varepsilon}(x,y)
        \qquad\text{for all }x,y\in X.
        \label{eq:constructed-nadler-contraction}
    \end{equation}

    \item Approximation in the constructed metric transfers to the original
    metric:
    \begin{align}
        D_{c,\varepsilon}(x,x^*)\leq\varepsilon
        &\quad\Longrightarrow\quad
        d(x,x^*)\leq2\varepsilon,
        \label{eq:endpoint-transfer}\\
        D_{c,\varepsilon}(x,y)\leq\varepsilon
        &\quad\Longrightarrow\quad
        \min\!\left\{
            d(x,y),d(x,x^*),d(y,x^*)
        \right\}
        \leq2\varepsilon.
        \label{eq:pairwise-transfer}
    \end{align}
\end{enumerate}
\end{theorem}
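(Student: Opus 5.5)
We follow the route sketched in the introduction: build a rank from the global images, build a Hausdorff-nonexpansive closure of $d$, weight the closure by the rank, and take a geodesic (chain) closure.

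\textbf{Step 1: uniform convergence.} Write $\widehat F^{\,t}(X)=\bigcup_{x\in X}F^{[t]}(x)$. We first upgrade \eqref{eq:pointwise-robust-convergence} and \eqref{eq:local-uniform-robust-convergence} to
\[
\sup_{x\in X}H_d\bigl(F^{[t]}(x),\{x^*\}\bigr)\to0 .
\]
Choose $T_0$ so that $F^{[t]}(x)\subseteq U$ for all $x\in U$ and $t\geq T_0$. For each $x\in X$, pointwise convergence gives $T_x$ with $F^{[T_x]}(x)\subseteq B$, where $B\subseteq U$ is a smaller ball around $x^*$. Hausdorff continuity of $F$ propagates to $F^{[T_x]}$, which is also upper hemicontinuous. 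So a neighborhood of $x$ still maps into $U$ at time $T_x$. Compactness yields a finite subcover, and local uniformity on $U$ finishes the argument.

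\textbf{Step 2: the rank.} Consequently $\widehat F^{\,t}(X)$ lies in the ball $B_d(x^*,\eta_t)$ with $\eta_t\downarrow0$. This is the Com\u{a}neci--Luchian collapse \eqref{eq:nested-images-collapse}, now in quantitative form. Define the depth $n(x)=\sup\{t: x\in\widehat F^{\,t}(X)\}$. Every $u\in F(x)$ satisfies $n(u)\geq n(x)+1$. To obtain continuity we replace the integer depth by a continuous rank
\[
\phi(x)=\sup_{t\ge0}\ \sup_{y\in F^{[t]}(x)} c^{-t}\,\psi(y)
\]
for a suitable continuous $\psi\geq0$ vanishing only at $x^*$, for instance $\psi(y)=d(y,x^*)^{\gamma}$ with a rescaled time scale. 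Uniform convergence makes the supremum effectively finite away from $x^*$. The construction gives $\sup_{u\in F(x)}\phi(u)\le c\,\phi(x)$ by design, in the spirit of Proposition~\ref{prop:comaneci-rank}.

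\textbf{Step 3: the nonexpansive closure.} Set
\[
\bar d(x,y)=\sup_{t\geq0}H_d\bigl(F^{[t]}(x),F^{[t]}(y)\bigr).
\]
Then $H_{\bar d}(F(x),F(y))\le\bar d(x,y)$, and $\bar d\ge d$. By Step 1 and equicontinuity of the finitely many relevant iterates, $\bar d$ is continuous with respect to $d$. Because $d\le\bar d$ and $X$ is compact, $\bar d$ generates the same topology.

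\textbf{Step 4: weighting and chain closure.} Define the weighted cost
\[
w(x,y)=\min\{\bar d(x,y),\varepsilon\}\cdot g(\phi(x),\phi(y))
\]
with $g$ of the form $\max(\phi(x),\phi(y))^{s}$, truncated at scale $\varepsilon$, plus a term $|\phi(x)-\phi(y)|$ near $x^*$. Set
\[
D_{c,\varepsilon}(x,y)=\inf\sum_i w(z_i,z_{i+1})
\]
over finite chains from $x$ to $y$; this is a pseudometric.

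\textbf{Step 5: contraction.} Given a chain $z_0,\dots,z_m$ from $x$ to $y$ and $u\in F(x)$, pick $u_{i+1}\in F(z_{i+1})$ nearest in $\bar d$ to $u_i$. By Step 3,
\[
\bar d(u_i,u_{i+1})\le\bar d(z_i,z_{i+1}),
\]
while $\phi(u_i)\le c\,\phi(z_i)$. The weight therefore drops by a factor $c^{s}$. Choosing $s=1$, or rescaling $\phi$, gives $H_{D}(F(x),F(y))\le cD(x,y)$.

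\textbf{Step 6: topology and transfer.} To show $D_{c,\varepsilon}$ is a genuine metric with the same topology, we need a lower bound on $D$ away from $x^*$. There $\phi$ is bounded below by compactness, so chains are comparable to $\min\{\bar d,\varepsilon\}$. Near $x^*$ we use the $|\phi(x)-\phi(y)|$ term. The transfer statements \eqref{eq:endpoint-transfer} and \eqref{eq:pairwise-transfer} follow by the same case split. A short chain either stays where the weight is comparable to $d$, giving $d(x,y)\le2\varepsilon$. Or it must pass near $x^*$, where small $\phi$ forces $d(\cdot,x^*)\le\varepsilon$ via the choice of $\psi$; the factor $2$ comes from the triangle inequality across the two halves.

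\textbf{Main obstacle.} The hard part is the simultaneous calibration in Steps 2 and 4: $\phi$ must be continuous, $\phi$ must be comparable to $d(\cdot,x^*)$ at scale $\varepsilon$, and the rank weighting must not destroy topological equivalence near $x^*$, where weights vanish. My first attempt would take
\[
\phi(x)=\sup_t c^{-t}\min\{d_t(x),\varepsilon\},
\qquad d_t(x)=\sup_{y\in F^{[t]}(x)}d(y,x^*),
\]
which is finite by Step 1 because it is eventually $0$ times $c^{-t}$ after rescaling. I would then verify the lower bound $D\geq c'\min\{\bar d,\varepsilon,\phi\}$ directly.
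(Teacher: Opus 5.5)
Step~1 is sound; it even avoids the paper's forward-invariant neighborhood $W$. Step~3, however, has a genuine gap, and Step~5 depends on it: the reachable-set closure $\bar d(x,y)=\sup_{t\ge0}H_d\bigl(F^{[t]}(x),F^{[t]}(y)\bigr)$ is \emph{not} Hausdorff-nonexpansive when $F$ is genuinely set-valued. Closeness of the whole sets $F^{[t+1]}(x)$ and $F^{[t+1]}(y)$ does not give, for a fixed $u\in F(x)$, a single $v\in F(y)$ whose subtree $F^{[t]}(v)$ matches $F^{[t]}(u)$ for every $t$ at once. Reachable sets forget which branch produced which point.

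Here is a concrete failure. Take the nine reals $x^*=0$, $x=10$, $y=10+\delta$, $u_1=20$, $u_2=20+\delta/3$, $v_1=20+2\delta/3$, $v_2=20+\delta$, $a=30$, $b=31$, with $0<\delta<1$. Set $F(x)=\{u_1,u_2\}$, $F(y)=\{v_1,v_2\}$, $F(u_1)=\{a\}$, $F(u_2)=\{b\}$, $F(v_1)=F(v_2)=\{a,b\}$, and $F(a)=F(b)=F(x^*)=\{x^*\}$. Every hypothesis holds: the space is finite, $F^{[t]}\equiv\{x^*\}$ for $t\ge3$, and $U=\{x^*\}$ is open. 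Since $F^{[2]}(x)=F^{[2]}(y)=\{a,b\}$, you get $\bar d(x,y)=\delta$. Yet $\bar d(u_1,v_j)\ge H_d(\{a\},\{a,b\})=1$ for $j=1,2$, so $H_{\bar d}(F(x),F(y))\ge1$.

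The missing idea is the paper's recursive, branchwise closure: $d_0=d$, $d_{m+1}(x,y)=\max\{d(x,y),H_{d_m}(F(x),F(y))\}$, $\overline d=\sup_m d_m$. It nests Hausdorff distances instead of comparing reachable sets; on this example it gives $\overline d(x,y)\ge1$, as it must. Nonexpansiveness then follows from the monotone-limit identity $H_{\overline d}=\sup_mH_{d_m}$ on compact sets, once each $d_m$ is shown continuous. Topological equivalence, however, is no longer a statement about finitely many continuous iterates and needs a separate argument. For every $\epsilon'>0$ the paper picks $q$ with $\operatorname{diam}_d(K_q)\le\epsilon'$, where $K_j=\widehat F^{\,j}(X)$. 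It then proves $d_{q+r}\le\max\{d_q,\epsilon'\}$ for all $r$ by backward induction over the layers, using $\widehat F(K_j)=K_{j+1}$.

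Step~2 has a second, independent problem. No convergence rate is assumed, so $\sup_tc^{-t}\sup_{y\in F^{[t]}(x)}\psi(y)$ can be infinite. For $F(x)=\{x/2\}$ on $[0,1]$ with $c=1/4$, the terms of your ``first attempt'' grow like $2^tx$. For $F(x)=\{x/(1+x)\}$, where $F^{[t]}(x)=\{x/(1+tx)\}$, no power $d(\cdot,x^*)^\gamma$ helps. You can repair this with $\psi=h\circ d(\cdot,x^*)$ for an increasing $h$ satisfying $h(\eta_t)\le c^{2t}$.

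The paper sidesteps the issue entirely. Its integer depth $\tau(x)=\max\{n:x\in K_n\}$, shifted to $\ell=\tau-N$ with $K_N\subseteq B_d(x^*,\varepsilon)$, increases by at least one along every branch automatically, and the rank never needs to be continuous. With weight $c^{\min\{\ell(x),\ell(y)\}}\overline d(x,y)$, positivity of the chain metric (for $\tau(x)\le\tau(y)$) follows by cutting each chain at its first entrance into the closed set $K_{\tau(y)+1}$, which misses $x$. Topological equivalence follows from $D\le c^{-N}\overline d$, since a continuous bijection from a compact space onto a Hausdorff space is a homeomorphism. So the additive $|\phi(x)-\phi(y)|$ term you plan near $x^*$ is unnecessary. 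It would also obstruct Step~5: $\phi(u)\le c\,\phi(z)$ along branches gives no bound of $|\phi(u_i)-\phi(u_{i+1})|$ by $c\,|\phi(z_i)-\phi(z_{i+1})|$.
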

We leave the proof details to Appendix \ref{sec:appendix-converse-nadler}. Our proof follows a similar structure and ideas as in \cite[Theorem 1]{DaskalakisTzamosZampetakis2018} (and Meyers' proof \cite{Meyers1967}). 
The preceding theorem immediately yields convergence and residual
bounds analogous to \cite[Corollary 1]{DaskalakisTzamosZampetakis2018}.

\begin{corollary}[Global convergence and residual error bound]
\label{cor:robust-global-convergence}
Under the assumptions of
Theorem~\ref{thm:daskalakis-nadler-converse}, let
$D=D_{c,\varepsilon}$ and define
$r_D(x):=\operatorname{dist}_D(x,F(x))$. Then:
\begin{enumerate}
    \item $x^*$ is the unique fixed point of $F$.

    \item Every admissible trajectory $x_{t+1}\in F(x_t)$ satisfies
    \begin{equation}
        D(x_t,x^*)
        \leq
        c^tD(x_0,x^*).
        \label{eq:trajectory-linear-convergence}
    \end{equation}

    \item The residual is equivalent, up to constants depending only on
    $c$, to the distance from the endpoint:
    \begin{equation}
        (1-c)D(x,x^*)
        \leq
        r_D(x)
        \leq
        (1+c)D(x,x^*).
        \label{eq:residual-error-bound}
    \end{equation}

    \item Whenever $r_D(x_0)>0$, every integer
    \begin{equation}
        t
        \geq
        \max\left\{
            0,
                \frac{
                    \log\!\left(
                        r_D(x_0)/((1-c)\varepsilon)
                    \right)
                }{
                    \log(1/c)
                }
        \right\}
        \label{eq:iteration-complexity-bound}
    \end{equation}
    guarantees $d(x_t,x^*)\leq2\varepsilon$.
\end{enumerate}
\end{corollary}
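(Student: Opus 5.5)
The plan is to derive all four items directly from Theorem~\ref{thm:daskalakis-nadler-converse}, using only the Hausdorff $c$-contraction \eqref{eq:constructed-nadler-contraction}, the endpoint property $F(x^*)=\{x^*\}$, and the transfer bound \eqref{eq:endpoint-transfer}. The basic tool is the following consequence of contraction with $y=x^*$: for every $x$ and every $u\in F(x)$,
\[
D(u,x^*)=\operatorname{dist}_D(u,F(x^*))\le H_D(F(x),F(x^*))\le cD(x,x^*).
\]
This single inequality drives items 1, 2, and the lower half of item 3.

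For uniqueness, suppose $\bar x\in F(\bar x)$. Taking $u=\bar x$ in the display gives $D(\bar x,x^*)\le cD(\bar x,x^*)$. Since $c<1$, this forces $D(\bar x,x^*)=0$, so $\bar x=x^*$. Also $x^*\in F(x^*)$, so $x^*$ is a fixed point. For item 2, I apply the display along any admissible trajectory: $x_{t+1}\in F(x_t)$ gives $D(x_{t+1},x^*)\le cD(x_t,x^*)$, and induction yields \eqref{eq:trajectory-linear-convergence}. Note that this holds for every branch, since the display holds for all $u\in F(x)$.

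For item 3, the upper bound follows from the triangle inequality. Pick any $u\in F(x)$. Then
\[
r_D(x)\le D(x,u)\le D(x,x^*)+D(u,x^*)\le (1+c)D(x,x^*).
\]
For the lower bound, $F(x)$ is compact in $d$, hence compact in $D$ by item 1 of the theorem, so the infimum defining $r_D(x)$ is attained at some $u\in F(x)$. Then
\[
D(x,x^*)\le D(x,u)+D(u,x^*)\le r_D(x)+cD(x,x^*),
\]
and rearranging gives $(1-c)D(x,x^*)\le r_D(x)$. Attainment is not actually needed: an $\eta$-minimizer followed by $\eta\to0$ works equally well.

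For item 4, combine items 2 and 3: $D(x_t,x^*)\le c^tD(x_0,x^*)\le c^t r_D(x_0)/(1-c)$. This is at most $\varepsilon$ exactly when $c^t\le (1-c)\varepsilon/r_D(x_0)$, i.e.\ when $t\ge \log\bigl(r_D(x_0)/((1-c)\varepsilon)\bigr)/\log(1/c)$. Here $r_D(x_0)>0$ makes the logarithm well defined. If that quantity is negative, the bound already holds at $t=0$, which is what the $\max\{0,\cdot\}$ covers. Then \eqref{eq:endpoint-transfer} converts $D(x_t,x^*)\le\varepsilon$ into $d(x_t,x^*)\le 2\varepsilon$.

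There is no real obstacle. The only points needing care are using the endpoint condition, so that $\operatorname{dist}_D(u,\{x^*\})=D(u,x^*)$ (which follows from the definition of $H_D$), and handling attainment, or using an $\eta$-approximation, in the lower residual bound.
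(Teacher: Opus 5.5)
Your proposal is correct and follows essentially the same route as the paper. The endpoint property turns the Hausdorff $c$-contraction into $D(u,x^*)\le cD(x,x^*)$ for every $u\in F(x)$, which yields the trajectory bound, the two-sided residual bound via the triangle inequality, and the iteration count via \eqref{eq:endpoint-transfer}; the only cosmetic difference is that you get uniqueness by plugging $u=\bar x$ into that inequality, whereas the paper reads it off the lower residual bound at a point with $r_D(\bar x)=0$.
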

As an example, we illustrate the universality of Theorem \ref{thm:daskalakis-nadler-converse} using random-permutation cyclic coordinate descent (RPCD) on strongly convex quadratic
objectives. Let $G_\pi$ be an update of RPCD at each epoch under permutation $\pi$ and $\mathfrak S_n$ be the set of all permutations of $[n]$. Our converse to Nadler's theorem provides evidence that the correspondence $\widetilde F(x) = \{G_\pi(x):\pi \in  \mathfrak S_n\}$ is a strict Hausdorff contraction in the energy metric induced by the
quadratic objective. This leads to a (robust) convergence result for RPCD. We leave the details for this example in Appendix \ref{sec:appendix-rpcd}.

\section{The Complexity of Large-Margin Triplet Loss Minimization}
\label{sec:large-margin-triplet-nadler}

In this section, we outline a reduction from large-margin triplet-loss
stationarity to \textsc{Projected-Nadler}, showing that large-margin
triplet-loss stationarity lies in \textsf{CLS}; the construction and its
verification are carried out in full in
Appendix~\ref{sec:appendix-triplet-loss}. We state the reduction over a
polynomial-dimensional box. As usual for $\mathsf{CLS}$, this formulation is
polynomially equivalent to the fixed-dimensional formulation over $[0,1]^3$
used in the definition of $\beta$-\textsc{Projected-Nadler}. We begin with a formal
statement of the triplet-loss stationarity problem.

Let $V \subset \mathbb{R}^{d_{\mathrm{emb}}}$ be a finite set of movable points and let $A$ and $B$ be two fixed
pivots with rational embeddings
$a_A,a_B\in[0,1]^{d_{\mathrm{emb}}}$. Each movable point $v\in V$ is assigned
an embedding $z_v\in[0,1]^{d_{\mathrm{emb}}}$. We collect the movable
embeddings into a vector $z\in K:=[0,1]^q$, where
$q=d_{\mathrm{emb}}|V|$. For notational uniformity, set $z_A:=a_A$ and
$z_B:=a_B$; these two vectors are fixed and are not optimization variables.

Let $\mathcal C$ be a collection of ordered triplets
$t=(i_t,j_t,k_t)$ over $V\cup\{A,B\}$. A triplet specifies that the anchor
$i_t$ should be closer to $j_t$ than to $k_t$. Each triplet has a
nonnegative rational weight $w_t$, and the instance contains a rational
margin $\alpha>0$. Writing $[a]_+:=\max\{a,0\}$, the weighted triplet loss is

\begin{equation}
    \mathcal L(z)
    :=
    \sum_{t\in\mathcal C} w_t
    \left[
        \|z_{i_t}-z_{j_t}\|_2^2
        -
        \|z_{i_t}-z_{k_t}\|_2^2
        +
        \alpha
    \right]_+.
    \label{eq:large-margin-triplet-loss}
\end{equation}

The loss of triplet $t$ vanishes precisely when the squared distance from
$i_t$ to $k_t$ exceeds the squared distance from $i_t$ to $j_t$ by at least
$\alpha$. We focus on the strict large-margin regime
$\alpha \ge d_{\mathrm{emb}}$.

\begin{definition}[\textsc{Large-Margin-Triplet-FOSP}]
\label{def:large-margin-triplet-fosp}
An instance consists of the sets and embeddings described above, the
triplet collection $\mathcal C$ and its rational weights, a rational margin
$\alpha \ge d_{\mathrm{emb}}$, and a rational accuracy parameter
$\varepsilon>0$. The task is to output a point $z^*\in K$ such that

\begin{equation}
    \langle u-z^*,\nabla\mathcal L(z^*)\rangle
    \geq
    -\varepsilon
    \qquad
    \text{for every }u\in K.
    \label{eq:triplet-fosp-definition}
\end{equation}

Such a point is called an $\varepsilon$-first-order stationary point, or
$\varepsilon$-FOSP. For $\varepsilon=0$, condition
\eqref{eq:triplet-fosp-definition} is the usual variational-inequality
formulation of first-order stationarity over the box $K$.
\end{definition}

Yan et al. show that finding an exact FOSP for triplet loss minimization is
\textsf{CLS}-hard via a reduction from the exact
\textsc{QuadraticProgram-KKT} problem \cite{YanEtAlTriplet}. We recall an
approximate version of \textsc{QuadraticProgram-KKT}, likewise
\textsf{CLS}-complete \cite{Fearnley2025KKT}.

\begin{definition}[\textsc{QuadraticProgram-KKT}]
\label{def:quadratic-program-kkt}
An instance consists of a symmetric matrix
$Q \in \mathbb{Q}^{n \times n}$ and a vector
$b \in \mathbb{Q}^{n}$. These coefficients define the quadratic program
\begin{equation}
    \min_{x \in [0,1]^n}
    p(x)
    =
    x^\top Qx + b^\top x.
    \label{eq:quadratic-program-kkt}
\end{equation}
The goal is to output a $\varepsilon$-KKT point $x^* \in [0,1]^n$ for \eqref{eq:quadratic-program-kkt}. A point $x\in [0,1]^n$ is a $\varepsilon$-KKT point for \eqref{eq:quadratic-program-kkt} if $x_i > 0$ implies $
\frac{\partial p}{\partial x_i}(x) \le \varepsilon$ and $x_i < 1$ implies $\frac{\partial p}{\partial x_i}(x) \ge -\varepsilon$.
\end{definition}

\begin{remark}
The main reason to focus on the large margin regime is that
\[
\left[
        \|z_{i_t}-z_{j_t}\|_2^2
        -
        \|z_{i_t}-z_{k_t}\|_2^2
        +
        \alpha
    \right]_+
\]
is not necessarily differentiable (in the ambient space) when
\[
\Delta_t(z):=\|z_{i_t}-z_{j_t}\|_2^2
        -
        \|z_{i_t}-z_{k_t}\|_2^2+\alpha=0
        .
\]
In the large margin regime, we have $\Delta_t(z) \ge 0$, and we can then define the derivative of $[\Delta_t(z)]_{+}$ canonically.
\end{remark}

We state the following complexity results for
\textsc{Large-Margin-Triplet-FOSP}, with proofs given in full in
Appendix~\ref{sec:appendix-triplet-loss}:

\begin{theorem}
\label{thm:large-margin-triplet-to-nadler}
There is a polynomial-time many-one reduction from \textsc{Large-Margin-Triplet-FOSP} to $\beta$-\textsc{Projected-Nadler} with fixed parameter $\beta=1$.
\end{theorem}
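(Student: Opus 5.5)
We reduce along the lines sketched in the introduction: encode the projected-gradient dynamics of $\mathcal L$ as a singleton correspondence and equip $K$ with a potential ultrametric, reusing the ultrametric construction from the $\CLS$-hardness direction of Theorem~\ref{thm:nadler-clo-equivalence}.

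\textbf{Smoothness and normalization.} In the regime $\alpha\ge d_{\mathrm{emb}}$, every hinge argument satisfies $\Delta_t(z)\ge 0$ on $K$, since squared distances between points of $[0,1]^{d_{\mathrm{emb}}}$ are at most $d_{\mathrm{emb}}$. Hence $\mathcal L$ coincides on $K$ with the polynomial
\[
\sum_t w_t\bigl(\|z_{i_t}-z_{j_t}\|_2^2-\|z_{i_t}-z_{k_t}\|_2^2+\alpha\bigr),
\]
which is quadratic. Its gradient is therefore affine, and it is computable by a polynomial-size arithmetic circuit.

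From this I would derive explicit bounds:
\begin{itemize}
\item a bound $M$ on $|\mathcal L|$ over $K$;
\item a bound $G$ on $\|\nabla\mathcal L\|$;
\item a smoothness constant $\Lambda$ (the operator norm of the Hessian, bounded by $O(\sum_t w_t)$).
\end{itemize}
Fix a step $\eta=1/\Lambda$ and set $T(z)=\Pi_K(z-\eta\nabla\mathcal L(z))$. The box projection is a coordinatewise clamp via $\min/\max$, so $T$ is an exact rational circuit. It is Lipschitz in $\ell_1$ with a polynomially bounded constant $1+\eta\Lambda'$, where $\Lambda'$ bounds the Hessian in the $\ell_1\to\ell_1$ norm. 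Define the normalized potential $\phi(z)=(\mathcal L(z)+M)/(2M)+1\in[1,2]$, or a similar positive shift.

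\textbf{The Nadler instance.} Let $\Pi(x,y)=T(x)$, so that $F(x)=\{T(x)\}$ and $P=T$. Take the potential ultrametric $d(a,b)=\max\{\phi(a),\phi(b)\}$ for $a\ne b$ and $d(a,a)=0$. This is an ultrametric, and it is complete: every nonzero distance is at least $1$, so Cauchy sequences are eventually constant. Evaluating it needs an equality test, which the comparison gates provide. Set $D=2$ and $L_d=\mathrm{Lip}_{\ell_1}(\phi)$, so that (N4) can never occur for $a\ne b$, $a'\ne b'$, because there $d$ is a max of Lipschitz functions. Set $\beta=1$ and $L_P=\mathrm{Lip}_{\ell_1}(T)$, so that (N3) never occurs.

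Next choose $k$ and the tolerances.
\begin{itemize}
\item Choose $\varepsilon_1<1$. Then (N1), i.e.\ $r(x)\le\varepsilon_1$, forces $T(x)=x$, since $r(x)\ge 1$ otherwise. Such an $x$ is an exact FOSP.
\item Choose $k=1-\gamma$ with a small rational $\gamma$, and pick $\varepsilon_2<\varepsilon_1\gamma$.
\end{itemize}

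\textbf{Decoding (N2), the main step.} An (N2) witness gives $x\neq y$ with $u=T(x)$, $v=T(u)$, $u\ne v$ (or the symmetric case), and
\[
\max\{\phi(u),\phi(v)\} > k\max\{\phi(x),\phi(y)\}+\varepsilon_2.
\]
The degenerate cases with $u=v$ contradict the inequality, since then $d(u,v)=0$. We then have a chain $x\to u=T(x)\to v=T(u)$, and I would argue that at one of the points $x$ or $u$ the potential drop $\phi(w)-\phi(T(w))$ is small. Here the sufficient-decrease lemma for projected gradient applies:
\[
\mathcal L(T(w))\le\mathcal L(w)-\tfrac{\Lambda}{2}\|T(w)-w\|_2^2,
\]
so a small drop forces a small gradient mapping. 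A standard bound then converts this into
\[
\langle u'-w,\nabla\mathcal L(w)\rangle\ge -(G+\Lambda\sqrt q)\,\|w-T(w)\|_2\cdot O(\sqrt q)
\]
for all $u'\in K$, so $w$ is an $\varepsilon$-FOSP once $\|w-T(w)\|$ is polynomially small.

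\textbf{Expected obstacle.} The difficulty is quantitative. Since $k<1$ multiplies $\phi(x)\ge 1$, the inequality gives only $\phi(\cdot)>\phi(x)-\gamma\phi(x)+\varepsilon_2$, so the per-step decrease it forbids is of order $\gamma$. We must choose $\gamma$ (hence $k$) polynomially small relative to the $\varepsilon$-dependent threshold, while keeping $\varepsilon_2<\varepsilon_1\gamma$.

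We must also handle the case where the maximum on the left is attained at $u$ while the maximum on the right is $\phi(y)$ rather than $\phi(x)$. I would instead take the contrapositive: check $x$ and $u$ directly, and output whichever has $\phi(w)-\phi(T(w))\le 2\gamma+\varepsilon_2$. I expect to show that at least one qualifies, by applying $\max\{\phi(u),\phi(v)\}\le\phi(u)\le\phi(x)$ when both decrease, and using $\phi(x)\le\max\{\phi(x),\phi(y)\}$.

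Finally, the decoder tests $x$, $u$, and $y$ (and, in the symmetric case, the analogous points) against the FOSP condition in polynomial time. This works because the condition over the box reduces to a linear optimization, $\sum_i\min\{0,\partial_i\mathcal L\}-z_i\partial_i\mathcal L$-type terms, which is computable exactly.
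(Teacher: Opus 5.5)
Your construction matches the paper's in every structural respect:
\begin{itemize}
\item the singleton correspondence $\{T(x)\}$ with $\Pi(x,z)=T(x)$;
\item the potential ultrametric $\max\{\phi(a),\phi(b)\}$ with $\phi\in[1,2]$, which is the paper's $1+\max\{p(a),p(b)\}$;
\item $D=2$, and $\varepsilon_1<1$ so that (N1) forces $T(x)=x$;
\item $k=1-\gamma$ with $\varepsilon_2<\varepsilon_1\gamma$;
\item $L_P$ and $L_d$ chosen so that (N3) and (N4) cannot occur.
\end{itemize}

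The gap is in the (N2) decoding, which you rightly call the main step: you misread the certificate. By definition $v=\Pi(y,u)$ is the projection of $u$ onto $F(y)$. Your circuit ignores its second argument, so $v=T(y)$, not $T(u)$. The chain $x\to P(x)\to P(P(x))$ belongs to the membership direction, where the witness is \emph{built} with $y=P(x)$ and $z=x$. Here $y$ is chosen by the solver. An (N2) answer therefore asserts
\[
\max\{\phi(T(x)),\phi(T(y))\}>k\max\{\phi(x),\phi(y)\}+\varepsilon_2,\qquad x\neq y,\ T(x)\neq T(y),
\]
and the symmetric branch gives the same pair. For such witnesses, your claim that one of $x$ or $u=T(x)$ has drop at most $2\gamma+\varepsilon_2$ is false. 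Take $y$ a fixed point of $T$ with $\phi(y)>\phi(x)$. Since $T$ is a descent step, $\phi(T(x))\le\phi(x)<\phi(y)$, so
\[
d(T(x),T(y))=\phi(y)>k\phi(y)+\varepsilon_2=k\,d(x,y)+\varepsilon_2,
\]
because $\phi(y)\ge 1$ and $\varepsilon_2<\gamma$. Thus $(x,y,z)$ is a valid witness for any $z$, while $x$ and $T(x)$ may both be far from stationary.

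The repair is the paper's argument, applied to $x$ and $y$. Suppose both $\phi(x)-\phi(T(x))>2\gamma$ and $\phi(y)-\phi(T(y))>2\gamma$. Then
\[
\max\{\phi(T(x)),\phi(T(y))\}<\max\{\phi(x),\phi(y)\}-2\gamma\le k\max\{\phi(x),\phi(y)\}+\varepsilon_2,
\]
where the second inequality uses $\max\{\phi(x),\phi(y)\}\le 2$. This contradicts the witness. Hence one of $x,y$ has drop at most $2\gamma$. Your sufficient-decrease estimate then makes that point an $\varepsilon$-FOSP once $\gamma$ is polynomially small. Alternatively, as in the paper, you can return its image under $T$ via the projected-gradient certificate.

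Your final decoder happens to test $y$ as well, so it would in fact succeed. However, the correctness argument you give concerns the wrong inequality and the wrong pair of points, so as written it does not establish the reduction.
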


Theorem~\ref{thm:large-margin-triplet-to-nadler} places
\textsc{Large-Margin-Triplet-FOSP} in \textsf{CLS}, via the
\textsf{CLS}-membership of $\beta$-\textsc{Projected-Nadler} established earlier.
Combined with the argument of \cite[Theorem 4.1]{YanEtAlTriplet} for
\textsf{CLS}-hardness of the exact triplet-loss stationarity problem, this
yields the following corollary.

\begin{corollary}
\label{cor:large-margin-triplet-cls-complete}
The \textsc{Large-Margin-Triplet-FOSP} is $\mathsf{CLS}$-complete.
\end{corollary}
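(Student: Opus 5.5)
The corollary needs two directions: membership from Theorem~\ref{thm:large-margin-triplet-to-nadler} together with Theorem~\ref{thm:nadler-clo-equivalence}, and hardness by adapting \cite[Theorem 4.1]{YanEtAlTriplet} to the approximate, large-margin setting.

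\textbf{Membership.} Theorem~\ref{thm:large-margin-triplet-to-nadler} gives a polynomial-time many-one reduction from \textsc{Large-Margin-Triplet-FOSP} to $1$-\textsc{Projected-Nadler}. Theorem~\ref{thm:nadler-clo-equivalence} with $\beta=1$ places $1$-\textsc{Projected-Nadler} in $\CLS$. Since many-one reductions compose (instance maps compose, and decoders compose in reverse order), \textsc{Large-Margin-Triplet-FOSP} reduces to \textsc{Continuous-LocalOpt} and is therefore in $\CLS$.

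\textbf{Hardness.} I would reduce from the approximate \textsc{QuadraticProgram-KKT} of Definition~\ref{def:quadratic-program-kkt}, which is $\CLS$-complete \cite{Fearnley2025KKT}. The plan follows the gadget of \cite[Theorem 4.1]{YanEtAlTriplet}.
\begin{enumerate}
\item Each variable $x_i$ is encoded by a movable point whose embedding coordinate ranges over $[0,1]$.
\item Each entry $Q_{ij}$ and each $b_i$ is realized by weighted triplets against the pivots $A,B$ and other movable points. The key observation is that $\|z_i-z_j\|_2^2-\|z_i-z_k\|_2^2$ is a quadratic form in the embeddings.
\item In the large-margin regime every hinge equals its argument on $K$. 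Hence $\mathcal L$ is, on $K$, exactly a quadratic polynomial. I expect Yan et al.'s construction to already produce $\mathcal L(z)=c\,p(\pi(z))+(\text{terms forcing unused coordinates})$, up to a positive scaling $c$ and a linear projection $\pi$.
\item Increasing $\alpha$ to at least $d_{\mathrm{emb}}$ only adds a constant $\alpha\sum_t w_t$, which does not change gradients. So the exact-to-exact correspondence of stationary points carries over. Negative coefficients of $Q$ are handled by swapping the roles of $j_t$ and $k_t$, keeping all weights nonnegative.
\end{enumerate}

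\textbf{Main obstacle: approximation.} The cited hardness is for \emph{exact} FOSP, whereas Definition~\ref{def:large-margin-triplet-fosp} asks for an $\varepsilon$-FOSP. I would decode as follows. Given an $\varepsilon'$-FOSP $z^*$ over the box, take $u=z^*\pm e_\ell$ in \eqref{eq:triplet-fosp-definition}, for coordinate directions $e_\ell$ staying inside the box. This yields coordinatewise conditions: wherever coordinate $\ell$ can decrease by at least $\eta$, we get $\partial_\ell\mathcal L\le\varepsilon'/\eta$.

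This is weaker than the exact $\varepsilon$-KKT condition near the boundary. For points with $0<x_i<\eta$, I would round $x_i$ to $0$ and absorb the error through the Lipschitz constant of $\nabla p$, which is at most $2\|Q\|$. Choosing $\varepsilon'=\varepsilon\eta/c$ and $\eta=\varepsilon/(4\|Q\|_\infty n+1)$ keeps everything polynomial and yields an $\varepsilon$-KKT point of \eqref{eq:quadratic-program-kkt}.

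Checking that the Yan et al.\ gadget's auxiliary coordinates do not spoil this decoding is the step I would verify most carefully. In particular, the gradient scaling $c$ must remain polynomially bounded.
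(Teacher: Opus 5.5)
Your proposal is correct. Membership is exactly the paper's argument: compose Theorem~\ref{thm:large-margin-triplet-to-nadler} with Theorem~\ref{thm:nadler-clo-equivalence} at $\beta=1$. For hardness, the paper likewise starts from \textsc{QuadraticProgram-KKT*} and takes from \cite[Theorem 4.1]{YanEtAlTriplet} that, already at margin $\alpha=d_{\mathrm{emb}}$, $\mathcal L$ equals the quadratic objective plus a constant. Your explicit gadget is therefore extra detail rather than a different idea. Your margin-raising remark is harmless but only valid when every hinge is already active at the source margin, which is the case here.

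The genuine difference is the approximate-to-approximate decoding step. The paper does not test coordinate directions or round. It sets the target accuracy to $\delta=\varepsilon^2/(4\Lambda)$ and applies one projected-gradient step $y=\Pi_K(z-\nabla\mathcal L(z)/\Lambda)$ (Lemma~\ref{lem:fosp-to-kkt-conversion}). Testing the FOSP inequality at $u=y$ gives $\|z-y\|_2\le\sqrt{\delta/\Lambda}$. Then $n=\Lambda(z-y)-\nabla\mathcal L(z)\in N_K(y)$ is an approximate normal-cone certificate with $\|\nabla\mathcal L(y)+n\|_2\le2\sqrt{\Lambda\delta}$, from which the coordinatewise KKT conditions at both faces and in the interior are read off uniformly.

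Your route uses only the test points $u=z^*\pm\eta e_\ell$ and then snaps near-boundary coordinates to the face. This costs at most $2\eta$ times the maximum absolute row sum of $Q$ in each partial derivative. It is more elementary and gives the same quadratic dependence on $\varepsilon$ in the required FOSP accuracy. However, it needs a case split: you must also round coordinates in $(1-\eta,1)$ up to $1$, not only those in $(0,\eta)$ down to $0$, and take $\eta<1/2$ so the two cases are disjoint. The projection argument avoids all of this, and its decoder is a single clipping circuit.

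Both arguments rely equally on the cited structural fact that the Yan et al.\ loss coincides with the QP objective on the same variables, up to scaling and an additive constant. You are right to flag the auxiliary coordinates and the scaling $c$ as the point needing care; the paper's short proof also passes over it.
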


We leave the proofs of Theorem~\ref{thm:large-margin-triplet-to-nadler} and
Corollary~\ref{cor:large-margin-triplet-cls-complete} to
Appendix~\ref{sec:appendix-triplet-loss}.

\section{Conclusion and Further Directions}
\label{sec:conclusion}

We formulated a computational version of Nadler's fixed-point theorem,
pinned down its exact complexity as $\CLS$-complete even under the
Hölder-regularity relaxation forced by moving convex fibers, proved a
quantitative converse certifying robust convergence of a set-valued
iteration by an explicit change of metric, and instantiated both directions
on random-permutation coordinate descent and large-margin triplet loss. Six
questions remain open.

\begin{enumerate}
    \item \textbf{Inexact and weak projection oracles.} The residual
    contraction $r(P(x))\le k\,r(x)$ only needs $P(x)\in F(x)$ to realize the
    exact distance, not continuity of the selector, so almost-feasible,
    almost-nearest updates already contract $r$ up to the oracle's
    precision. Does $\CLS$-\emph{completeness} itself survive under
    outer-feasibility/inner-comparison query access, as in weak optimization
    for computational Kakutani
    \cite{PapadimitriouVlatakisGkaragkounisZampetakis2023}, by folding the
    oracle's imprecision into the certified violation?

    \item \textbf{A fully computational converse.}
    Theorem~\ref{thm:daskalakis-nadler-converse} builds $D_{c,\varepsilon}$
    from an infinite supremum and a shortest-chain closure, neither succinct
    in general. The single-valued converse
    \cite{DaskalakisTzamosZampetakis2018} is computational only because it
    assumes an efficiently evaluable Lyapunov potential; an explicit circuit
    for a rank function decreasing along \emph{every} branch --- sharpening
    Comăneci's existential characterization \cite{Comaneci2017} into a
    computational one --- would turn our converse into a genuine
    polynomial-time reduction.

    \item \textbf{Beyond Hausdorff contractions.} Generalized, Ćirić-type,
    and Meir--Keeler-type multivalued contractions weaken
    $H_d(F(x),F(y))\le k\,d(x,y)$ to asymptotic or case-by-case decrease;
    Fryszkowski's remetrization results suggest some generality survives
    set-theoretically \cite{Comaneci2017,Luchian2018}, but its computational
    status is open. Since our reduction needs \emph{uniform} geometric
    decrease of the residual, which weakenings preserve $\CLS$-membership
    and which push toward $\mathsf{PPAD}$ or full $\mathsf{FIXP}$?

    \item \textbf{Set-valued equilibrium computation.} Best-response maps
    and quasi-variational inequalities with point-dependent feasible regions
    \cite{VI_and_NE,GQVI_chan,complexity_of_QVI,quasivariational-inequalities-local}
    motivated correspondences in our introduction but were never pursued
    computationally. Under what proximal or Tikhonov-type regularization does
    a generalized Nash best-response correspondence become a genuine
    Hausdorff contraction, and is the resulting equilibrium $\CLS$-complete
    via $\beta$-\textsc{Projected-Nadler} --- a contraction-based
    counterpart to $\mathsf{PPAD}$-completeness for the convex-valued case
    \cite{PapadimitriouVlatakisGkaragkounisZampetakis2023}?

    \item \textbf{Beyond triplet loss.} The reduction needs only two facts:
    large margin forces global smoothness, and one projected-gradient step
    supplies potential decrease. Quadruplet losses, margin-based pairwise
    ranking losses, and localized softmax contrastive losses
    \cite{YanEtAlTriplet} share both, so the same
    singleton-correspondence/potential-ultrametric construction should
    transfer; the open task is the weakest margin condition per objective.

    \item \textbf{Sharper dependence on the Hölder exponent.} Our
    Hölder-to-Lipschitz interpolation costs polynomial precision for each
    \emph{fixed} $\beta$, unoptimized. Do natural instances --- non-Euclidean
    variational inequalities, projections onto non-convex fibers --- force
    $\beta\to 0$, how does precision loss scale, and is our polynomial
    dependence on $1/\beta$ tight?
\end{enumerate}

\newpage
\bibliographystyle{abbrv}
\bibliography{ref}
\newpage
\appendix
\section{Additional Preliminaries}
\label{sec:topological-preliminaries}

Here we introduce additional notions and concepts for the computational model and topology used in later proofs.

\paragraph{Well-behaved arithmetic circuits.}
To ensure efficient exact evaluation, we use well-behaved arithmetic
circuits as defined in \cite[Section~3.1.3]{FearnleyGoldbergHollenderSavani2023}. A multiplication gate is a \emph{true multiplication}
if neither input is a rational constant gate.
Writing $S$ for the binary description length of a circuit $C$,
including its rational constants, we call $C$ \emph{well-behaved}
if every directed path ending at an output contains at most
$\log_2 S$ true multiplication gates. Multiplication by an explicitly
encoded rational constant does not count toward this bound.
This syntactic restriction can be checked in polynomial time and
prevents the exponential growth of encoding lengths possible under
unrestricted repeated squaring. In particular, on a rational input $x$,
the intermediate values needed to compute the outputs have polynomial
encoding length, and $C(x)$ can be evaluated exactly in time
$\operatorname{poly}(S,\operatorname{size}(x))$
\cite[Lemma~3.3 and Remark~1]{FearnleyGoldbergHollenderSavani2023}.

\paragraph{Topological spaces.}
A \emph{topology} on a set $X$ is a collection
$\tau\subseteq\mathcal{P}(X)$ such that $\varnothing,X\in\tau$, arbitrary
unions of sets in $\tau$ belong to $\tau$, and finite intersections of sets
in $\tau$ belong to $\tau$. The pair $(X,\tau)$ is called a
\emph{topological space}, and the elements of $\tau$ are called open sets.
A set is closed if its complement is open. A neighborhood of $x\in X$ is a
set containing an open set that contains $x$. For $A\subseteq X$, its closure
$\overline{A}$ is the smallest closed set containing $A$.

A topological space is \emph{Hausdorff} if any two distinct points admit
disjoint open neighborhoods. A bijection $f\colon X\to Y$ is a
\emph{homeomorphism} if both $f$ and $f^{-1}$ are continuous. In this case,
$X$ and $Y$ have the same topological structure.

\paragraph{Metrics and pseudometrics.}
A \emph{pseudometric} on $X$ is a function
$p\colon X\times X\to\mathbb{R}_{\geq 0}$ satisfying
$p(x,x)=0$, symmetry, and the triangle inequality. Unlike a metric, a
pseudometric may satisfy $p(x,y)=0$ for distinct points $x\neq y$. Thus, a
metric is precisely a pseudometric that separates distinct points.

Every metric $d$ on $X$ induces a topology whose open sets are unions of
open balls $B_d(x,r):=\{y\in X:d(x,y)<r\}$. Two metrics $d$ and $D$ on the
same set are \emph{topologically equivalent} if they induce the same
topology. Equivalently, the identity map
$\operatorname{id}\colon (X,d)\to(X,D)$ is continuous and has a continuous inverse. That is, it is a homeomorphism.

A metric or pseudometric $p$ is said to be continuous with respect to a
given topology on $X$ if $(x,y)\mapsto p(x,y)$ is continuous on $X\times X$
with its product topology. When $X$ is metrized by $d$, the product topology
is induced, for example, by
\[
d_{\times}\bigl((x,y),(x',y')\bigr)
    :=d(x,x')+d(y,y').
\]
If $g\colon X\to(Z,\delta)$ is any map into a metric space, then
$p_g(x,y):=\delta(g(x),g(y))$ is a pseudometric on $X$. This observation
will be applied with $Z$ equal to a hyperspace of compact sets.

\paragraph{Continuity and compactness.}
A map $f\colon X\to Y$ between topological spaces is continuous if
$f^{-1}(V)$ is open in $X$ for every open set $V\subseteq Y$. For metric
spaces $(X,d)$ and $(Y,p)$, continuity at $x\in X$ is equivalently
expressed as follows: for every $\varepsilon>0$, there exists $\eta>0$ such
that
\[
d(x,y)<\delta
    \quad\Longrightarrow\quad
p\bigl(f(x),f(y)\bigr)<\varepsilon.
\]
It is also equivalent to sequential continuity:
$x_n\to x$ implies $f(x_n)\to f(x)$. The map is uniformly continuous if
$\eta$ may be chosen independently of $x$.

A topological space is compact if every open cover has a finite subcover.
We use the following standard facts. Every compact metric space is
sequentially compact and complete; every continuous map from a compact
metric space to a metric space is uniformly continuous; and every
continuous real-valued function on a compact space attains its minimum and
maximum. Moreover, a continuous bijection from a compact space to a
Hausdorff space is a homeomorphism. Metric spaces are Hausdorff and regular;
in particular, if $x\in U$ and $U$ is open, then there exists an open set
$V$ such that $x\in V$ and $\overline{V}\subseteq U$.

\paragraph{Hausdorff distance and Hausdorff continuity.}
Let $(Y,\delta)$ be a metric space, and let $\mathcal{K}(Y)$ denote the
family of nonempty compact subsets of $Y$. For $z\in Y$ and a nonempty set
$A\subseteq Y$, define
$\operatorname{dist}_{\delta}(z,A):=\inf_{a\in A}\delta(z,a)$. For
$A,B\in\mathcal{K}(Y)$, recall the notion of Hausdorff distance:
\[
H_{\delta}(A,B)
:=
\max\left\{
    \sup_{a\in A}\operatorname{dist}_{\delta}(a,B),
    \sup_{b\in B}\operatorname{dist}_{\delta}(b,A)
\right\}.
\]
The function $H_{\delta}$ is a metric on $\mathcal{K}(Y)$. Compactness also
ensures that each point-to-set infimum above is attained. Let $(X,d)$ be another metric space. A compact-valued correspondence
$F\colon X\rightrightarrows Y$ is \emph{Hausdorff-continuous} if the
ordinary map $x\longmapsto F(x)$
from $(X,d)$ to $(\mathcal{K}(Y),H_{\delta})$ is continuous. Equivalently,
whenever $x_n\to x$ in $d$, one has
$H_{\delta}(F(x_n),F(x))\to 0$. It is \emph{Hausdorff
$L$-Lipschitz} if
$H_{\delta}(F(x),F(y))\leq Ld(x,y)$ for all $x,y\in X$.

For a sequence of compact-valued correspondences $F_t$, we say that
$F_t$ converges uniformly to $F$ on a set $E\subseteq X$ if
\[
\sup_{x\in E}H_{\delta}\bigl(F_t(x),F(x)\bigr)\longrightarrow 0.
\]
A set $W\subseteq X$ is \emph{forward invariant} under
$F\colon X\rightrightarrows X$ if $F(x)\subseteq W$ for every $x\in W$.


\section{Example: Random-Permutation Cyclic Coordinate Descent}
\label{sec:appendix-rpcd}
The strong converse proved in the previous section shows that robust convergence of all branches of a set-valued iteration can be certified by a suitable contracting metric. We illustrate this principle using random-permutation cyclic coordinate descent (RPCD), a standard variant of coordinate descent in which a fresh permutation of the coordinates is selected at the beginning of each epoch. Various research studies have been conducted to analyze the convergence behavior of RPCD \cite{ChangHsiehLin2008, OswaldZhou2017, WrightLee2019, WrightLee2020, SunLuoYe2020, GurbuzbalabanOzdaglarVanliWright2020, KimLeeYun2025}. For strongly convex quadratic objectives, this method is equivalent to applying the Gauss--Seidel method with a randomly chosen coordinate ordering.

The randomness itself will play no role in our analysis. Instead, we discard
the probability distribution and retain the set of all possible outputs of one epoch. This produces a finite-valued correspondence. A trajectory of this correspondence represents coordinate descent under an arbitrary and possibly adversarial sequence of permutations.

\paragraph{The quadratic problem and the epoch correspondence.}

Let $A\in\mathbb{R}^{n\times n}$ be symmetric and positive definite, let
$b\in\mathbb{R}^{n}$, and consider the strongly convex quadratic objective
\begin{equation}
    \Phi(x)
    :=
    \frac{1}{2}x^\top A x-b^\top x.
    \label{eq:rpcd-quadratic-objective}
\end{equation}
Its unique minimizer is
$x^*=A^{-1}b$. Moreover,
\begin{equation}
    \Phi(x)-\Phi(x^*)
    =
    \frac{1}{2}(x-x^*)^\top A(x-x^*).
    \label{eq:rpcd-objective-gap}
\end{equation}
Let $e_i$ denote the $i$th standard basis vector. Exact minimization of
$\Phi$ along coordinate $i$ gives the update
\begin{equation}
    C_i(x)
    :=
    x-
    \frac{e_i^\top(Ax-b)}{A_{ii}}e_i.
    \label{eq:rpcd-coordinate-update}
\end{equation}
Indeed, the right-hand side is the unique minimizer of
$\Phi(x+\tau e_i)$ over $\tau\in\mathbb{R}$.

Let $\mathfrak{S}_n$ be the set of permutations of $[n]$. For a permutation
$\pi=(\pi_1,\ldots,\pi_n)\in\mathfrak{S}_n$, define the corresponding
one-epoch map by
\begin{equation}
    G_\pi
    :=
    C_{\pi_n}\circ\cdots\circ\, C_{\pi_1}.
    \label{eq:rpcd-epoch-map}
\end{equation}
Thus, if $\pi_t$ is the permutation selected at epoch $t$, the usual
random-permutation coordinate-descent iteration can be written as
\begin{align}
    x^{t,0}
    &:=x^t,
    \nonumber\\
    x^{t,j}
    &:=
    C_{\pi_{t,j}}\bigl(x^{t,j-1}\bigr),
    \qquad (j\in[n])
    \nonumber\\
    x^{t+1}
    &:=
    x^{t,n}
    =
    G_{\pi_t}(x^t),
    \label{eq:rpcd-inner-iteration}
\end{align}
where $\pi_{t,j}$ denotes the $j$th coordinate of $\pi_t$. We associate with the algorithm the epoch correspondence
\begin{equation}
    \widetilde{F}(x)
    :=
    \left\{
        G_\pi(x):
        \pi\in\mathfrak{S}_n
    \right\}.
    \label{eq:rpcd-correspondence}
\end{equation}
The values of $\widetilde{F}$ are nonempty and finite, and hence compact.
An admissible trajectory
\begin{equation}
    x^{t+1}\in \widetilde{F}(x^t)
\end{equation}
is exactly an execution of coordinate descent in which an arbitrary
permutation may be selected at each epoch.

\paragraph{Why an epoch, rather than one coordinate update, is necessary.}

One might instead consider the one-coordinate correspondence
\begin{equation}
    \mathcal{R}(x)
    :=
    \left\{
        C_i(x):
        i\in[n]
    \right\}.
    \label{eq:rpcd-atomic-correspondence}
\end{equation}
This correspondence does not satisfy the all-branches convergence hypothesis
of the strong converse. For every coordinate $i$, every point satisfying
\begin{equation}
    e_i^\top(Ax-b)=0
\end{equation}
is fixed by the branch $C_i$. When $n\geq2$, this affine hyperplane contains
points other than $x^*$. By selecting the same coordinate repeatedly, one
therefore obtains an admissible trajectory that remains forever at a
nonoptimal point.

This obstruction is specific to the set-valued setting. Randomized coordinate
descent may converge probabilistically under weaker coordinate-sampling
conditions, but robust convergence of the full correspondence must hold for
every admissible sequence of choices. Grouping the updates into complete
epochs removes this obstruction because every coordinate is processed once
during every transition of $\widetilde{F}$.

\paragraph{Euclidean distance need not contract.}

Although every complete epoch converges toward $x^*$, its map need not be
contracting under every distance metric. Here we show that $\widetilde{F}$ is not contracting under Euclidean distance. Consider the two-dimensional instance
\begin{equation}
    A
    =
    \begin{pmatrix}
        1 & 2\\
        2 & 5
    \end{pmatrix},
    \qquad
    b=0.
    \label{eq:rpcd-euclidean-counterexample-matrix}
\end{equation}
The matrix is positive definite because its leading principal minors are
positive. The minimizer is $x^*=0$. For later use, define the linear error-update matrices
\begin{equation}
    Q_i
    :=
    I-\frac{1}{A_{ii}}e_i e_i^\top A.
    \label{eq:rpcd-coordinate-error-matrix}
\end{equation}
Since $Ax^*=b$, equation
\eqref{eq:rpcd-coordinate-update} gives $  C_i(x)-x^*=Q_i(x-x^*).$
For the matrix in \eqref{eq:rpcd-euclidean-counterexample-matrix},
\begin{equation}
    Q_1
    =
    \begin{pmatrix}
        0 & -2\\
        0 & 1
    \end{pmatrix},
    \qquad
    Q_2
    =
    \begin{pmatrix}
        1 & 0\\
        -\frac{2}{5} & 0
    \end{pmatrix}.
\end{equation}
For the ordering that updates coordinate one followed by coordinate two,
\begin{equation}
    Q_2Q_1
    =
    \begin{pmatrix}
        0 & -2\\
        0 & \frac{4}{5}
    \end{pmatrix}.
\end{equation}
Consequently,
\begin{equation}
    \left\|
        G_{(1,2)}(e_2)
    \right\|_2
    =
    \left\|
        Q_2Q_1e_2
    \right\|_2
    =
    \left\|
        \begin{pmatrix}
            -2\\
            \frac{4}{5}
        \end{pmatrix}
    \right\|_2
    =
    \frac{\sqrt{116}}{5}
    >
    1
    =
    \|e_2\|_2.
    \label{eq:rpcd-euclidean-expansion}
\end{equation}
Because $\widetilde{F}(0)=\{0\}$, it follows that
\begin{equation}
    H_{\|\cdot\|_2}
    \left(
        \widetilde{F}(e_2),
        \widetilde{F}(0)
    \right)
    >
    \|e_2-0\|_2.
    \label{eq:rpcd-not-euclidean-nonexpansive}
\end{equation}
Thus, the epoch correspondence can expand Euclidean distances even though
every admissible sequence of epochs converges to the unique minimizer.

\paragraph{The energy metric and convergence analysis.}

Now we present a metric where $\widetilde{F}$ is contracting. Define the energy
norm and its associated metric by
\begin{equation}
    \|z\|_A
    :=
    \sqrt{z^\top A z},
    \qquad
    d_A(x,y)
    :=
    \|x-y\|_A.
    \label{eq:rpcd-energy-metric}
\end{equation}
Since $A$ is positive definite,
\begin{equation}
    \sqrt{\lambda_{\min}(A)}\,\|x-y\|_2
    \leq
    d_A(x,y)
    \leq
    \sqrt{\lambda_{\max}(A)}\,\|x-y\|_2.
    \label{eq:rpcd-metric-equivalence}
\end{equation}
Hence $d_A$ is (metrically) equivalent to Euclidean distance, and
$(\mathbb{R}^n,d_A)$ is complete. For a linear map $M$, write
\begin{equation}
    \|M\|_A
    :=
    \sup_{z\neq0}
    \frac{\|Mz\|_A}{\|z\|_A}=\max_{\|z\|_A=1}\|Mz\|_A.
    \label{eq:rpcd-induced-energy-norm}
\end{equation}
For each permutation $\pi\in\mathfrak{S}_n$, define its epoch error matrix by $ M_\pi:= Q_{\pi_n}\cdots Q_{\pi_1},$
and define the worst-order epoch factor
\begin{equation}
    \widetilde{\rho}
    :=
    \max_{\pi\in\mathfrak{S}_n}
    \|M_\pi\|_A.
    \label{eq:rpcd-contraction-factor}
\end{equation}

\begin{lemma}[Uniform contraction of complete coordinate sweeps]
\label{lem:rpcd-complete-sweep-contraction}
The worst-order epoch factor satisfies
\begin{equation}
    0\leq \widetilde{\rho}<1.
    \label{eq:rpcd-rho-less-than-one}
\end{equation}
\end{lemma}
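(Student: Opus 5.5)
The plan is to show that each coordinate error map $Q_i$ is an $A$-orthogonal projection, so that $\|M_\pi\|_A\le 1$. Strict inequality will then follow from a compactness argument over the $A$-unit sphere, combined with the fact that only $z=0$ is fixed by a full sweep. Since $\mathfrak S_n$ is finite, it suffices to prove $\|M_\pi\|_A<1$ for each fixed $\pi$. Nonnegativity of $\widetilde\rho$ is trivial.

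\textbf{Step 1 (projection structure).} Observe that $Q_i=I-\frac{1}{A_{ii}}e_ie_i^\top A$ is the $A$-orthogonal projection onto the hyperplane $\{z:e_i^\top Az=0\}$. Indeed $Q_i^2=Q_i$, and $\langle Q_iz,w\rangle_A=z^\top Aw-\frac{(e_i^\top Az)(e_i^\top Aw)}{A_{ii}}$ is symmetric in $z,w$, so $Q_i$ is self-adjoint in $\langle\cdot,\cdot\rangle_A$. This gives the Pythagorean identity
\[
\|z\|_A^2=\|Q_iz\|_A^2+\frac{(e_i^\top Az)^2}{A_{ii}},
\]
which is equivalently the exact decrease of $\Phi$ along a coordinate minimization, by \eqref{eq:rpcd-objective-gap}. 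Hence $\|Q_iz\|_A\le\|z\|_A$, with equality if and only if $e_i^\top Az=0$, and in that case $Q_iz=z$.

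\textbf{Step 2 (equality forces $z=0$).} Fix $\pi$ and suppose $\|M_\pi z\|_A=\|z\|_A$. Telescoping the identity from Step 1 along the sweep $z^{0}=z$, $z^{j}=Q_{\pi_j}z^{j-1}$ shows that every step preserves the norm. By Step 1, each step then fixes its input, so $z^j=z$ for all $j$, and $e_{\pi_j}^\top Az=0$ for every $j$. Because $\pi$ is a permutation, every coordinate appears, so $Az=0$, and positive definiteness gives $z=0$. This is exactly where the full epoch matters, echoing the discussion of $\mathcal R$ above.

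\textbf{Step 3 (compactness).} The map $z\mapsto\|M_\pi z\|_A$ is continuous on the compact set $\{\|z\|_A=1\}$. It therefore attains its maximum, which is $\|M_\pi\|_A$. By Steps 1 and 2 that maximum is strictly below $1$. Taking the maximum over the finitely many $\pi$ yields $\widetilde\rho<1$.

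The only step needing care is Step 2, specifically the justification that norm preservation at each stage forces $Q_{\pi_j}$ to fix its input. This follows directly from the equality case of the Pythagorean identity, so I expect no real obstacle. The remainder is routine linear algebra.
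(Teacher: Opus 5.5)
Your proposal is correct and follows the paper's proof essentially step for step: the Pythagorean identity $\|Q_iz\|_A^2=\|z\|_A^2-(e_i^\top Az)^2/A_{ii}$ with its equality case, the telescoping argument forcing $e_{\pi_j}^\top Az=0$ for every $j$ and hence $z=0$, and then compactness of the $A$-unit sphere together with finiteness of $\mathfrak S_n$. Checking that $Q_i$ is idempotent and $A$-self-adjoint is a slightly more explicit justification of the projection claim, which the paper only asserts.
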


\begin{proof}
Let $\alpha=  {e_i^\top A z}/{A_{ii}}$. For every $z\in\mathbb{R}^n$, direct expansion of
\eqref{eq:rpcd-coordinate-error-matrix} gives
\begin{equation}
   \|Q_i z\|_A^2
    =
    (z-\alpha e_i)^\top A(z-\alpha e_i) 
    =
    z^\top A z
    -
    2\alpha e_i^\top A z
    +
    \alpha^2 e_i^\top A e_i 
    =
    \|z\|_A^2
    -
    \frac{(e_i^\top A z)^2}{A_{ii}}.
    \label{eq:rpcd-pythagorean-identity}
\end{equation}
Thus, $Q_i$ is nonexpansive in the energy norm. In fact, it is the
$A$-orthogonal projection onto the subspace
\begin{equation}
    \mathcal{S}_i
    :=
    \left\{
        z\in\mathbb{R}^n:
        e_i^\top A z=0
    \right\}.
\end{equation}
Equality in \eqref{eq:rpcd-pythagorean-identity} holds precisely when
$z\in\mathcal{S}_i$, in which case $Q_i z=z$. Fix a permutation $\pi$ and a nonzero vector $z$. Set $z_0:=z$ and
$z_j:=Q_{\pi_j}z_{j-1}$ for $j\in[n]$. From \eqref{eq:rpcd-pythagorean-identity}, we know that $\|M_\pi z\|_A \le \|z\|_A.$ Thus, suppose, toward a contradiction, that
\begin{equation}
    \|M_\pi z\|_A
    =
    \|z\|_A.
\end{equation}
Since the sequence
\begin{equation}
    \|z_0\|_A,\,\|z_1\|_A,\,\ldots\,,\|z_n\|_A
\end{equation}
is nonincreasing, equality between its first and last terms implies equality
at every coordinate update. Equation
\eqref{eq:rpcd-pythagorean-identity} therefore gives
\begin{equation}
    e_{\pi_j}^\top A z_{j-1}=0
    \qquad
    \text{for every }j\in[n].
\end{equation}
It also follows that $z_j=z_{j-1}$ at every step. Hence $z_j=z$ for every
$j$, and
\begin{equation}
    e_{\pi_j}^\top A z=0
    \qquad
    \text{for every }j\in[n].
\end{equation}
Because $\pi$ contains every coordinate exactly once, this implies $Az=0$.
Positive definiteness of $A$ then gives $z=0$, a contradiction. 

We have shown that $\|M_\pi z\|_A<\|z\|_A$ for every nonzero $z$. The
$A$-unit sphere $S_A = \{z: \|z\|_A = 1\}$  is compact, so
$\|M_\pi\|_A<1$. Finally, $\mathfrak{S}_n$ is finite, and therefore the
maximum in \eqref{eq:rpcd-contraction-factor} is also strictly smaller than
one.
\end{proof}

\begin{proposition}[RPCD as a Nadler contraction]
\label{prop:rpcd-nadler-contraction}
The correspondence $\widetilde{F}$ is a Hausdorff
$\widetilde{\rho}$-contraction under the energy metric:
\begin{equation}
    H_{d_A}
    \left(
        \widetilde{F}(x),
        \widetilde{F}(y)
    \right)
    \leq
    \widetilde{\rho}d_A(x,y)
    \qquad
    \text{for all }x,y\in\mathbb{R}^n.
    \label{eq:rpcd-hausdorff-contraction}
\end{equation}
Moreover, $\widetilde{\rho}$ is the smallest global contraction factor
for $\widetilde{F}$ under $d_A$.
\end{proposition}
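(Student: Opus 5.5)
The plan is to exploit that every branch of $\widetilde F$ is an affine map with the same linear part structure. Since $C_i(x)-x^*=Q_i(x-x^*)$, we have $G_\pi(x)-G_\pi(y)=M_\pi(x-y)$ for every permutation $\pi$ and all $x,y$.

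\textbf{Upper bound.} To bound the Hausdorff distance, I would use the permutation itself as the matching between the two finite sets. Fix $x,y$ and take any point $u=G_\pi(x)\in\widetilde F(x)$. Its partner $G_\pi(y)$ lies in $\widetilde F(y)$, and
\[
\operatorname{dist}_{d_A}\bigl(u,\widetilde F(y)\bigr)\le \|M_\pi(x-y)\|_A\le \|M_\pi\|_A\,d_A(x,y)\le\widetilde\rho\, d_A(x,y).
\]
The symmetric argument handles points of $\widetilde F(y)$. Taking suprema (maxima, since both sets are finite) gives \eqref{eq:rpcd-hausdorff-contraction}. Lemma~\ref{lem:rpcd-complete-sweep-contraction} then guarantees $\widetilde\rho<1$, so this is a genuine contraction.

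\textbf{Sharpness.} Here I would compare against the endpoint $x^*$. Since $C_i(x^*)=x^*$ for all $i$, we get $\widetilde F(x^*)=\{x^*\}$, and so
\[
H_{d_A}\bigl(\widetilde F(x),\{x^*\}\bigr)=\max_\pi\|M_\pi(x-x^*)\|_A.
\]
Choose $\pi^\star$ attaining the maximum in \eqref{eq:rpcd-contraction-factor}. Then choose a unit vector $z$ with $\|z\|_A=1$ and $\|M_{\pi^\star}z\|_A=\widetilde\rho$; it exists by compactness of the $A$-unit sphere. Set $x=x^*+z$. This gives $H_{d_A}(\widetilde F(x),\widetilde F(x^*))\ge\widetilde\rho=\widetilde\rho\,d_A(x,x^*)$.

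It follows that any constant $c$ with $H\le c\,d_A$ for all pairs must satisfy $c\ge\widetilde\rho$. This argument needs $x\neq x^*$, which holds because $z\ne0$. Combined with the upper bound, the factor $\widetilde\rho$ is attained exactly, hence it is the smallest global factor.

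\textbf{Obstacle.} Both steps are routine. The only point needing care is the sharpness claim: the maximizing permutation must be matched with its worst direction, and the comparison point must be $x^*$. Because $\widetilde F(x^*)$ is a singleton, the Hausdorff distance there reduces to a maximum over branches with no minimization step. Comparing two generic points instead could let the inner minimum pair $G_{\pi^\star}(x)$ with a different branch's image and lose the bound.
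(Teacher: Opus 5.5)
Your proposal is correct and matches the paper's proof essentially step for step. The upper bound pairs each $G_\pi(x)$ with $G_\pi(y)$ via $G_\pi(x)-G_\pi(y)=M_\pi(x-y)$, and sharpness compares $x^*+z^*$ with the endpoint $x^*$ using a maximizing pair $(\pi^*,z^*)$; deriving the difference identity from $C_i(x)-x^*=Q_i(x-x^*)$ instead of from $C_i(x)=Q_ix+c_i$ makes no real difference.
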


\begin{proof}
We start from expanding the numerator of coordinate map $C_i$:
\begin{align}
    C_i(x)
    =
    x
    -
    \frac{e_i^\top Ax}{A_{ii}}e_i
    +
    \frac{e_i^\top b}{A_{ii}}e_i 
    =
    Q_i x+c_i,
\end{align}
where $Q_i$ is defined in \eqref{eq:rpcd-coordinate-error-matrix} and $ c_i
=\frac{e_i^\top b}{A_{ii}}e_i.$ Consequently, for any two points $u,v\in\mathbb{R}^n$,
\[
\begin{aligned}
    C_i(u)-C_i(v)
    =
    (Q_i u+c_i)-(Q_i v+c_i) 
    =
    Q_i(u-v).
\end{aligned}
\]
The constant vector $c_i$ is the same in both updates because both applications use the same coordinate $i$, matrix $A$, and vector $b$; hence, it cancels upon subtraction. Now apply the same permutation $\pi=(\pi_1,\ldots,\pi_n)$ to both $x$ and $y$. Repeatedly applying the preceding identity yields
\begin{align}
    \nonumber
    G_\pi(x)-G_\pi(y)
    &=
    C_{\pi_n}\circ\cdots\circ\, C_{\pi_1}(x)
    -
    C_{\pi_n}\circ\cdots\circ\, C_{\pi_1}(y) \\
    \nonumber
    &=
    Q_{\pi_n}
    \left(
        C_{\pi_{n-1}}\circ\cdots\circ\, C_{\pi_1}(x)
        -
        C_{\pi_{n-1}}\circ\cdots\circ\, C_{\pi_1}(y)
    \right) \\
    \nonumber
    &=
    \cdots \\
    \nonumber
    &=
    Q_{\pi_n}\cdots Q_{\pi_1}(x-y) \\
    &=
    M_\pi(x-y).
    \label{eq:rpcd-epoch-difference}
\end{align}
Consequently, for every permutation $\pi$,
\begin{equation}
    d_A
    \left(
        G_\pi(x),
        G_\pi(y)
    \right)
    \leq
    \widetilde{\rho}d_A(x,y).
    \label{eq:rpcd-branch-contraction}
\end{equation}
Fix $u\in \widetilde{F}(x)$. There is a permutation $\pi$ such that
$u=G_\pi(x)$. Pairing $u$ with the point $G_\pi(y)\in
\widetilde{F}(y)$ gives
\begin{equation}
    \operatorname{dist}_{d_A}
    \left(
        u,\widetilde{F}(y)
    \right)
    \leq
    \widetilde{\rho}d_A(x,y).
\end{equation}
Taking the supremum over $u\in \widetilde{F}(x)$ gives one directed
Hausdorff bound. Interchanging $x$ and $y$ gives the other, proving
\eqref{eq:rpcd-hausdorff-contraction}. It remains to prove sharpness. Since the set of permutations is finite and
the $A$-unit sphere is compact, there are a permutation $\pi^*$ and a
vector $z^*$ satisfying
\begin{equation}
    \|z^*\|_A=1,
    \qquad
    \|M_{\pi^*}z^*\|_A
    =
    \widetilde{\rho}.
\end{equation}
Let $x=x^*+z^*$ and $y=x^*$. Every coordinate update fixes
$x^*$, and hence $ \widetilde{F}(x^*)=\{x^*\}.$
It follows that
\begin{align}
    H_{d_A}
    \left(
        \widetilde{F}(x),
        \widetilde{F}(x^*)
    \right)
    =
    \max_{\pi\in\mathfrak{S}_n}
    \|M_\pi z^*\|_A
    =
    \widetilde{\rho}
    =
    \widetilde{\rho}\,d_A(x,x^*).
\end{align}
Thus no smaller global contraction factor is possible under $d_A$.
\end{proof}

\begin{corollary}[Uniform convergence over coordinate orderings]
\label{cor:rpcd-all-order-convergence}
For every $x^0\in\mathbb{R}^n$ and every $t\geq 0$,
\begin{equation}
    H_{d_A}
    \left(
        \widetilde{F}^{[t]}(x^0),
        \{x^*\}
    \right)
    \leq
    \widetilde{\rho}^{\,t}d_A(x^0,x^*).
    \label{eq:rpcd-reachable-set-convergence}
\end{equation}
Consequently, every trajectory
$x^{t+1}=G_{\pi_t}(x^t)$, under an arbitrary sequence of coordinate
orderings $(\pi_t)_{t\geq 0}$, satisfies
\begin{align}
    \Phi(x^t)-\Phi(x^*)
    &\leq
    \widetilde{\rho}^{\,2t}
    \bigl(\Phi(x^0)-\Phi(x^*)\bigr),
    \label{eq:rpcd-objective-convergence}
\end{align}
and
\begin{align}
    \|x^t-x^*\|_2
    &\leq
    \sqrt{\frac{\lambda_{\max}(A)}
        {\lambda_{\min}(A)}}\,
    \widetilde{\rho}^{\,t}
    \|x^0-x^*\|_2.
    \label{eq:rpcd-euclidean-convergence}
\end{align}
\end{corollary}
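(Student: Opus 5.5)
The plan is to obtain all three claims directly from Proposition~\ref{prop:rpcd-nadler-contraction} together with the fact that $x^*$ is an endpoint, $\widetilde F(x^*)=\{x^*\}$, which is shown in the sharpness part of that proof.

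\emph{Reachable-set bound.} I will prove \eqref{eq:rpcd-reachable-set-convergence} by induction on $t$. The case $t=0$ is trivial, since $\widetilde F^{[0]}(x^0)=\{x^0\}$. Rather than invoke a general Hausdorff-union lemma, I will use the pointwise statement that every $u\in\widetilde F^{[t]}(x^0)$ satisfies $d_A(u,x^*)\le\widetilde\rho^{\,t}d_A(x^0,x^*)$. Since the target set is the singleton $\{x^*\}$ and the reachable set is finite, this is exactly the Hausdorff bound. For the inductive step, take $w\in\widetilde F^{[t+1]}(x^0)$, so $w=G_\pi(y)$ for some $y\in\widetilde F^{[t]}(x^0)$ and some $\pi$. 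Using $G_\pi(x^*)=x^*$ and \eqref{eq:rpcd-epoch-difference}, we get
\[
d_A(w,x^*)=\|M_\pi(y-x^*)\|_A\le\widetilde\rho\,d_A(y,x^*)\le\widetilde\rho^{\,t+1}d_A(x^0,x^*),
\]
and this closes the induction. Equivalently, I could apply \eqref{eq:rpcd-hausdorff-contraction} with $y=x^*$, but the direct branchwise bound avoids having to justify a Hausdorff distance between unions.

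\emph{Trajectory consequences.} Any trajectory $x^{t+1}=G_{\pi_t}(x^t)$ satisfies $x^t\in\widetilde F^{[t]}(x^0)$, so $\|x^t-x^*\|_A\le\widetilde\rho^{\,t}\|x^0-x^*\|_A$. Squaring this and using \eqref{eq:rpcd-objective-gap}, which gives $\Phi(x)-\Phi(x^*)=\tfrac12\|x-x^*\|_A^2$, yields \eqref{eq:rpcd-objective-convergence}. Sandwiching with \eqref{eq:rpcd-metric-equivalence} gives
\[
\sqrt{\lambda_{\min}(A)}\,\|x^t-x^*\|_2\le\|x^t-x^*\|_A\le\widetilde\rho^{\,t}\|x^0-x^*\|_A\le\widetilde\rho^{\,t}\sqrt{\lambda_{\max}(A)}\,\|x^0-x^*\|_2,
\]
and dividing through gives \eqref{eq:rpcd-euclidean-convergence}.

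No step is a real obstacle. The only point needing care is the passage from one-step contraction to the $t$-step reachable set. Because the comparison set is a singleton endpoint, the branchwise induction above handles this without any general lemma on Hausdorff distances of unions.
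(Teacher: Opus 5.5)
Your proposal is correct and follows essentially the paper's argument: the one-step branchwise bound $d_A(G_\pi(y),x^*)\le\widetilde\rho\,d_A(y,x^*)$ (from $\widetilde F(x^*)=\{x^*\}$ and the contraction) is iterated, and the Hausdorff distance to a singleton is the largest distance to it. The only cosmetic difference is the order of the steps: the paper bounds an arbitrary trajectory first and then notes that every point of $\widetilde F^{[t]}(x^0)$ is the endpoint of such a trajectory, whereas you induct over the reachable set directly and then specialize to trajectories.
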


\begin{proof}
Since every coordinate update fixes the minimizer $x^*$, we have
\begin{equation}
    G_\pi(x^*)=x^*
    \qquad
    \text{for every }\pi\in\mathfrak{S}_n,
\end{equation}
and therefore $\widetilde{F}(x^*)=\{x^*\}.$ Let $x^{t+1}=G_{\pi_t}(x^t)$ for an arbitrary sequence of permutations.
Because $x^{t+1}\in\widetilde{F}(x^t)$, Proposition
\ref{prop:rpcd-nadler-contraction} gives
\begin{align}
    d_A(x^{t+1},x^*)
    \leq
    H_{d_A}
    \left(
        \widetilde{F}(x^t),
        \widetilde{F}(x^*)
    \right)
    \leq
    \widetilde{\rho}\,d_A(x^t,x^*).
    \label{eq:rpcd-one-step-convergence}
\end{align}
Iterating \eqref{eq:rpcd-one-step-convergence} yields
\begin{equation}
    d_A(x^t,x^*)
    \leq
    \widetilde{\rho}^{\,t}d_A(x^0,x^*).
    \label{eq:rpcd-pathwise-energy-convergence}
\end{equation}

Every point $y\in\widetilde{F}^{[t]}(x^0)$ is the endpoint of some
admissible sequence of $t$ coordinate orderings. Hence
\eqref{eq:rpcd-pathwise-energy-convergence} holds uniformly over all such
$y$. Since the Hausdorff distance from a nonempty set to a singleton is
the largest distance to that singleton, we obtain
\begin{align}
    H_{d_A}
    \left(
        \widetilde{F}^{[t]}(x^0),
        \{x^*\}
    \right)
    =
    \sup_{y\in\widetilde{F}^{[t]}(x^0)}
    d_A(y,x^*)
    \leq
    \widetilde{\rho}^{\,t}d_A(x^0,x^*).
\end{align}
This proves \eqref{eq:rpcd-reachable-set-convergence}. The quadratic objective satisfies
\begin{equation}
    \Phi(x)-\Phi(x^*)
    =
    \frac{1}{2}d_A(x,x^*)^2.
\end{equation}
Therefore, by \eqref{eq:rpcd-pathwise-energy-convergence}, we have
\begin{align}
    \Phi(x^t)-\Phi(x^*)
    =
    \frac{1}{2}d_A(x^t,x^*)^2
    \leq
    \frac{1}{2}
    \widetilde{\rho}^{\,2t}
    d_A(x^0,x^*)^2
    =
    \widetilde{\rho}^{\,2t}
    \bigl(\Phi(x^0)-\Phi(x^*)\bigr),
\end{align}
which proves \eqref{eq:rpcd-objective-convergence}. Finally, positive definiteness of $A$ gives
\begin{equation}
    \sqrt{\lambda_{\min}(A)}\,\|z\|_2
    \leq
    \|z\|_A
    \leq
    \sqrt{\lambda_{\max}(A)}\,\|z\|_2.
\end{equation}
Combining these inequalities with
\eqref{eq:rpcd-pathwise-energy-convergence}, we obtain
\begin{align}
    \|x^t-x^*\|_2
    \leq
    \frac{1}{\sqrt{\lambda_{\min}(A)}}
    d_A(x^t,x^*)
    \leq
    \frac{
        \widetilde{\rho}^{\,t}
    }{
        \sqrt{\lambda_{\min}(A)}
    }
    d_A(x^0,x^*)
    \leq
    \sqrt{
        \frac{\lambda_{\max}(A)}
        {\lambda_{\min}(A)}
    }\,
    \widetilde{\rho}^{\,t}
    \|x^0-x^*\|_2.
\label{eq:rpcd-eigenvalue-bound-of-distance}
\end{align}
This is precisely
\eqref{eq:rpcd-euclidean-convergence}.
\end{proof}

\begin{corollary}[Residual error bound]
\label{cor:rpcd-residual-bound}
Define the epoch residual by
\begin{equation}
    r_A(x)
    :=
    \operatorname{dist}_{d_A}
    \left(
        x,\widetilde{F}(x)
    \right)
    =
    \min_{\pi\in\mathfrak{S}_n}
    d_A\left(x,G_\pi(x)\right).
    \label{eq:rpcd-residual}
\end{equation}
Then, for every $x\in\mathbb{R}^n$,
\begin{equation}
    (1-\widetilde{\rho})\,d_A(x,x^*)
    \leq
    r_A(x)
    \leq
    (1+\widetilde{\rho})\,d_A(x,x^*).
    \label{eq:rpcd-residual-error-bound}
\end{equation}
In particular, $r_A(x)=0$ if and only if $x=x^*$, so $x^*$ is the
unique fixed point of $\widetilde{F}$. Moreover, every admissible trajectory
satisfies
\begin{equation}
    \|x^t-x^*\|_2
    \leq
    \frac{
        \widetilde{\rho}^{\,t}
    }{
        (1-\widetilde{\rho})
        \sqrt{\lambda_{\min}(A)}
    }
    r_A(x^0).
    \label{eq:rpcd-residual-convergence}
\end{equation}
\end{corollary}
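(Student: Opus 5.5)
The plan has three steps: bound the residual two-sidedly by the distance to $x^*$, deduce that $x^*$ is the only fixed point, and combine the lower bound with the pathwise rate of Corollary~\ref{cor:rpcd-all-order-convergence}. Throughout I use Proposition~\ref{prop:rpcd-nadler-contraction} and the fact that $\widetilde F(x^*)=\{x^*\}$.

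\textbf{Upper bound.} Fix any $\pi$. The triangle inequality gives
\[
d_A(x,G_\pi(x))\le d_A(x,x^*)+d_A(G_\pi(x),x^*).
\]
Since $G_\pi(x^*)=x^*$, equation \eqref{eq:rpcd-branch-contraction} gives $d_A(G_\pi(x),x^*)\le\widetilde\rho\,d_A(x,x^*)$. Taking the minimum over $\pi$ yields $r_A(x)\le(1+\widetilde\rho)d_A(x,x^*)$. The minimum exists and equals $\operatorname{dist}_{d_A}(x,\widetilde F(x))$ because $\widetilde F(x)$ is finite.

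\textbf{Lower bound.} For every $\pi$, the reverse triangle inequality gives
\[
d_A(x,G_\pi(x))\ge d_A(x,x^*)-d_A(G_\pi(x),x^*)\ge(1-\widetilde\rho)\,d_A(x,x^*).
\]
Minimizing over $\pi$ gives $r_A(x)\ge(1-\widetilde\rho)\,d_A(x,x^*)$. This is the only step with content, and it rests entirely on $\widetilde\rho<1$ from Lemma~\ref{lem:rpcd-complete-sweep-contraction}; the same argument is item~3 of Corollary~\ref{cor:robust-global-convergence}, here with the explicit metric $d_A$.

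\textbf{Uniqueness.} The two bounds give $r_A(x)=0$ if and only if $d_A(x,x^*)=0$, that is, $x=x^*$. Because $\widetilde F(x)$ is finite and hence closed, $x$ is a fixed point exactly when $r_A(x)=0$. So $x^*$ is the unique fixed point of $\widetilde F$.

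\textbf{Residual-based rate.} Take the inequality just before the last step of \eqref{eq:rpcd-eigenvalue-bound-of-distance}:
\[
\|x^t-x^*\|_2\le\frac{\widetilde\rho^{\,t}}{\sqrt{\lambda_{\min}(A)}}\,d_A(x^0,x^*).
\]
The lower bound gives $d_A(x^0,x^*)\le r_A(x^0)/(1-\widetilde\rho)$, and dividing by $1-\widetilde\rho>0$ is legitimate. Substituting this yields \eqref{eq:rpcd-residual-convergence}.
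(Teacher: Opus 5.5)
Your proof is correct and follows the paper's argument. You get the two-sided bound from the triangle and reverse triangle inequalities with $d_A(G_\pi(x),x^*)\le\widetilde\rho\,d_A(x,x^*)$, and uniqueness from the lower bound. You get the rate by combining the pathwise estimate of Corollary~\ref{cor:rpcd-all-order-convergence} with $d_A(x^0,x^*)\le r_A(x^0)/(1-\widetilde\rho)$. The only cosmetic difference is that you use the per-branch bound \eqref{eq:rpcd-branch-contraction} with $y=x^*$ directly, whereas the paper goes through the Hausdorff inequality and $\widetilde F(x^*)=\{x^*\}$.
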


\begin{proof}
Fix $x\in\mathbb{R}^n$ and $u\in\widetilde{F}(x)$. Since
$\widetilde{F}(x^*)=\{x^*\}$, Proposition
\ref{prop:rpcd-nadler-contraction} implies
\begin{align}
    d_A(u,x^*)
    \leq
    H_{d_A}
    \left(
        \widetilde{F}(x),
        \widetilde{F}(x^*)
    \right)
    \leq
    \widetilde{\rho}\,d_A(x,x^*).
    \label{eq:rpcd-successor-endpoint-bound}
\end{align}
The reverse triangle inequality and
\eqref{eq:rpcd-successor-endpoint-bound} give
\begin{align}
    d_A(x,u)
    \geq
    d_A(x,x^*)-d_A(u,x^*)
    \geq
    (1-\widetilde{\rho})\,d_A(x,x^*).
    \label{eq:rpcd-residual-lower-pointwise}
\end{align}
Similarly, the triangle inequality gives
\begin{align}
    d_A(x,u)
    \leq
    d_A(x,x^*)+d_A(u,x^*)
    \leq
    (1+\widetilde{\rho})\,d_A(x,x^*).
    \label{eq:rpcd-residual-upper-pointwise}
\end{align}
Taking the infimum over
$u\in\widetilde{F}(x)$ in
\eqref{eq:rpcd-residual-lower-pointwise} and
\eqref{eq:rpcd-residual-upper-pointwise} yields
\begin{equation}
    (1-\widetilde{\rho})\,d_A(x,x^*)
    \leq
    r_A(x)
    \leq
    (1+\widetilde{\rho})\,d_A(x,x^*).
\end{equation}

Because $\widetilde{\rho}<1$, the lower bound shows that
$r_A(x)=0$ only if $x=x^*$. Conversely,
$\widetilde{F}(x^*)=\{x^*\}$ implies $r_A(x^*)=0$. Thus,
\begin{equation}
    r_A(x)=0
    \quad\Longleftrightarrow\quad
    x=x^*.
\end{equation}
In particular, if $x$ is a fixed point of $\widetilde{F}$, then
$r_A(x)=0$, and hence $x=x^*$. Therefore $x^*$ is the unique fixed
point. Now let $(x^t)_{t\geq0}$ be any admissible trajectory. By
Corollary~\ref{cor:rpcd-all-order-convergence},
\begin{equation}
    d_A(x^t,x^*)
    \leq
    \widetilde{\rho}^{\,t}d_A(x^0,x^*).
\end{equation}
The lower residual bound at $x^0$ gives
\begin{equation}
    d_A(x^0,x^*)
    \leq
    \frac{r_A(x^0)}{1-\widetilde{\rho}}.
\end{equation}
Consequently, we have
\begin{equation}
    d_A(x^t,x^*)
    \leq
    \frac{
        \widetilde{\rho}^{\,t}
    }{
        1-\widetilde{\rho}
    }
    r_A(x^0).
\end{equation}
Finally, using \eqref{eq:rpcd-eigenvalue-bound-of-distance} from Lemma \ref{cor:rpcd-all-order-convergence}, we get
\begin{equation}
    \|x^t-x^*\|_2
    \leq
    \frac{d_A(x^t,x^*)}
    {\sqrt{\lambda_{\min}(A)}}.
\end{equation}
And we conclude that
\begin{equation}
    \|x^t-x^*\|_2
    \leq
    \frac{
        \widetilde{\rho}^{\,t}
    }{
        (1-\widetilde{\rho})
        \sqrt{\lambda_{\min}(A)}
    }
    r_A(x^0),
\end{equation}
as claimed.
\end{proof}

\paragraph{Relation to the strong converse (Theorem \ref{thm:daskalakis-nadler-converse}).}

For any $R>0$, consider the ball under $d_A$:
\begin{equation}
    X_R
    :=
    \left\{
        x\in\mathbb{R}^n:
        d_A(x,x^*)\leq R
    \right\}.
\end{equation}
Every coordinate update never increases $d_A$-distance relative to $x^*$, so for $x \in X_R$ and arbitrary permutation $\pi$, $d_A(G_\pi(x),x^*) \le d_A(x,x^*) \le R$ and $X_R$ is forward invariant under every branch of
$\widetilde{F}$. The set $X_R$ is compact in the Euclidean topology,
and $\widetilde{F}$ is Hausdorff-continuous by Proposition \ref{prop:rpcd-nadler-contraction}. Equation
\eqref{eq:rpcd-reachable-set-convergence} gives uniform convergence of all
reachable sets to the endpoint $x^*$, since as $t \rightarrow \infty$,
\[
\sup_{x\in X_R}
H_{d_A}\bigl(\widetilde F^{[t]}(x),\{x^*\}\bigr)
\leq
\sup_{x\in X_R}\widetilde{\rho}^{\,t}d_A(x,x^*)
\leq
\widetilde{\rho}^{\,t}R
\longrightarrow 0.
\]
Thus, the correspondence
restricted to $X_R$ satisfies the hypotheses of the strong converse.


\begin{remark}
Although the epoch correspondence is motivated by random-permutation cyclic
coordinate descent, the convergence result above is not probabilistic.  The
distribution over permutations is discarded, and every permutation is instead
treated as an admissible branch of the correspondence $\widetilde F$.
Consequently, the estimate
\[
    d_A(x^t,x^*)
    \leq
    \widetilde{\rho}^{\,t}d_A(x^0,x^*)
\]
holds pathwise for every deterministic, adaptive, or even adversarial sequence of
coordinate orderings.  The quantity
\[
    \widetilde{\rho}
    =
    \max_{\pi\in\mathfrak S_n}\|M_\pi\|_A
\]
is a sharp one-epoch Hausdorff contraction factor of the full correspondence
under the energy metric, and it yields the convergence estimate with factor
$\widetilde{\rho}^{\,2t}$.  Thus, the result should be interpreted as a certificate for robust convergence (or as a topological perspective for the convergence), rather than as an improvement of the convergence rate for RPCD. Since it maximizes over
all permutations and repeatedly applies the worst one-step factor, the bound
may be more conservative than previous asymptotic analysis.
\end{remark}

\section{Omitted Proofs in Section \ref{sec:nadler-continuous-local-opt}}
\label{sec:appendix-nadler-complexity}
\subsection{Hölder-continuous local search}
\label{subsec:appendix-holder-localopt}

We introduce a Hölder-continuous analogue of
\textsc{Continuous-LocalOpt}. Throughout this subsection, fix a rational
constant $\beta\in(0,1]$. The exponent $\beta$ is not part of the input.

\begin{definition}[$\beta$-\textsc{H\"older-LocalOpt}]
\label{def:holder-localopt}
An instance consists of arithmetic circuits
$f\colon[0,1]^3\to[0,1]^3$ and
$p\colon[0,1]^3\to[0,1]$, together with positive rational parameters
$\delta$ and $\lambda$. A valid output is one of the following.

\begin{enumerate}
\item[\textnormal{(HO1)}] A point $x\in[0,1]^3$ satisfying
$p(f(x))\geq p(x)-\delta$.

\item[\textnormal{(HO2)}] Points $x,y\in[0,1]^3$ satisfying
$\|f(x)-f(y)\|_1>\lambda\|x-y\|_1^\beta$.

\item[\textnormal{(HO3)}] Points $x,y\in[0,1]^3$ satisfying
$|p(x)-p(y)|>\lambda\|x-y\|_1^\beta$.
\end{enumerate}
\end{definition}

We claim that the problem is total. If either asserted Hölder bound fails, then an output of type (HO2) or (HO3) exists. Otherwise, suppose for contradiction that
no output of type (HO1) exists. Starting from an arbitrary $x_0\in[0,1]^3$,
define $x_{t+1}:=f(x_t)$. Then
$p(x_{t+1})<p(x_t)-\delta$ for every $t$, and hence
$p(x_T)<p(x_0)-T\delta$. Choosing an integer $T>1/\delta$ gives
$p(x_T)<0$, contradicting the fact that $p$ takes values in $[0,1]$.

We next give the interpolation lemma used in the membership reduction. Here, we use the simplex-based interpolation method from \cite[Section 1.2]{Davies1996}. However, we record the following implementation and quantitative properties because, in addition
to the usual interpolation facts, our reduction requires an explicit
polynomial-size arithmetic circuit and a Lipschitz bound with respect to
the $\ell_1$ norm.

\begin{lemma}[Efficient simplex-based interpolation]
\label{lem:pn-kuhn}
Let $g:[0,1]^n\to[0,1]^r$ be represented by a well-behaved circuit,
and let $h=2^{-m}$ for an integer $m\geq1$. One can construct a
well-behaved circuit $I_hg:[0,1]^n\to[0,1]^r$, in time polynomial
in $n,r,m$ and the description length of $g$, with the following
properties.
\begin{enumerate}
\item[(i)] At each input $x$, $I_hg(x)$ is a convex combination of
the values of $g$ at $n+1$ grid vertices, each at $\ell_1$-distance
at most $nh$ from $x$. These vertices and weights are computable
within the same bound.

\item[(ii)] For all $x,y$, one has
$\|I_hg(x)-I_hg(y)\|_1\leq(r/h)\|x-y\|_1$.

\item[(iii)] If every vertex $v$ used at $x$ satisfies
$\|g(x)-g(v)\|_1\leq\lambda\|x-v\|_1^\beta$, then
$\|g(x)-I_hg(x)\|_1\leq\lambda(nh)^\beta$.
\end{enumerate}
\end{lemma}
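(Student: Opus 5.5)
We will use the standard Kuhn (Freudenthal) triangulation of the grid $h\mathbb{Z}^n\cap[0,1]^n$. Given $x$, we first compute the base vertex $v_0=h\lfloor x/h\rfloor$ coordinatewise, with the convention that $x_i=1$ uses the cell $[1-h,1]$. We then form the local coordinates $\theta=(x-v_0)/h\in[0,1]^n$ and sort them in decreasing order, $\theta_{\sigma(1)}\ge\cdots\ge\theta_{\sigma(n)}$. The simplex vertices are $v_j=v_0+h\sum_{i\le j}e_{\sigma(i)}$ for $j=0,\dots,n$. The barycentric weights are
\[
w_0=1-\theta_{\sigma(1)},\qquad w_j=\theta_{\sigma(j)}-\theta_{\sigma(j+1)},\qquad w_n=\theta_{\sigma(n)},
\]
and we set $I_hg(x)=\sum_j w_j\,g(v_j)$.

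\textbf{Circuit construction.} The floor $\lfloor x_i/h\rfloor$ is extracted by binary search with $m$ comparison gates. Each step compares $x_i$ with a threshold and subtracts a multiple of $2^{-k}$, and multiplication by a comparison bit can be simulated with $\min/\max$ against constants. Sorting uses a sorting network of $\min/\max$ gates. The permutation $\sigma$ itself is not needed: vertex $v_j$ has coordinate $i$ raised exactly when $\theta_i$ is among the $j$ largest, which pairwise comparisons decide, with ties broken by index so the choice is consistent. We then evaluate $n+1$ copies of $g$, at cost polynomial in $n$. The only true multiplications are the products $w_j\cdot g(v_j)_\ell$, which adds a single level, so well-behavedness is preserved up to adjusting the $\log_2 S$ budget with the larger size. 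Property (i) is then immediate: the weights are nonnegative and sum to one, and $\|x-v_j\|_1\le\|x-v_0\|_1+h\le nh$. The case $j=0$ gives $\le nh$ directly, and the same bound holds for every $j$ because $|x_i-(v_j)_i|\le h$ in each coordinate.

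\textbf{Lipschitz bound (ii).} This is the main obstacle, because the interpolant must be globally continuous across simplices and cells. We rely on the standard fact from \cite{Davies1996} that the Kuhn interpolant is the unique continuous piecewise-affine function agreeing with $g$ on grid vertices, since the triangulation is a genuine simplicial complex. With consistent tie-breaking, adjacent formulas agree on shared faces, including at cell boundaries where $\theta_i\in\{0,1\}$.

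Continuity lets us bound the local gradient on each simplex and integrate along the segment $[x,y]$, which crosses finitely many simplices. On a simplex with $x=v_0+h\theta$, the interpolant can be written as
\[
I_hg=g(v_0)+\sum_j\theta_{\sigma(j)}\bigl(g(v_j)-g(v_{j-1})\bigr).
\]
Hence the partial derivative in direction $e_{\sigma(j)}$ equals $(g(v_j)-g(v_{j-1}))/h$, whose $\ell_1$ norm is at most $r/h$ because $g$ maps into $[0,1]^r$. Moving $x$ by $\Delta$ therefore changes the value by at most $(r/h)\|\Delta\|_1$. Summing this over the pieces of the segment gives (ii).

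\textbf{Approximation (iii).} Convexity gives
\[
\|g(x)-I_hg(x)\|_1\le\sum_j w_j\|g(x)-g(v_j)\|_1\le\sum_j w_j\,\lambda\|x-v_j\|_1^\beta\le\lambda(nh)^\beta,
\]
using (i) in the last step.
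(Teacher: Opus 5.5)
Your proposal is correct and follows essentially the same route as the paper: the same Kuhn simplex with its vertices and telescoping weights, the per-simplex affine formula whose $\ell_1$ slope is at most $r/h$, integrated along a subdivided segment for (ii), and convexity for (iii). The differences are cosmetic. You cite the face-to-face consistency of the Kuhn triangulation from \cite{Davies1996}, whereas the paper proves continuity directly via tie exchanges and shared cube faces. You also use a min/max sorting network instead of rank-indicator sums, which saves one layer of true multiplications. One slip: your chain $\|x-v_j\|_1\le\|x-v_0\|_1+h\le nh$ is not valid as written, but the coordinatewise bound $|x_i-(v_j)_i|\le h$ you give immediately afterwards is the correct justification.
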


\begin{proof}
\noindent\textbf{Construction and part (i).}
Partition $[0,1]^n$ into grid cubes of side length $h$. For an input
$x$, choose a cube containing $x$ as follows: in each coordinate, a
positive grid boundary belongs to the interval immediately to its left,
and zero belongs to the first interval. Write $c$ for the lower corner
of this cube and set $t_i=(x_i-c_i)/h$. Thus $t_i\in[0,1]$ records the
position of $x_i$ within its chosen interval.

Choose a permutation $\pi$ satisfying
$t_{\pi(1)}\geq\cdots\geq t_{\pi(n)}$, breaking ties by increasing index.
Starting from $c$, increase one coordinate at a time by $h$, in this
order, to obtain the vertices
\begin{equation}
v_0=c,\qquad
v_j=c+h\sum_{\ell=1}^j e_{\pi(\ell)}
\quad(1\leq j\leq n).
\label{eq:pn-kuhn-vertices}
\end{equation}
Their convex hull is the selected simplex. Define the weights
\begin{equation}
w_0=1-t_{\pi(1)},\qquad
w_j=t_{\pi(j)}-t_{\pi(j+1)}\ (1\leq j<n),\qquad
w_n=t_{\pi(n)}.
\label{eq:pn-kuhn-weights}
\end{equation}
The ordering makes these weights nonnegative, and their sum telescopes
to one. Moreover, coordinate $\pi(\ell)$ has been increased precisely
at the vertices $v_\ell,\ldots,v_n$, whose total weight is
$\sum_{j=\ell}^n w_j=t_{\pi(\ell)}$. Consequently,
$x=\sum_{j=0}^n w_jv_j$.

Set $I_hg(x)=\sum_{j=0}^n w_jg(v_j)$. This is a convex combination of
$n+1$ values in $[0,1]^r$, so it also belongs to $[0,1]^r$.
Every $v_j$ lies in the same cube as $x$, and hence
$\|x-v_j\|_1\leq nh$. This proves the geometric claims in part (i);
the computational claim is proved below.

\smallskip
\noindent\textbf{Continuity of $I_hg$.}
Within a fixed simplex, the vertices are fixed and the weights are
affine functions of $x$. Thus the interpolation formula is affine
there. It remains to check that the formulas agree when $x$ lies on a
boundary between simplices or between grid cubes.

First consider different simplices in the same cube. Their formulas
can meet only where some normalized coordinates are equal. Suppose
$t_{\pi(j)}=t_{\pi(j+1)}$. Exchanging these two coordinates in the order
changes only $v_j$: all earlier vertices are unchanged, and after both
coordinates have been increased, the later vertices are unchanged as
well. The weight of the changed vertex is
$w_j=t_{\pi(j)}-t_{\pi(j+1)}=0$. Therefore the interpolated value is
unchanged. Any two orderings of tied coordinates are related by such
adjacent exchanges, so all formulas within the cube agree on their
common boundaries.

Next consider two cubes sharing a face perpendicular to coordinate
$i$. On this face, the normalized coordinate is $t_i=1$ in the cube
on the left and $t_i=0$ in the cube on the right; the other normalized
coordinates agree. By the preceding tie argument, we may place
coordinate $i$ first in the left cube's ordering and last in the right
cube's ordering. In the left cube, only $v_0$ lies outside the shared
face, and its weight is $w_0=1-t_i=0$. In the right cube, only $v_n$
lies outside that face, and its weight is $w_n=t_i=0$.
After discarding these zero-weight vertices, both formulas start at
the lower corner of the shared face and increase the remaining
coordinates in exactly the same order. Their weights are also the
same, since they are formed from the same ordered coordinates.
Thus the two formulas agree on the shared face.

Repeating this argument handles intersections of several grid faces.
The finitely many affine formulas therefore agree on all overlaps of
the closed simplices, and together define a continuous function on
$[0,1]^n$. Notice that this argument does not require $g$ itself to be
continuous.

\smallskip
\noindent\textbf{Part (ii): the Lipschitz bound.}
We first prove the bound when $x$ and $y$ belong to the same closed
simplex. By the boundary agreement just proved, both values can be
computed using this simplex's vertices $v_0,\ldots,v_n$ and coordinate
order $\pi$, even if the boundary convention selects another simplex.
Collecting the coefficients of the vertex values in the weighted sum gives
\begin{equation}
I_hg(x)=g(v_0)+\sum_{j=1}^n
\frac{x_{\pi(j)}-c_{\pi(j)}}{h}
\bigl(g(v_j)-g(v_{j-1})\bigr).
\label{eq:pn-kuhn-affine}
\end{equation}
Each coordinate of $g(v_j)-g(v_{j-1})$ lies in $[-1,1]$, so
$\|g(v_j)-g(v_{j-1})\|_1\leq r$. Subtracting the two affine formulas
and applying the triangle inequality yields
\begin{equation*}
\begin{aligned}
\|I_hg(x)-I_hg(y)\|_1
&\leq \frac1h\sum_{j=1}^n
|x_{\pi(j)}-y_{\pi(j)}|
\,\|g(v_j)-g(v_{j-1})\|_1\\
&\leq \frac rh\sum_{j=1}^n|x_{\pi(j)}-y_{\pi(j)}|
=\frac rh\|x-y\|_1.
\end{aligned}
\end{equation*}

For arbitrary $x,y$, subdivide the straight segment from $x$ to $y$
into finitely many subsegments, each contained in a closed simplex.
Such a subdivision exists because there are finitely many simplices
and the intersection of a segment with a convex simplex is an interval,
a point, or the empty set. Write the subdivision points as
$z_k=x+\theta_k(y-x)$, where
$0=\theta_0<\cdots<\theta_M=1$. Continuity ensures that the formula from
either neighboring simplex gives the same value at a subdivision point.
Applying the preceding bound to each subsegment gives
\begin{equation*}
\begin{aligned}
\|I_hg(x)-I_hg(y)\|_1
&\leq\sum_{k=1}^M\|I_hg(z_k)-I_hg(z_{k-1})\|_1\\
&\leq\frac rh\sum_{k=1}^M\|z_k-z_{k-1}\|_1
=\frac rh\|x-y\|_1.
\end{aligned}
\end{equation*}
The final equality follows from
$z_k-z_{k-1}=(\theta_k-\theta_{k-1})(y-x)$ and
$\sum_{k=1}^M(\theta_k-\theta_{k-1})=1$.
This proves part (ii), including at simplex boundaries.

\smallskip
\noindent\textbf{Part (iii): the approximation bound.}
Fix $x$ and suppose that the stated comparisons hold at its selected
vertices. Since the weights are nonnegative and sum to one,
\begin{equation*}
\begin{aligned}
\|g(x)-I_hg(x)\|_1
\leq\sum_{j=0}^n w_j\|g(x)-g(v_j)\|_1
\leq\lambda\sum_{j=0}^n w_j\|x-v_j\|_1^\beta
\leq\lambda(nh)^\beta.
\end{aligned}
\end{equation*}
This proves part (iii).

\smallskip
\noindent\textbf{Polynomial-time circuit construction.}
Let $S$ be the description length of the circuit for $g$.
The circuit computes only the simplex selected by its input; it does
not enumerate the grid or all possible coordinate orderings. To locate the cube, process each coordinate $x_i$ by $m$ successive
bisections. Initialize $u_0=x_i$ and, for $\ell=1,\ldots,m$, compute
\begin{equation*}
b_\ell=\mathbf 1[u_{\ell-1}>1/2],\qquad
u_\ell=2u_{\ell-1}-b_\ell.
\end{equation*}
Here $\mathbf 1[\cdot]$ denotes the indicator of the stated condition.
The bit $b_\ell$ selects a half-interval, and $u_\ell$ gives the
rescaled position within that half. Inductively, $u_\ell\in[0,1]$ and
$x_i=\sum_{a=1}^{\ell}2^{-a}b_a+2^{-\ell}u_\ell$.
Thus $c_i=\sum_{\ell=1}^m2^{-\ell}b_\ell$ and $t_i=u_m$.
The strict comparison assigns a midpoint to its left half, exactly
implementing the chosen grid-boundary convention. This stage uses
$O(nm)$ gates.

To determine the coordinate order, compute the rank of each $t_i$:
\begin{equation*}
\rho_i=1+\sum_{k<i}\mathbf 1[t_k\geq t_i]
          +\sum_{k>i}\mathbf 1[t_k>t_i].
\end{equation*}
This counts the coordinates preceding $i$ in decreasing order, with
smaller indices preceding larger ones in a tie. Hence the ranks are
exactly $1,\ldots,n$. All ranks can be computed using $O(n^2)$ gates.
Non-strict comparisons and equality tests are also available using a
constant number of strict comparisons and arithmetic gates. The sorted coordinates and the vertices can now be obtained without
explicitly outputting the permutation:
\begin{equation*}
t_{\pi(j)}=\sum_{i=1}^n t_i\mathbf 1[\rho_i=j],\qquad
(v_j)_i=c_i+h\mathbf 1[\rho_i\leq j].
\end{equation*}
The first formula applies for $1\leq j\leq n$, and the second for
$0\leq j\leq n$. These formulas and
\eqref{eq:pn-kuhn-weights} compute all vertices and weights using
$O(n^2)$ additional gates. Finally, evaluate $n+1$ copies of $g$ in
parallel at these vertices and form the weighted sum, coordinate by
coordinate. The resulting circuit has
$O(nm+n^2+(n+1)S+nr)$ gates. All new rational constants have
$O(m+\log n)$-bit encodings, so its description and construction time
are polynomial in $n,r,m,S$. In particular, the cost depends on the
mesh depth $m$, not on the $2^{mn}$ grid cubes.

For completeness, consider the well-behavedness condition. Cube selection and rank computation
use no true multiplications. Selecting the sorted coordinates adds one
layer of true multiplications, and forming the weighted sum adds one
more. Since the copies of $g$ are parallel, the true-multiplication
depth is at most that of $g$ plus two. Under the logarithmic
true-multiplication depth convention for well-behaved circuits,
padding the circuit by a constant factor with redundant additions of
zero therefore preserves well-behavedness and the polynomial size
bound. For a rational input of encoding length $B$, the vertices and weights
have encoding length $\mathrm{poly}(n,m,B)$. The well-behaved copies of
$g$ can therefore be evaluated exactly, and the entire computation takes
time in $\mathrm{poly}(n,r,m,S,B)$. This proves the computational claim in
part (i) and completes the construction.
\end{proof}

Now we prove the $\mathsf{CLS}$-completeness of $\beta$-\textsc{H\"older-LocalOpt}. Our reductions in this section work for any polynomial dimension $n$ (i.e. on the domain $[0,1]^n$).
\begin{theorem}
\label{thm:holder-localopt-cls}
For every fixed rational $\beta\in(0,1]$, the problem
$\beta$-\textsc{H\"older-LocalOpt} is in $\mathsf{CLS}$.
\end{theorem}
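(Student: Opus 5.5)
We reduce $\beta$-\textsc{H\"older-LocalOpt} to \textsc{Continuous-LocalOpt}, working over $[0,1]^n$ as the paper allows. Given an instance $(f,p,\delta,\lambda)$, we replace both circuits by their Kuhn-simplicial interpolants from Lemma~\ref{lem:pn-kuhn}: $\tilde f:=I_hf$ and $\tilde p:=I_hp$, with $h=2^{-m}$ chosen so that $\lambda(nh)^\beta\le\delta/8$. Since $\beta$ is a fixed rational, it suffices to take $m=\lceil\beta^{-1}\log_2(8\lambda n^\beta/\delta)\rceil+O(1)$, which is polynomial in the input size. By Lemma~\ref{lem:pn-kuhn}(i), $\tilde f$ maps into $[0,1]^n$ and $\tilde p$ into $[0,1]$. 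By part (ii), both are Lipschitz with constant $\max\{n,1\}/h$. We output the \textsc{Continuous-LocalOpt} instance $(\tilde f,\tilde p,\tilde\delta,\tilde\lambda)$ with $\tilde\lambda:=n/h$ and $\tilde\delta:=\delta/8$. Because $\tilde\lambda$ is a valid Lipschitz constant, outputs (CO2) and (CO3) can never occur. Every returned solution is therefore a point $x$ of type (CO1), that is, $\tilde p(\tilde f(x))\ge\tilde p(x)-\delta/8$.

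The decoder takes such an $x$ and, in polynomial time, either certifies a H\"older violation or shows that $x$ satisfies (HO1). It computes the $n+1$ vertices $v_j$ used by $\tilde p$ at $x$, and checks $|p(x)-p(v_j)|\le\lambda\|x-v_j\|_1^\beta$ for each $j$. Any failure is an (HO3) witness $(x,v_j)$. It checks the same for $f$ at $x$, where a failure gives an (HO2) witness. Set $y:=f(x)$ and $\tilde y:=\tilde f(x)$. The decoder also checks the vertices of $p$'s interpolant at $\tilde y$ against $\tilde y$, producing an (HO3) witness on failure. Finally it checks the pair $(y,\tilde y)$ directly for (HO3).

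If every check passes, we chain bounds using Lemma~\ref{lem:pn-kuhn}(iii):
\[
|p(x)-\tilde p(x)|\le\lambda(nh)^\beta\le\delta/8,\qquad |p(\tilde y)-\tilde p(\tilde y)|\le\delta/8,\qquad \|y-\tilde y\|_1\le\lambda(nh)^\beta.
\]
The direct check on $(y,\tilde y)$ then gives
\[
|p(y)-p(\tilde y)|\le\lambda\bigl(\lambda(nh)^\beta\bigr)^\beta .
\]

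This last term is the main obstacle. It involves a composed H\"older exponent $\beta^2$ and a factor $\lambda^{1+\beta}$, so we strengthen the choice of $h$ to ensure $\lambda^{1+\beta}(nh)^{\beta^2}\le\delta/8$. This is still a polynomial number of bits $m$, because $\beta$ is fixed and $1/\beta^2=O(1)$; this is exactly where the dependence on fixed $\beta$ enters. With this choice,
\[
p(f(x))\ge\tilde p(\tilde f(x))-\delta/4\ge\tilde p(x)-3\delta/8\ge p(x)-\delta/2,
\]
so $x$ satisfies (HO1).

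Well-behavedness of the interpolating circuits comes from Lemma~\ref{lem:pn-kuhn}. Since all witnesses are rational points evaluated by well-behaved circuits, the decoder runs in polynomial time, and this completes the membership argument.
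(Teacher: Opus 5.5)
Your proposal is correct and follows the paper's proof essentially step for step. It uses the same Kuhn interpolants with target Lipschitz constant $n/h$ (so that (CO2) and (CO3) cannot occur), the same four families of decoder checks at $V(x)$, $V(\tilde f(x))$ and the pair $(f(x),\tilde f(x))$, and the same $\lambda^{1+\beta}(nh)^{\beta^2}$ term dictating the mesh. The differences are cosmetic: you allot $\delta/8$ to each error term and impose the two mesh inequalities separately, whereas the paper merges them into $3\bar\lambda^2(nh)^{\beta^2}\le\delta/2$ using $nh\le 1$ and $\bar\lambda=\max\{1,\lambda\}$, and it also states explicitly that each H\"older test is checked exactly as $A^b\le\lambda^bB^a$ for $\beta=a/b$.
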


\begin{proof}
Let $(f,p,\delta,\lambda)$ be an instance of
$\beta$-\textsc{H\"older-LocalOpt} on $[0,1]^n$, with $f$ and $p$
represented by well-behaved circuits. If $\delta\geq1$, the point $0$
satisfies \textnormal{(HO1)}, since
$p(f(0))\geq0\geq p(0)-\delta$. Hence assume $0<\delta<1$.

\smallskip
Let $\bar\lambda=\max\{1,\lambda\}$ and choose a dyadic mesh
$h=2^{-m}$, with $m\geq1$, such that $nh \le 1$ and
\begin{equation}
3\bar\lambda^2(nh)^{\beta^2}\leq\delta/2.
\label{eq:pn-holder-mesh}
\end{equation}
The second inequality reserves an error allowance of $\delta/2$ for
transferring a solution from the interpolated functions back to $f$
and $p$. We verify below that this mesh can be chosen in polynomial
time.

Set $\widetilde f=I_hf$ and $\widetilde p=I_hp$. By
Lemma~\ref{lem:pn-kuhn}, these functions have the required ranges,
and their Lipschitz constants are at most $n/h$ and $1/h$,
respectively. Construct the \textsc{Continuous-LocalOpt} instance
$(\widetilde f,\widetilde p,\delta/2,n/h)$. Neither
\textnormal{(CO2)} nor \textnormal{(CO3)} is a valid answer to this
instance, because both interpolants satisfy the stated Lipschitz
bound, regardless of the regularity of $f$ and $p$.
Thus every target answer is a point $x$ satisfying
$\widetilde p(\widetilde f(x))\geq\widetilde p(x)-\delta/2$.
Fix such an $x$ and write $z=\widetilde f(x)$.

For a point $u$, let $V(u)$ denote the $n+1$ grid vertices selected
by the interpolation construction at $u$. The vertices depend only
on $u$ and the grid, not on the function being interpolated.
We check the following inequalities:
\begin{enumerate}
    \item $\|f(x)-f(v)\|_1 \leq\lambda\|x-v\|_1^\beta\ \text{for every }v\in V(x),$
    \item $|p(x)-p(v)| \leq\lambda\|x-v\|_1^\beta\ \text{for every }v\in V(x),$
    \item $|p(z)-p(v)| \leq\lambda\|z-v\|_1^\beta\ \text{for every }v\in V(z),$
    \item $|p(f(x))-p(z)|\leq\lambda\|f(x)-z\|_1^\beta.$
\end{enumerate}
If the first comparison fails, return $(x,v)$ as an
\textnormal{(HO2)} answer. Any other failed comparison gives
an \textnormal{(HO3)} answer, using the corresponding pair
$(x,v)$, $(z,v)$, or $(f(x),z)$.

The last comparison is needed in addition to the vertex comparisons:
it controls the change in the original potential $p$ when its argument
is changed from $f(x)$ to the interpolated update $z$.
We now assume that every comparison passes and show that $x$
satisfies \textnormal{(HO1)}. Importantly, we assume only these
finitely many inequalities, not a global H\"older condition.

Let $\eta=\lambda(nh)^\beta$. By Lemma~\ref{lem:pn-kuhn}(iii),
the comparisons for $f$ give $\|f(x)-z\|_1\leq\eta$.
By the same lemma, the second family gives
$|p(x)-\widetilde p(x)|\leq\eta$, and the third gives
$|p(z)-\widetilde p(z)|\leq\eta$.
Finally, the last comparison gives
$|p(f(x))-p(z)|\leq\lambda\eta^\beta$.
These are the three potential errors that must be accounted for:
interpolation at $x$, interpolation at $z$, and changing the update
from $f(x)$ to $z$.

Because $nh\leq1$ and $\beta^2\leq\beta$, we have
$(nh)^\beta\leq(nh)^{\beta^2}$. Also,
$\lambda\leq\bar\lambda^2$ and
$\lambda^{1+\beta}\leq\bar\lambda^2$, since
$\bar\lambda\geq\max\{1,\lambda\}$ and $1+\beta\leq2$.
Consequently, the total error satisfies
\begin{equation}
\begin{aligned}
|p(x)-\widetilde p(x)|+|p(z)-\widetilde p(z)|
  +|p(f(x))-p(z)|
&\leq 2\eta+\lambda\eta^\beta
 =2\lambda(nh)^\beta
   +\lambda^{1+\beta}(nh)^{\beta^2}\\
&\leq 3\bar\lambda^2(nh)^{\beta^2}
 \leq\delta/2.
\end{aligned}
\label{eq:pn-holder-errors}
\end{equation}
We can therefore transfer the target inequality back to the original
functions one term at a time:
\begin{equation*}
\begin{aligned}
p(f(x))
&\geq p(z)-\lambda\eta^\beta\\
&\geq\widetilde p(z)-\eta-\lambda\eta^\beta\\
&\geq\widetilde p(x)-\delta/2-\eta-\lambda\eta^\beta\\
&\geq p(x)-\delta/2-2\eta-\lambda\eta^\beta\\
&\geq p(x)-\delta.
\end{aligned}
\end{equation*}
The third line uses the target answer and $z=\widetilde f(x)$;
the last line uses \eqref{eq:pn-holder-errors}.
Thus $x$ is an \textnormal{(HO1)} answer.

Now we show that the dyadic mesh can be chosen in polynomial time. Write the fixed exponent as $\beta=a/b$, where $a,b$ are fixed
positive integers. A mesh satisfying \eqref{eq:pn-holder-mesh} can be
found by increasing $m$ until both inequalities hold. This requires
only $m=O_\beta(\log n+\log\bar\lambda+\log(1/\delta))$.
The second mesh inequality is equivalent to the rational comparison
\begin{equation*}
(3\bar\lambda^2)^{b^2}(n2^{-m})^{a^2}
\leq(\delta/2)^{b^2}.
\end{equation*}
Its exponents are constants, so it can be checked exactly in polynomial
time by \cite[Proposition 3.7]{Etessami2014}. Lemma~\ref{lem:pn-kuhn} then constructs both interpolant circuits
in polynomial time. The parameter $n/h=n2^m$ also has polynomial
encoding length, namely $O(m+\log n)$ bits.

The decoder uses only the vertices in $V(x)$ and $V(z)$, hence
$O(n)$ circuit evaluations and comparisons; it never searches the
whole grid. For any nonnegative rational numbers $A,B$, a comparison
$A\leq\lambda B^\beta$ is equivalent to
$A^b\leq\lambda^bB^a$, so all the H\"older tests can likewise be
performed exactly using fixed integer powers. Well-behavedness and
Lemma~\ref{lem:pn-kuhn} ensure that $f(x)$, $z$, the selected vertices,
and all evaluated values have encoding lengths polynomial in the
instance size and the encoding length of $x$.
Thus the instance construction and the solution decoder are both
polynomial-time. Every target solution yields an answer of type
\textnormal{(HO1)}, \textnormal{(HO2)}, or \textnormal{(HO3)}, proving
membership in $\mathsf{CLS}$.
\end{proof}

\begin{theorem}
\label{thm:holder-localopt-cls-hard}
For every fixed rational $\beta\in(0,1]$, the problem
$\beta$-\textsc{H\"older-LocalOpt} is $\mathsf{CLS}$-hard.
\end{theorem}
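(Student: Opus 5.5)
We reduce \textsc{Continuous-LocalOpt} (Definition~\ref{def:continuous-local-opt}) to $\beta$-\textsc{H\"older-LocalOpt}. The plan is the identity reduction with a parameter adjustment. Given $(f,p,\delta,\lambda)$ on $[0,1]^3$, output the instance $(f,p,\delta,\lambda')$ with the same circuits and $\lambda':=\lambda\cdot 3^{1-\beta}$.

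The key observation compares Lipschitz and H\"older bounds on the domain. Every $x,y\in[0,1]^3$ satisfy $\|x-y\|_1\le 3$. Since $1-\beta\ge 0$, this gives
\[
\|x-y\|_1=\|x-y\|_1^{\beta}\,\|x-y\|_1^{1-\beta}\le 3^{1-\beta}\|x-y\|_1^{\beta}.
\]
So a Lipschitz bound with constant $\lambda$ implies a H\"older bound with constant $\lambda'$. Equivalently, a H\"older violation implies a Lipschitz violation.

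Decoding then proceeds case by case.
\begin{itemize}
\item An (HO1) answer is verbatim a (CO1) answer, since the circuits, the potential condition and $\delta$ are unchanged.
\item An (HO2) answer $(x,y)$ satisfies $\|f(x)-f(y)\|_1>\lambda'\|x-y\|_1^\beta$. Note that $x\neq y$, since otherwise both sides are $0$. Combining with the inequality above gives $\|f(x)-f(y)\|_1>\lambda\|x-y\|_1$, which is a (CO2) answer.
\item An (HO3) answer yields a (CO3) answer in exactly the same way.
\end{itemize}
For $\beta=1$ we have $\lambda'=\lambda$ and the reduction is literally the identity.

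The only genuine obstacle is computational: $3^{1-\beta}$ is generally irrational, while $\lambda'$ must be rational. To handle this, write $\beta=a/b$ with $a,b$ fixed, and pick any rational $\lambda'\ge\lambda\,3^{1-\beta}$ of polynomial size. The integer $\lambda'':=\lceil\lambda\rceil\cdot 3$ suffices, since $3^{1-\beta}\le 3$. The decoding argument only needs $\lambda'\ge\lambda 3^{1-\beta}$, so it goes through unchanged. The decoder's checks involve the powers $\|x-y\|_1^{\beta}$, but they are performed exactly as in the proof of Theorem~\ref{thm:holder-localopt-cls}: raise both sides to the fixed power $b$.

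If one prefers to state hardness for polynomial dimension $n$, the same argument applies with $3$ replaced by $n$. Hardness in $[0,1]^3$ already suffices, since $\beta$-\textsc{H\"older-LocalOpt} is defined on $[0,1]^3$.

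Combined with Theorem~\ref{thm:holder-localopt-cls}, this shows that $\beta$-\textsc{H\"older-LocalOpt} is $\mathsf{CLS}$-complete.
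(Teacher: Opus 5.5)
Your proof is correct and matches the paper's: it also leaves $(f,p,\delta)$ unchanged, uses $\|x-y\|_1\le 3^{1-\beta}\|x-y\|_1^\beta\le 3\|x-y\|_1^\beta$ on $[0,1]^3$, and passes (HO1)/(HO2)/(HO3) answers through directly as (CO1)/(CO2)/(CO3), with the constant set to $3\lambda$ (already rational, so your ceiling is unnecessary). Also, the decoder never needs to evaluate any H\"older powers, because each target answer is already a source answer of the corresponding type.
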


\begin{proof}
Let $\mathcal{I}_{\mathrm{CO}}=(f,p,\delta,\lambda)$ be an instance of
\textsc{Continuous-LocalOpt}. We construct the
$\beta$-\textsc{H\"older-LocalOpt} instance $\mathcal{I}_{\mathrm{HO}}
    :=
    (f,p,\delta,3\lambda).$
This construction clearly has polynomial size. An output of type (HO1) is exactly an output of type (CO1). Now suppose
that $(x,y)$ is an output of type (HO2), and set
$t:=\|x-y\|_1$. Since $x,y\in[0,1]^3$, one has $0\leq t\leq3$. For
$t>0$,
$t^{1-\beta}\leq3^{1-\beta}\leq3$, and hence $t\leq3t^\beta$.
Therefore,
\[
    \|f(x)-f(y)\|_1
    >
    3\lambda t^\beta
    \geq
    \lambda t,
\]
so $(x,y)$ is an output of type (CO2). The case $t=0$ cannot satisfy the
strict (HO2) inequality. The same argument shows that every output of
type (HO3) is an output of type (CO3).

Thus every solution of the constructed
$\beta$-\textsc{H\"older-LocalOpt} instance can be decoded in polynomial
time into a solution of the original \textsc{Continuous-LocalOpt}
instance. Since \textsc{Continuous-LocalOpt} is $\mathsf{CLS}$-complete,
$\beta$-\textsc{H\"older-LocalOpt} is $\mathsf{CLS}$-hard. 
\end{proof}
By Theorem \ref{thm:holder-localopt-cls} and \ref{thm:holder-localopt-cls-hard}, we can conclude the \textsf{CLS}-completeness of $\beta$-\textsc{H\"older-LocalOpt}:
\begin{theorem}
\label{thm:holder-localopt-cls-complete}
    For every fixed rational $\beta \in (0,1]$, the problem $\beta$-\textsc{H\"older-LocalOpt} is $\mathsf{CLS}$-complete.
\end{theorem}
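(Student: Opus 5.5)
This follows by combining Theorem~\ref{thm:holder-localopt-cls} and Theorem~\ref{thm:holder-localopt-cls-hard}. Fix a rational $\beta\in(0,1]$.

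\textbf{Membership.} Theorem~\ref{thm:holder-localopt-cls} reduces $\beta$-\textsc{H\"older-LocalOpt} to \textsc{Continuous-LocalOpt}. The reduction replaces $f$ and $p$ by their Kuhn interpolants $I_hf$ and $I_hp$ from Lemma~\ref{lem:pn-kuhn}. These are Lipschitz by construction, so outputs (CO2) and (CO3) cannot occur. A (CO1) answer is then decoded using finitely many local H\"older tests. If a test fails, it yields an (HO2) or (HO3) answer. If all tests pass, the accumulated error $3\bar\lambda^2(nh)^{\beta^2}\le\delta/2$ turns the (CO1) point into an (HO1) point. By definition, $\CLS$ is the class of total search problems many-one reducible to \textsc{Continuous-LocalOpt}, so $\beta$-\textsc{H\"older-LocalOpt}$\in\CLS$. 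Totality, which was argued directly after Definition~\ref{def:holder-localopt}, confirms that the problem is a total search problem.

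\textbf{Hardness.} Theorem~\ref{thm:holder-localopt-cls-hard} maps an instance $(f,p,\delta,\lambda)$ of \textsc{Continuous-LocalOpt} to the instance $(f,p,\delta,3\lambda)$. Each (HO$i$) output is decoded to the corresponding (CO$i$) output. This uses the inequality $t\le 3t^\beta$ for $t=\|x-y\|_1\in(0,3]$. Since \textsc{Continuous-LocalOpt} is $\CLS$-complete, and many-one reductions compose (compose the instance maps and compose the decoders in reverse), $\beta$-\textsc{H\"older-LocalOpt} is $\CLS$-hard.

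\textbf{Remaining checks.} Both theorems are stated for every fixed rational $\beta\in(0,1]$, so they apply to the same $\beta$. The membership proof works on $[0,1]^n$ for polynomial $n$, which includes the case $n=3$ of Definition~\ref{def:holder-localopt}. Finally, the hardness reduction passes the input circuits through unchanged, so the well-behavedness assumed in the membership proof is preserved. With these checks, the two directions give $\CLS$-completeness.
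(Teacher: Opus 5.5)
Your proposal is correct and matches the paper's approach: the paper likewise obtains the theorem by combining membership (Theorem~\ref{thm:holder-localopt-cls}, via Kuhn interpolation and local H\"older tests) with hardness (Theorem~\ref{thm:holder-localopt-cls-hard}, via the instance $(f,p,\delta,3\lambda)$ and the bound $t\le 3t^\beta$). Your extra checks on the fixed $\beta$, the dimension $n=3$, and well-behavedness are consistent with the paper's conventions.
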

\begin{remark}[The exponent as part of the input]
\label{rem:holder-exponent-input}
Our reductions fixes $\beta$ independently
of the input. The same proof allows $\beta$ to be supplied with the
instance provided that $1/\beta$ and both its numerator and denominator are polynomially bounded in the input
length. If a binary-encoded exponent is allowed to be
exponentially small, the mesh depth required by
Equation~\eqref{eq:pn-holder-mesh} may be exponential, so the proof
above would no longer establish a polynomial-time reduction.
\end{remark}
\subsection{The \textsf{CLS}-membership of $\beta$-\textsc{Projected-Nadler}}
\label{subsec:nadler-to-clo}

 Again, the reduction in this (and the following) section works for any polynomial dimension $n$.
\begin{lemma}
\label{lem:nadler-to-holder-localopt}
For every fixed rational $\beta\in(0,1]$, $\beta$-\textsc{Projected-Nadler} admits a
polynomial-time reduction to $\beta$-\textsc{H\"older-LocalOpt}.
\end{lemma}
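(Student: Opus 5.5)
The reduction takes the diagonal selection as the update and the normalized residual as the potential, exactly as sketched after Definition~\ref{def:projected-nadler}. Given an instance $(\Pi,d,\varepsilon_1,\varepsilon_2,L_P,L_d,D,k)$ on $[0,1]^n$, I would output the $\beta$-\textsc{H\"older-LocalOpt} instance
\[
f(x):=P(x)=\Pi(x,x),\qquad p(x):=r(x)/D=d(x,P(x))/D,\qquad \delta:=\Delta/D,
\]
where $\Delta=(1-k)\varepsilon_1-\varepsilon_2>0$. For the Hölder constant I would take
\[
\lambda:=\max\{L_P,\ L_d(n^{1-\beta}+L_P)/D\}.
\]

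\textbf{Well-posedness of the instance.} Since $\Pi(x,y)\in F(x)\subseteq X$, the map $f$ sends $X$ to $X$. Since $d$ takes values in $[0,D]$, the potential $p$ takes values in $[0,1]$. The circuit for $p$ is a composition of the circuit for $d$ with two parallel copies of the input and of $\Pi$. Its true-multiplication depth is therefore at most the sum of the depths of $d$ and $\Pi$. After padding by a constant factor with redundant gates, as in Lemma~\ref{lem:pn-kuhn}, it remains well-behaved and of polynomial size. All parameters are rational with polynomial encodings. The promise $\varepsilon_2<\varepsilon_1(1-k)$ forces $k<1$ and $\delta>0$.

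\textbf{Decoding (HO1).} Suppose $x$ satisfies $r(P(x))\ge r(x)-\Delta$. If $r(x)\le\varepsilon_1$, then $x$ is an (N1) output. Otherwise, set $y=P(x)$ and take the (N2) triple $(x,y,z)$ with $z=x$. Then $u=\Pi(x,x)=y$ and $v=\Pi(y,u)=P(y)$, so
\[
d(u,v)=r(y),\qquad d(x,y)=r(x).
\]
Using $r(x)>\varepsilon_1$ and $k<1$,
\[
r(y)\ \ge\ r(x)-(1-k)\varepsilon_1+\varepsilon_2\ >\ k\,r(x)+\varepsilon_2 ,
\]
because $(1-k)r(x)>(1-k)\varepsilon_1$. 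This is precisely \eqref{eq:pn-buffered-output}, so the triple is a valid (N2) output.

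\textbf{Decoding (HO2).} An (HO2) pair satisfies $\|P(x)-P(y)\|_1>\lambda\|x-y\|_1^\beta\ge L_P\|x-y\|_1^\beta$. It is therefore verbatim an (N3) output.

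\textbf{Decoding (HO3).} Suppose $(x,y)$ satisfies $|r(x)-r(y)|>D\lambda\|x-y\|_1^\beta$. I would test the following in order.
\begin{enumerate}
\item If $r(x)=0$ or $r(y)=0$, output that point as (N1).
\item Otherwise, if $(x,y)$ violates the $L_P$ Hölder bound, output it as (N3).
\item Otherwise, output the quadruple $a=x$, $b=P(x)$, $a'=y$, $b'=P(y)$ as (N4).
\end{enumerate}
In the third case, $a\ne b$ and $a'\ne b'$ because $d(a,a)=0$ while $r(x),r(y)>0$. Suppose, for contradiction, that \eqref{eq:pn-distance-violation} failed. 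Then, using $\|x-y\|_1\le n^{1-\beta}\|x-y\|_1^\beta$ on $[0,1]^n$,
\[
|r(x)-r(y)|\ \le\ L_d\bigl(\|x-y\|_1+L_P\|x-y\|_1^\beta\bigr)\ \le\ L_d(n^{1-\beta}+L_P)\|x-y\|_1^\beta\ \le\ D\lambda\|x-y\|_1^\beta,
\]
which contradicts the (HO3) inequality. Hence the quadruple is a valid (N4) output.

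All the tests involved are finitely many exact circuit evaluations. The Hölder comparisons can be performed by raising both sides to fixed integer powers, as in the proof of Theorem~\ref{thm:holder-localopt-cls}. The decoder therefore runs in polynomial time.

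\textbf{Main obstacle.} I expect the delicate step to be the (HO3) decoding. The potential is Hölder only through the mixed bound combining (N3) and (N4), so the choice of $\lambda$ must absorb both the Lipschitz term $\|x-y\|_1$, via the dimension factor $n^{1-\beta}$, and the Hölder term coming from $P$. In addition, the degenerate case $a=b$ must be routed to (N1) so that the (N4) side conditions hold.
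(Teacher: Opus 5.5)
Your proposal is correct and is essentially the paper's proof. It uses the same update $P$, potential $r/D$ and tolerance $\Delta/D$, the same \textnormal{(N1)}/\textnormal{(N2)} split with witness $(x,P(x),x)$ for \textnormal{(HO1)}, and the same \textnormal{(N1)}\,$\to$\,\textnormal{(N3)}\,$\to$\,\textnormal{(N4)} cascade for \textnormal{(HO3)}.

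Two small repairs are needed. First, $n^{1-\beta}$ is generally irrational (e.g.\ $n=3$, $\beta=1/2$), so your $\lambda$ is not a legal rational parameter; replace it by the rational bound $n\ge n^{1-\beta}$, as the paper does. Second, composing $d$ with $\Pi$ can make the true-multiplication depth the sum of the two circuits' depths, so the well-behavedness padding must be polynomial rather than by a constant factor.
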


\begin{proof}
Consider a well-formed instance of $\beta$-\textsc{Projected-Nadler}.
Let $\Delta=(1-k)\varepsilon_1-\varepsilon_2>0$ and set
\begin{equation}
f_H(x)=P(x),\qquad
p_H(x)=\frac{r(x)}D,\qquad
\delta_H=\frac{\Delta}{D},\qquad
\lambda_H=
\max\left\{L_P,\frac{L_d(n+L_P)}D\right\}.
\label{eq:pn-local-search-parameters}
\end{equation}
The construction ensure $f_H(X)\subseteq X$ and
$p_H(X)\subseteq[0,1]$.
A constant number of circuit compositions is used, so the circuits
remain well-behaved after polynomial padding. All parameters have
polynomial rational encoding length. We decode each possible answer.

\textbf{Output (HO1).}
Suppose that $q(P(x))\geq q(x)-\delta_{\mathrm H}$.
If $r(x)\leq\varepsilon_1$, return $x$ as an \textnormal{(N1)}
answer. Otherwise, since $q=r/D$ and $k<1$, we obtain
\begin{equation*}
\begin{aligned}
r(P(x))
\geq r(x)-D\delta_{\mathrm H}
=k\,r(x)+(1-k)(r(x)-\varepsilon_1)+\varepsilon_2
>k\,r(x)+\varepsilon_2.
\end{aligned}
\end{equation*}
This inequality gives an \textnormal{(N2)} witness: choose
$y:=P(x)$ and $z:=x$. Then
$u=\Pi(x,z)=y$ and $v=\Pi(y,u)=P(y)$, so
$d(u,v)=r(P(x))>k\,d(x,y)+\varepsilon_2$.
Thus $(x,y,z)$ is a valid \textnormal{(N2)} answer.

\textbf{Output (HO2).}
Since $\lambda_H\geq L_P$, a H\"older violation for $f_H=P$
immediately gives (N3).

\textbf{Output (HO3) and (HO4).}
Suppose
$|p_H(x)-p_H(y)|>\lambda_H\|x-y\|_1^\beta$.
If $x=P(x)$, the definition of residual gives $r(x)=0$,
so return $x$ as (N1); The same hold if $y=P(y)$.
Otherwise, both $(x,P(x))$ and $(y,P(y))$ are off-diagonal.
Let $t=\|x-y\|_1>0$. If
$\|P(x)-P(y)\|_1>L_Pt^\beta$, then we obtain a (N3) answer $(x,y)$.
If not, take $t\leq n$ and $t\leq nt^\beta$ to obtain
\begin{equation}
\begin{aligned}
|d(x,P(x))-d(y,P(y))|
&=D|p_H(x)-p_H(y)|\\
&>D\lambda_Ht^\beta
\geq L_d(n+L_P)t^\beta\\
&\geq L_d\bigl(t+\|P(x)-P(y)\|_1\bigr).
\end{aligned}
\label{eq:pn-regularity-decoder}
\end{equation}
Thus $(x,P(x),y,P(y))$ is an (N4) answer.

Clearly, every part is computable in polynomial time and returns a rational
witness of polynomial length. Together with Theorem~\ref{thm:holder-localopt-cls-complete}, this proves the \textsf{CLS}-membership.
\end{proof}

\subsection{The \textsf{CLS}-hardness of $\beta$-\textsc{Projected-Nadler}}
\label{subsec:clo-to-nadler}

For the reverse reduction, we encode the local-search potential directly into
the metric and use the update circuit as a singleton correspondence.

\begin{lemma}
\label{lem:clo-to-nadler}
For every fixed rational $\beta\in(0,1]$, \textsc{Continuous-LocalOpt} admits a polynomial-time reduction
to the problem of $\beta$-\textsc{Projected-Nadler}.
\end{lemma}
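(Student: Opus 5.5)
The plan is to encode the local-search potential into a metric that is uniformly discrete, and to use the update circuit as a singleton correspondence. Let $(f,p,\delta,\lambda)$ be a \textsc{Continuous-LocalOpt} instance; as in Theorem~\ref{thm:holder-localopt-cls}, we may assume $0<\delta<1$. I would output the following $\beta$-\textsc{Projected-Nadler} instance:
\[
\Pi(x,z):=f(x),\qquad
d(x,y):=\mathbf 1[x\neq y]\cdot\tfrac12\bigl(1+\max\{p(x),p(y)\}\bigr),\qquad D:=1,
\]
\[
k:=1-\delta/4,\qquad \varepsilon_1:=1/4,\qquad \varepsilon_2:=\delta/32,\qquad L_P:=3\lambda,\qquad L_d:=\lambda.
\]
These satisfy the promise $\varepsilon_2<\varepsilon_1(1-k)=\delta/16$. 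The indicator $\mathbf 1[x\neq y]$ is a sum of strict comparisons over coordinates, capped by $\min\{1,\cdot\}$. It introduces no true multiplications, and multiplying it against $\tfrac12(1+\max\{p(x),p(y)\})$ adds only one, so the circuit stays well-behaved after padding.

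First I would check the promises. The function $d$ is an ultrametric: distinct points are at distance $\tfrac12(1+\max)$, and $\max\{p(x),p(z)\}\le\max\{\max\{p(x),p(y)\},\max\{p(y),p(z)\}\}$. All nonzero distances are at least $1/2$, so every $d$-Cauchy sequence is eventually constant, which gives completeness. Each value $F(x)=\{f(x)\}$ is closed and bounded, and $P=f$.

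Next I would decode each possible output.
\begin{itemize}
\item \textbf{(N1).} If $r(x)=d(x,f(x))\le 1/4<1/2$, then $f(x)=x$. Hence $p(f(x))=p(x)$, which is a (CO1) answer.
\item \textbf{(N2).} Here $u=f(x)$ and $v=f(y)$ (or the symmetric choice). If $u=v$, then $d(u,v)=0$ and the inequality cannot hold. If $x=y$, then $u=v$. So $x\neq y$ and $u\neq v$. Write $M=\max\{p(x),p(y)\}$ and $M'=\max\{p(f(x)),p(f(y))\}$. The violation reads $1+M'>k(1+M)+2\varepsilon_2\ge 1+M-\delta/2$, so $M'>M-\delta/2$. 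If $M'=p(f(x))$, then $p(f(x))>p(x)-\delta$ and $x$ is a (CO1) answer; otherwise $y$ is.
\item \textbf{(N3).} This is a violation $\|f(x)-f(y)\|_1>3\lambda\|x-y\|_1^\beta$. It becomes (CO2) exactly as in Theorem~\ref{thm:holder-localopt-cls-hard}, using $t\le 3t^\beta$ on $[0,1]^3$. (In polynomial dimension $n$, replace $3$ by $n$.)
\item \textbf{(N4).} Off the diagonal, $d(a,b)=\tfrac12(1+\max\{p(a),p(b)\})$. Hence
\[
|d(a,b)-d(a',b')|\le \tfrac12\max\{|p(a)-p(a')|,|p(b)-p(b')|\}.
\]
A violation with $L_d=\lambda$ therefore forces $|p(a)-p(a')|>\lambda\|a-a'\|_1$ or $|p(b)-p(b')|>\lambda\|b-b'\|_1$. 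Evaluating both and returning the failing pair gives (CO3).
\end{itemize}

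All decoders use $O(1)$ circuit evaluations and rational comparisons; the Hölder test in (N3) uses fixed integer powers as before. This completes the reduction.

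The main obstacle is choosing the metric. The naive potential ultrametric $\max\{p(x),p(y)\}$ fails twice: it is not a metric when $p$ vanishes at two points, and it is not complete. The additive offset $1/2$ fixes both problems. The cost is that the contraction factor must be taken $\delta$-close to $1$, and the constants must be tuned so that the buffered (N2) inequality still translates into a loss of at most $\delta$ in potential.
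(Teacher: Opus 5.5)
Your proposal is correct and matches the paper's proof: the paper uses the same singleton correspondence $\Pi(x,z)=f(x)$ with the potential ultrametric $d_p(x,y)=1+\max\{p(x),p(y)\}$ for $x\neq y$ (twice your metric), correspondingly $D=2$, $k=1-\delta/2$, $\varepsilon_1=1/2$, $\varepsilon_2=\delta/8$, $L_P=3\lambda$, $L_d=\lambda$, and decodes (N1), (N3), (N4) exactly as you do. The only difference is cosmetic: you decode (N2) directly via $M'>M-\delta/2$, whereas the paper argues by contradiction that if neither $x$ nor $y$ satisfies (CO1) then $d_p(f(x),f(y))<d_p(x,y)-\delta\le k\,d_p(x,y)$.
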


\begin{proof}
Let $(f,p,\delta,\lambda)$ be a \textsc{Continuous-LocalOpt} instance
on $X=[0,1]^3$. If $\delta\geq1$, use a fixed target instance of the
singleton form below and decode every answer as the origin.
Assume $0<\delta<1$.
Use the following ultrametric construction:
\begin{equation}
d_p(x,y)=
\begin{cases}
0,&x=y,\\
1+\max\{p(x),p(y)\},&x\neq y.
\end{cases}
\label{eq:pn-potential-ultrametric}
\end{equation}
Set $d=d_p$ and $\Pi(x,z)=f(x)$, independently of $z$, and choose
\begin{equation}
D=2,\qquad
k=1-\frac\delta2,\qquad
\varepsilon_1=\frac12,\qquad
\varepsilon_2=\frac\delta8,\qquad
L_P=3\lambda,\qquad
L_d=\lambda.
\label{eq:pn-hardness-parameters}
\end{equation}
In particular,
$0<\varepsilon_2=\delta/8<\delta/4=(1-k)\varepsilon_1$.
The circuit wrappers leave these maps unchanged, and all circuits
and parameters have polynomial description length.

For completeness, $d_p$ is a genuine complete ultrametric.
For pairwise distinct $x,y,z$, the maximum of $d_p(x,z)$ and
$d_p(z,y)$ is at least both $1+p(x)$ and $1+p(y)$, hence at least
$d_p(x,y)$. The cases with repeated points are immediate.
Every nonzero distance is within $[1,2]$, so every $d_p$-Cauchy sequence is
eventually constant. Consequently, the singleton correspondence
$F_f(x)=\{f(x)\}$ has nonempty closed bounded values, and $\Pi$ is
its exact projection in $d_p$.
The contraction parameter remains a claim that can be violated.
We now decode all four output types.

\textbf{Output (N1).}
Since $P=f$ and every nonzero $d_p$-distance is at least one,
$d_p(x,f(x))\leq1/2$ implies $x=f(x)$.
Thus $p(f(x))=p(x)$ and $x$ satisfies (CO1).

\textbf{Output (N2).}
The projection circuit ignores its second argument, so the answer gives
$d_p(f(x),f(y))>k d_p(x,y)+\varepsilon_2$.
In particular, $x\neq y$ and $f(x)\neq f(y)$.
If neither endpoint satisfied (CO1), then
$p(f(x))<p(x)-\delta$ and $p(f(y))<p(y)-\delta$, which yields
\begin{equation}
\begin{aligned}
d_p(f(x),f(y))
&=1+\max\{p(f(x)),p(f(y))\}\\
&<1+\max\{p(x),p(y)\}-\delta\\
&=d_p(x,y)-\delta\\
&\leq(1-\delta/2)d_p(x,y)
=k d_p(x,y).
\end{aligned}
\label{eq:pn-hardness-contraction}
\end{equation}
The last inequality uses $d_p(x,y)\leq2$.
This contradicts the buffered certificate.
Evaluate the two source potential inequalities and return an
endpoint satisfying (CO1).

\textbf{Output (N3).}
Let $t=\|x-y\|_1\in(0,3]$, we have $t\leq3t^\beta$.
Therefore
$\|f(x)-f(y)\|_1>3\lambda t^\beta\geq\lambda t$,
so the same pair $(x,y)$ satisfies (CO2).

\textbf{Output (N4).}
Suppose that the returned points satisfy $a\neq b$, $a'\neq b'$, and
\begin{equation*}
|d_p(a,b)-d_p(a',b')|
>
\lambda\bigl(\|a-a'\|_1+\|b-b'\|_1\bigr).
\end{equation*}
Since the two arguments in each metric evaluation are distinct,
both evaluations use the formula
$d_p(u,v)=1+\max\{p(u),p(v)\}$. The common additive constant $1$
therefore cancels upon subtraction, giving
\begin{equation*}
\begin{aligned}
|d_p(a,b)-d_p(a',b')|
&=
\bigl|\max\{p(a),p(b)\}-\max\{p(a'),p(b')\}\bigr|\\
&\leq
\max\bigl\{|p(a)-p(a')|,\ |p(b)-p(b')|\bigr\}.
\end{aligned}
\end{equation*}
Here we use the fact that the change in a maximum is at most the
larger change in its two arguments.

We claim that at least one of the pairs $(a,a')$ and $(b,b')$
violates the $\lambda$-Lipschitz bound for $p$. Otherwise, the
preceding estimate would imply
\begin{equation*}
|d_p(a,b)-d_p(a',b')|
\leq
\lambda\max\{\|a-a'\|_1,\|b-b'\|_1\}
\leq
\lambda\bigl(\|a-a'\|_1+\|b-b'\|_1\bigr),
\end{equation*}
contradicting the returned \textnormal{(N4)} certificate.
The decoder therefore checks these two pairs and returns one
satisfying $|p(u)-p(v)|>\lambda\|u-v\|_1$ as an
\textnormal{(CO3)} answer.

Every decoder uses polynomially many rational operations and respects
the polynomial witness bound. This proves the reduction.
\end{proof}

Now, the proof of Theorem \ref{thm:nadler-clo-equivalence} naturally follows from Lemma \ref{lem:clo-to-nadler} and \ref{lem:nadler-to-holder-localopt}.

\section{Omitted Proofs in Section \ref{sec:converse-nadler}}
\label{sec:appendix-converse-nadler}

\subsection{Outline for proof of Theorem \ref{thm:daskalakis-nadler-converse}}

We first give an outline for our proof. The proof adapts the remetrization scheme of Meyers and of Daskalakis,
Tzamos, and Zampetakis
\cite{Meyers1967,DaskalakisTzamosZampetakis2018} to a set-valued iteration. In the
single-valued setting, one follows a single orbit of the update map. Here,
every part of the construction must instead control the full reachable sets
and every admissible successor. The proof is organized into five steps.
Step~1 turns the convergence assumptions into a nested global filtration of
the state space. Step~2 constructs an equivalent metric under which the
correspondence is nonexpansive. Steps~3 and~4 convert progress through the
filtration into strict Hausdorff contraction. Step~5 shows that small
distance in the constructed metric still yields a useful approximation
guarantee in the original metric.

\paragraph{Step 1: Uniform collapse of the global reachable sets.}
We first upgrade the pointwise convergence of all branches to a uniform
statement about the global images of the correspondence. The local
uniformity assumption is used to construct an open neighborhood $W$ of the
endpoint $x^*$ that is forward invariant under $F$. Hausdorff continuity
implies that the property of entering $W$ after finitely many steps is
stable under small perturbations of the initial point, and compactness then
turns the resulting point-dependent entrance times into one common entrance
time for all of $X$. Consequently, the sets
$K_t:=\widehat F^{\,t}(X)$ form a nested sequence of nonempty compact sets
satisfying
$H_d(K_t,\{x^*\})\to0$ and
$\bigcap_{t\geq0}K_t=\{x^*\}$. These sets provide the dynamical layers used
later to measure how far a point has progressed toward the endpoint.

\paragraph{Step 2: A Hausdorff-nonexpansive equivalent metric.}
Starting from $d$, we recursively enlarge the metric so that it also records
finite-horizon differences between the value sets of $F$, and then take the
monotone supremum $\overline d$ of these auxiliary metrics. This is the
set-valued analogue of the nonexpansive closure in the converse of
Daskalakis, Tzamos, and Zampetakis.
Lemma~\ref{lem:monotone-hausdorff-limit} allows the supremum to pass through
the Hausdorff construction, yielding
$H_{\overline d}(F(x),F(y))\leq\overline d(x,y)$. Thus, $F$ is nonexpansive
with respect to $\overline d$. The uniform collapse obtained in Step~1 is
then used to show that the infinite metric closure does not change the
topology. Hence $\overline d$ is topologically equivalent to $d$, while
$\overline d\geq d$ preserves quantitative control over the original notion
of distance.

\paragraph{Step 3: Dynamical rank and geodesic distance.}
Choose $N$ so that the global image $K_N$ lies inside the $d$-ball of radius
$\varepsilon$ around $x^*$. For $x\neq x^*$, define the depth $\tau(x)$ to
be the largest index for which $x\in K_{\tau(x)}$, and shift this depth by
setting $\ell(x):=\tau(x)-N$. Since
$\widehat F(K_t)=K_{t+1}$, every admissible transition $u\in F(x)$ increases
the rank by at least one. We encode this progress in the rank-weighted
function
$\rho(x,y):=c^{\min\{\ell(x),\ell(y)\}}\overline d(x,y)$. The function $\rho$ need not satisfy the triangle inequality, so we replace
it by its geodesic distance $D$.

We prove that $D$ separates distinct points by obtaining a positive
lower bound on the total $\rho$-length of every connecting sequence.
We consider separately sequences that avoid a suitable deeper set
$K_r$ and those that enter it; for the latter, we bound the length
accumulated up to the first entry. The upper bound
$D\leq c^{-N}\overline d$, together with compactness, then implies
that $D$ generates the same topology as $\overline d$ and $d$.

\paragraph{Step 4: Establish contraction under the metric $D$.}
The nonexpansiveness established in Step~2 implies that, for every
$u\in F(x)$, one can choose $v\in F(y)$ such that
$\overline d(u,v)\leq \overline d(x,y)$. Since every point in $F(x)$ has
rank at least one greater than that of $x$, the rank weighting strengthens
this estimate to $\rho(u,v)\leq c\rho(x,y)$.

We apply this estimate successively to any finite sequence
$x=x_0,\ldots,x_m=y$ considered in the definition of $D(x,y)$. Starting
from an arbitrary $u_0\in F(x)$, we choose $u_i\in F(x_i)$ so that
$\rho(u_{i-1},u_i)\leq c\rho(x_{i-1},x_i)$ for every $i$. Summing these
inequalities and taking the infimum over all such sequences shows that
every point of $F(x)$ has $D$-distance at most $cD(x,y)$ from $F(y)$.
Interchanging $x$ and $y$ gives the same estimate in the opposite
direction. Therefore,
\[
    H_D(F(x),F(y))\leq cD(x,y).
\]
\paragraph{Step 5: Transfer approximation guarantees to the original
metric.}
Let $S:=K_N$. By construction, $S$ is forward invariant and is contained
in $B_d(x^*,\varepsilon)$. Every point outside $S$ has negative shifted
rank. Consequently, before a finite sequence of intermediate points first
reaches $S$, each term $\rho(z_{i-1},z_i)$ in the definition of $D$ is no
smaller than $\overline d(z_{i-1},z_i)$. Thus, a small value of
$D(x,x^*)$ is possible only if $x$ already belongs to $S$ or is close to
$S$ under $\overline d$.

The same reasoning applies to two points $x$ and $y$. Any finite sequence
used in the definition of $D(x,y)$ either avoids $S$, in which case its
total $\rho$-value is at least $\overline d(x,y)$, or reaches $S$, in
which case its initial portion is at least the $\overline d$-distance from
one endpoint to $S$. Hence, a small value of $D(x,y)$ forces either $x$
and $y$ to be close to each other or one of them to be close to $S$.

Finally, because $S\subseteq B_d(x^*,\varepsilon)$ and
$\overline d\geq d$, closeness to $S$ implies closeness to $x^*$ in the
original metric. This yields the endpoint and pairwise transfer bounds in
\eqref{eq:endpoint-transfer}--\eqref{eq:pairwise-transfer}, ensuring that
approximation under the constructed metric also gives a quantitative
guarantee under $d$.

\subsection{Topological facts}

\begin{lemma}[Continuity properties of compact-valued correspondences]
\label{lem:appendix-conotunity-Fhat}
Let $(X,d)$ be a compact metric space, and let
$F\colon X\to\mathcal{K}(X)$ be Hausdorff-continuous. For every
$A\in\mathcal{K}(X)$, define
$\widehat F(A):=\bigcup_{x\in A}F(x)$. Then the following statements
hold.
\begin{enumerate}
    \item[(i)] The set $\widehat F(A)$ is nonempty and compact for every
    $A\in\mathcal{K}(X)$. Moreover, the induced map
    $\widehat F\colon\mathcal{K}(X)\to\mathcal{K}(X)$ is continuous
    with respect to the Hausdorff metric. Consequently, the map
    $x\mapsto F^{[t]}(x)$ is Hausdorff-continuous for every
    $t\geq 0$.

    \item[(ii)] Let $G\colon X\to\mathcal{K}(X)$ be Hausdorff-continuous,
    and let $V\subseteq X$ be open. Then the set
    $\{x\in X:G(x)\subseteq V\}$ is open in $X$.
\end{enumerate}
\end{lemma}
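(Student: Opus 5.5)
For part (i), I will first show that $\widehat F(A)$ is compact. Nonemptiness is immediate because $A\neq\varnothing$ and each $F(x)$ is nonempty. For compactness I will argue sequentially. Take $y_n\in F(x_n)$ with $x_n\in A$. By compactness of $A$, pass to a subsequence with $x_n\to x\in A$. Hausdorff continuity gives $H_d(F(x_n),F(x))\to0$, so there are points $z_n\in F(x)$ with $d(y_n,z_n)\le H_d(F(x_n),F(x))\to0$. Compactness of $F(x)$ yields a further subsequence $z_n\to z\in F(x)$, hence $y_n\to z\in\widehat F(A)$. This proves that $\widehat F(A)$ is sequentially compact, and therefore compact.

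For continuity of $\widehat F$ on $(\mathcal K(X),H_d)$, I will use that $F$ is uniformly continuous as a map $(X,d)\to(\mathcal K(X),H_d)$, since $X$ is compact. Given $\varepsilon>0$, choose $\eta$ so that $d(x,x')<\eta$ implies $H_d(F(x),F(x'))<\varepsilon$. Now suppose $H_d(A,B)<\eta$, and take any $y\in F(a)$ with $a\in A$. Pick $b\in B$ with $d(a,b)<\eta$; then $F(b)$ contains a point within $\varepsilon$ of $y$. This gives $\sup_{y\in\widehat F(A)}\operatorname{dist}_d(y,\widehat F(B))\le\varepsilon$, and the symmetric bound follows the same way, so $H_d(\widehat F(A),\widehat F(B))\le\varepsilon$. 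Hence $\widehat F$ is in fact uniformly continuous.

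The claim about $F^{[t]}$ then follows by induction on $t$. The map $x\mapsto F^{[0]}(x)=\{x\}$ is an isometry into $\mathcal K(X)$. For the inductive step, $F^{[t+1]}=\widehat F\circ F^{[t]}$ is a composition of continuous maps, and compactness of each value is preserved by the first part of the argument.

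For part (ii), I will show that the complement $\{x:G(x)\not\subseteq V\}$ is closed. Let $x_n\to x$ and choose $y_n\in G(x_n)\setminus V$. As before, Hausdorff continuity gives $z_n\in G(x)$ with $d(y_n,z_n)\to0$. By compactness of $G(x)$, a subsequence converges, $z_n\to z\in G(x)$, and then $y_n\to z$ along that subsequence. Since $X\setminus V$ is closed and contains every $y_n$, it contains $z$. Thus $G(x)\not\subseteq V$, and the complement is closed.

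None of these steps is a real obstacle. The only point that needs care is the continuity of $\widehat F$, where uniform continuity of $F$ (coming from compactness of $X$) is needed to get a bound that is uniform over all of $A$.
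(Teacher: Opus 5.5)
Your proof is correct and follows essentially the same route as the paper. The key step, continuity of $\widehat F$ via uniform continuity of $F$ on compact $X$ plus a choice of nearby $b\in B$, is the same, and so is the composition argument for $x\mapsto F^{[t]}(x)$.

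The remaining differences are cosmetic. You prove compactness of $\widehat F(A)$ by sequential compactness, which incidentally does not use compactness of $X$; the paper instead shows $\widehat F(A)$ is closed in the compact space $X$. For (ii), you show the complement is sequentially closed, whereas the paper uses the positive gap $\operatorname{dist}_d(G(x_0),X\setminus V)>0$ together with Hausdorff continuity at $x_0$.
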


\begin{proof}
We first prove part~(i). Fix $A\in\mathcal{K}(X)$. Since both $A$ and every value $F(x)$ are nonempty, the union $\widehat F(A)$ is nonempty. Because $X$ is compact, it is enough to show that $\widehat F(A)$ is closed.

Let $(y_n)_{n\geq 1}$ be a sequence in $\widehat F(A)$ converging to
some $y\in X$. For each $n$, choose $x_n\in A$ such that
$y_n\in F(x_n)$. Since $A$ is compact, there is a subsequence
$(x_{n_j})_{j\geq 1}$ converging to some $x\in A$. Hausdorff
continuity of $F$ gives
$H_d(F(x_{n_j}),F(x))\to 0$. Since $y_{n_j}\in F(x_{n_j})$, we have
\begin{equation}
    \operatorname{dist}_d(y,F(x))
    \leq
    d(y,y_{n_j})
    +
    \operatorname{dist}_d(y_{n_j},F(x))
    \leq
    d(y,y_{n_j})
    +
    H_d(F(x_{n_j}),F(x)).
\end{equation}
The right-hand side converges to zero, and hence
$\operatorname{dist}_d(y,F(x))=0$. Because $F(x)$ is compact, and
therefore closed, it follows that $y\in F(x)$. Since $x\in A$, we
conclude that $y\in\widehat F(A)$. Thus $\widehat F(A)$ is closed in
the compact space $X$, so it is compact.

We next prove that $\widehat F$ is Hausdorff-continuous. Since $F$ is
a continuous map from the compact space $X$ into the metric space
$(\mathcal{K}(X),H_d)$, it is uniformly continuous by the Heine–Cantor theorem. Thus, for every
$\varepsilon>0$, there exists $\delta>0$ such that
$d(a,b)<\delta$ implies
$H_d(F(a),F(b))<\varepsilon/2$. Let $A,B\in\mathcal{K}(X)$ satisfy $H_d(A,B)<\delta$. Fix
$u\in\widehat F(A)$. There is some $a\in A$ such that $u\in F(a)$.
Because $B$ is compact, there exists $b\in B$ satisfying
$d(a,b)=\operatorname{dist}_d(a,B)$. In particular,
$d(a,b)<\delta$. Therefore,
\begin{equation}
    \operatorname{dist}_d(u,\widehat F(B))
    \leq
    \operatorname{dist}_d(u,F(b))
    \leq
    H_d(F(a),F(b))
    <
    \frac{\varepsilon}{2}.
\end{equation}
Taking the supremum over $u\in\widehat F(A)$ bounds the directed
Hausdorff distance from $\widehat F(A)$ to $\widehat F(B)$ by
$\varepsilon/2$. Interchanging $A$ and $B$ gives the same bound in
the opposite direction. Hence
$H_d(\widehat F(A),\widehat F(B))\leq\varepsilon/2<\varepsilon$.
This proves that $\widehat F$ is Hausdorff-continuous.

Finally, define $\iota\colon X\to\mathcal{K}(X)$ by
$\iota(x):=\{x\}$. This map is an isometry because
$H_d(\{x\},\{y\})=d(x,y)$. By the definition of the reachable sets,
$F^{[t]}(x)=\widehat F^{\,t}(\iota(x))$. Since $\widehat F$ and
$\iota$ are continuous, the map $x\mapsto F^{[t]}(x)$ is
Hausdorff-continuous for every $t\geq 0$.

We now prove part~(ii). Let
$E:=\{x\in X:G(x)\subseteq V\}$, and fix $x_0\in E$. We show that
$x_0$ is an interior point of $E$. If $V=X$, then $E=X$, so the conclusion is immediate. Suppose
therefore that $V\neq X$. Since $G(x_0)$ is compact,
$X\setminus V$ is closed, and these two sets are disjoint, their
distance
$\eta:=\operatorname{dist}_d(G(x_0),X\setminus V)$ is strictly
positive.

By Hausdorff continuity of $G$ at $x_0$, there exists a neighborhood
$U$ of $x_0$ such that
$H_d(G(x),G(x_0))<\eta/2$ for every $x\in U$. Fix $x\in U$ and
$y\in G(x)$. Then
$\operatorname{dist}_d(y,G(x_0))<\eta/2$. If $y\notin V$, then
$y\in X\setminus V$, and the definition of $\eta$ would imply
$\operatorname{dist}_d(y,G(x_0))\geq\eta$, which is a contradiction.
Therefore $y\in V$. Since this holds for every $y\in G(x)$, we have
$G(x)\subseteq V$. Thus, $U \subseteq E$, so $x_0$ is an interior point of $E$. Since
$x_0\in E$ was arbitrary, the set $E$ is open.
\end{proof}

\begin{lemma}[Monotone Hausdorff-limit identity]
\label{lem:monotone-hausdorff-limit}
Let $X$ be a compact space, and let $(d_m)_{m\geq 0}$ be a sequence
of uniformly-bounded metrics on $X$ such that $d_m\leq d_{m+1}$ pointwise and each
$d_m$ is continuous with respect to the original topology of $X$.
Suppose that
\[\overline d(x,y):=\sup_{m\geq 0}d_m(x,y)\] is finite for every
$x,y\in X$. Then, for every pair of nonempty compact sets
$A,B\subseteq X$,
\[H_{\overline d}(A,B)=\sup_{m\geq 0}H_{d_m}(A,B).\]
\end{lemma}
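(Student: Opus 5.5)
The inequality $\sup_m H_{d_m}(A,B)\le H_{\overline d}(A,B)$ is immediate, and the reverse inequality will come from a Dini-type uniform convergence argument. For the easy direction, $d_m\le\overline d$ pointwise gives $\operatorname{dist}_{d_m}(a,B)\le\operatorname{dist}_{\overline d}(a,B)$ for every $a$. Taking suprema over $a\in A$, and symmetrically over $b\in B$, yields $H_{d_m}(A,B)\le H_{\overline d}(A,B)$ for each $m$.

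For the reverse inequality, set $s:=\sup_m H_{d_m}(A,B)$, which is finite by uniform boundedness, and fix $a\in A$. We must show $\operatorname{dist}_{\overline d}(a,B)\le s$, i.e.\ that some $b\in B$ has $\overline d(a,b)\le s$. For each $m$, continuity of $d_m(a,\cdot)$ and compactness of $B$ in the original topology give a point $b_m\in B$ with
\[
d_m(a,b_m)=\operatorname{dist}_{d_m}(a,B)\le H_{d_m}(A,B)\le s.
\]
Passing to a subsequence, $b_{m_j}\to b\in B$ in the original topology.

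Now fix any $M$. For $m_j\ge M$, monotonicity gives $d_M(a,b_{m_j})\le d_{m_j}(a,b_{m_j})\le s$. Since $d_M$ is continuous in the original topology, letting $j\to\infty$ yields $d_M(a,b)\le s$. As $M$ was arbitrary, $\overline d(a,b)=\sup_M d_M(a,b)\le s$, hence $\operatorname{dist}_{\overline d}(a,B)\le s$. Taking the supremum over $a\in A$ and exchanging the roles of $A$ and $B$ gives $H_{\overline d}(A,B)\le s$.

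The main subtlety is that $\overline d$ need not be continuous in the original topology; it is only lower semicontinuous, as a supremum of continuous functions. The argument above therefore avoids attaining the infimum of $\overline d(a,\cdot)$ directly. Instead it fixes one level $M$ at a time and exploits monotonicity to compare $d_M$ with $d_{m_j}$ along the subsequence. This is precisely the step where both the monotonicity $d_m\le d_{m+1}$ and the continuity of each individual $d_m$ are used. Note also that $H_{\overline d}(A,B)$ is well defined here even though $A,B$ are only known to be compact in the original topology, since we only need the suprema and infima defining it, not compactness with respect to $\overline d$.
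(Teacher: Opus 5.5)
Your proof is correct and follows essentially the same route as the paper. You pick minimizers $b_m\in B$ of $d_m(a,\cdot)$, extract a convergent subsequence, and for each fixed level $M$ use monotonicity together with the continuity of $d_M$ to pass to the limit. This yields $\overline d(a,b)\le s$.

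The only difference is organizational. The paper first proves the exact point-to-set identity $\operatorname{dist}_{\overline d}(a,B)=\sup_m\operatorname{dist}_{d_m}(a,B)$ and then interchanges suprema, whereas you bound directly by $s$. Also, your opening label ``Dini-type uniform convergence argument'' is a misnomer, since no uniform convergence is used.
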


\begin{proof}
We first prove the corresponding identity for point-to-set distances.
Fix $a\in X$ and a nonempty compact set $B\subseteq X$. Set
$\delta_m:=\operatorname{dist}_{d_m}(a,B)$ and
$\delta:=\operatorname{dist}_{\overline d}(a,B)$. Since $d_m\leq\overline d$ pointwise, one has
$\delta_m\leq\delta$ for every $m$. Hence,
$\sup_{m\geq 0}\delta_m\leq\delta$.

We then prove the reverse inequality. Because $d_m\leq d_{m+1}$,
the sequence $(\delta_m)_{m\geq 0}$ is nondecreasing. Let
$L:=\sup_{m\geq 0}\delta_m$, so that $\delta_m\to L$ as
$m\to\infty$. For every $m$, the function
$b\mapsto d_m(a,b)$ is continuous on the compact set $B$.
Therefore, there exists $b_m\in B$ such that
$d_m(a,b_m)=\delta_m$.

By compactness of $B$, the sequence $(b_m)_{m\geq 0}$ has a
subsequence $(b_{m_j})_{j\geq 1}$ converging to some $b^*\in B$.
Fix an arbitrary $r\geq 0$. Since $m_j\to\infty$, one has
$m_j\geq r$ for all sufficiently large $j$. Monotonicity of the
metrics then gives
$d_{m_j}(a,b_{m_j})\geq d_r(a,b_{m_j})$. Consequently,
\begin{equation}
L
=
\lim_{j\to\infty}d_{m_j}(a,b_{m_j})
\geq
\lim_{j\to\infty}d_r(a,b_{m_j})
=
d_r(a,b^*),
\end{equation}
where the final equality follows from continuity of $d_r$.

Since $r$ was arbitrary, it follows that
$L\geq\sup_{r\geq 0}d_r(a,b^*)=\overline d(a,b^*)$.
Because $b^*\in B$, we also have
$\overline d(a,b^*)\geq\operatorname{dist}_{\overline d}(a,B)
=\delta$. Therefore $L\geq\delta$. Together with the previously
proved inequality $L\leq\delta$, this gives
\[
\operatorname{dist}_{\overline d}(a,B)
=
\sup_{m\geq 0}\operatorname{dist}_{d_m}(a,B).
\]

For a metric $\rho$, write
$h_\rho(A,B):=\sup_{a\in A}\operatorname{dist}_\rho(a,B)$ for the
directed Hausdorff distance from $A$ to $B$. Applying the
point-to-set identity to each $a\in A$ gives
\begin{equation}
h_{\overline d}(A,B)
=
\sup_{a\in A, m\ge 0}
\operatorname{dist}_{d_m}(a,B)
=
\sup_{m\geq 0}h_{d_m}(A,B).
\end{equation}
The interchange of the two suprema is valid because both sides are
the supremum of
$\operatorname{dist}_{d_m}(a,B)$ over
$(a,m)\in A\times\mathbb N_0$. By the same argument,
$h_{\overline d}(B,A)=\sup_{m\geq 0}h_{d_m}(B,A)$.

Finally, using the definition of the Hausdorff distance,
\begin{equation}
H_{\overline d}(A,B)
=
\max\left\{
\sup_{m\geq 0}h_{d_m}(A,B),
\sup_{m\geq 0}h_{d_m}(B,A)
\right\}
=
\sup_{m\geq 0}H_{d_m}(A,B).
\end{equation}
Indeed, for any two real sequences $(\alpha_m)$ and $(\beta_m)$,
\[\max\{\sup_m\alpha_m,\sup_m\beta_m\}
=\sup_m\max\{\alpha_m,\beta_m\}.\] This proves the identity.
\end{proof}

\subsection{Proof of Theorem \ref{thm:daskalakis-nadler-converse}}
For readability, write $D=D_{c,\varepsilon}$. The construction has three
components: a Hausdorff-nonexpansive closure of the original metric, a
dynamical rank that increases under every admissible transition, and the
geodesic distance of the resulting rank-weighted distance.

\paragraph{Step 1: Uniform collapse of the global reachable sets.}
By Lemma~\ref{lem:appendix-conotunity-Fhat}, we know that every finite iterate $x\mapsto F^{[t]}(x)$ is Hausdorff-continuous.

We first construct a forward-invariant neighborhood of $x^*$ analogous to the construction by Daskalakis and Meyers \cite{DaskalakisTzamosZampetakis2018, Meyers1967}. Choose an
open set $V$ whose closure is contained in $U$, and choose $r>0$ such that
$B_d(x^*,r)\subseteq V$. By
\eqref{eq:local-uniform-robust-convergence}, there is $k\geq1$ such that
$\widehat F^{\,k}(U)\subseteq B_d(x^*,r)\subseteq V$. Define
\begin{equation}
    W
    :=
    \bigcap_{j=0}^{k-1}
    \left\{
        x\in X:
        F^{[j]}(x)\subseteq V
    \right\}.
    \label{eq:robust-trapping-neighborhood}
\end{equation}

By Lemma~\ref{lem:appendix-conotunity-Fhat}, each set $\{x : F^{[j]}(x)\subseteq V\}$ is open, so $W$ is open as a finite intersection of open sets. Also, $x^*\in W$ since $F^{[j]}(x^*)=\{x^*\}\subseteq V$ for every $j$. Observe first that $W\subseteq V\subseteq U$, since the $j=0$ term in \eqref{eq:robust-trapping-neighborhood}
gives $\{x\}\subseteq V$. Let $x\in W$ and $u\in F(x)$. Because $\{u\}\subseteq F(x)$ and
$\widehat{F}$ is monotone with respect to inclusion,
\[
F^{[j]}(u)=\widehat{F}^{\,j}(\{u\})\subseteq\widehat{F}^{\,j}(F(x))=F^{[j+1]}(x)
\qquad\text{for every } j\ge 0 .
\]
For $0\le j\le k-2$, the right-hand side lies in $V$ by the definition of $W$. For
$j=k-1$, we use $x\in U$ and the choice of $k$:
\[
F^{[k]}(x)\subseteq\widehat{F}^{\,k}(U)\subseteq B_d(x^*,r)\subseteq V .
\]
Hence $F^{[j]}(u)\subseteq V$ for all $0\le j\le k-1$, which implies $u\in W$. This proves
$\widehat{F}(W)\subseteq W$.

For every $x\in X$, robust global convergence gives an integer $m_x$ such
that $F^{[m_x]}(x)\subseteq W$. By applying (ii) of Lemma~\ref{lem:appendix-conotunity-Fhat} to $G=F^{[m_x]}$ and the open set $V = W$, the inclusion $F^{[m_x]}(x')\subseteq W$ holds for all $x'$ in a neighborhood $O_x$ of $x$. Compactness provides a
finite subcover $O_{x_1},\ldots,O_{x_s}$. Let
$M:=\max_i m_{x_i}$. Since $W$ is forward invariant,
$\widehat F^{\,M}(X)\subseteq W$. It follows from the local uniform
convergence on $W\subseteq U$ that, for
$K_t:=\widehat F^{\,t}(X)$,
\begin{equation}
    H_d(K_t,\{x^*\})\longrightarrow0.
    \label{eq:global-image-collapse}
\end{equation}
The sets $K_t$ are nonempty and compact, satisfy
$K_{t+1}\subseteq K_t$, and obey
$\bigcap_{t\geq0}K_t=\{x^*\}$.

\paragraph{Step 2: A Hausdorff-nonexpansive equivalent metric.}
Define a sequence of metrics by $d_0:=d$ and
\begin{equation}
    d_{m+1}(x,y)
    :=
    \max\left\{
        d(x,y),
        H_{d_m}(F(x),F(y))
    \right\},
    \qquad
    \overline d(x,y):=\sup_{m\geq0}d_m(x,y).
    \label{eq:behavioral-metric-construction}
\end{equation}
The pullback $(x,y)\mapsto H_{d_m}(F(x),F(y))$ is a pseudometric, so taking
maximum with $d$ creates a metric. Moreover, $d_m\leq d_{m+1}$ and
$d_m(x,y)\leq\operatorname{diam}_d(X)$ for all $m,x,y$. Hence
$\overline d$ is a finite metric satisfying $\overline d\geq d$.

We show by induction on $m$ that each $d_m$ is continuous on $X\times X$ (with the
product topology of $d$). This holds for $d_0=d$. Assume $d_m$ is continuous. Since
$X\times X$ is compact, $d_m$, which maps from $(X\times X, d(a,a')+d(b,b'))$ to $(\mathbb{R}, |c-d|)$, is uniformly continuous. Namely, for every $\varepsilon>0$, there is $\delta>0$ such that
\[
d(a,a')+d(b,b')<\delta\ \Longrightarrow\ |d_m(a,b)-d_m(a',b')| < \varepsilon\qquad \text{for all } a,a',b,b' \in X.
\]
Also, $d_m(b,b)=0$. Now if we let $a'=b'=b$, for every
$\varepsilon>0$, there is $\delta>0$ such that
\begin{equation}\label{eq:dm-small}
d(a,b)<\delta\ \Longrightarrow\ d_m(a,b)=|d_m(a,b)-d_m(b,b)|<\varepsilon .
\end{equation}
Let $A,B\in\mathcal{K}(X)$ with $H_d(A,B)<\delta$. For $a\in A$, compactness of $B$
provides $b\in B$ with $d(a,b)=\operatorname{dist}_d(a,B)<\delta$, so
$\operatorname{dist}_{d_m}(a,B)\le d_m(a,b)<\varepsilon$ by \eqref{eq:dm-small}.
Interchanging $A$ and $B$ gives
\begin{equation}\label{eq:Hd-to-Hdm}
H_d(A,B)<\delta\ \Longrightarrow\ H_{d_m}(A,B)\le\varepsilon .
\end{equation}
Set $\Phi_m(x,y):=H_{d_m}(F(x),F(y))$. Since $H_{d_m}$ is a pseudometric on
$\mathcal{K}(X)$, the triangle inequality gives
\[
|\Phi_m(x,y)-\Phi_m(x',y')|\le H_{d_m}(F(x),F(x'))+H_{d_m}(F(y),F(y')).
\]
If $(x',y')\to(x,y)$, then $H_d(F(x'),F(x))\to 0$ and $H_d(F(y'),F(y))\to 0$ by
Hausdorff continuity of $F$, so both terms on the right tend to $0$ by
\eqref{eq:Hd-to-Hdm}. Thus $\Phi_m$ is continuous, and so is
$d_{m+1}=\max\{d,\Phi_m\}$. In particular, as $d\le d_m$, each $d_m$ is topologically
equivalent to $d$, so $\mathcal{K}(X)$ is the same family under $d$ and $d_m$ and
$H_{d_m}$ is a genuine metric on it.

By Lemma \ref{lem:monotone-hausdorff-limit}, for all nonempty compact $A,B\subseteq X$, 
\begin{equation}
    H_{\overline d}(A,B)
    =
    \sup_{m\geq0}H_{d_m}(A,B).
    \label{eq:hausdorff-monotone-limit}
\end{equation}
As a consequence, we can conclude that
\begin{equation}
    H_{\overline d}(F(x),F(y))
    \leq
    \overline d(x,y)
    \qquad\text{for all }x,y\in X.
    \label{eq:hausdorff-nonexpansive}
\end{equation}

It remains to compare $\overline d$ with $d$. Fix $\varepsilon>0$. By
\eqref{eq:global-image-collapse}, choose $q$ such that
$\operatorname{diam}_d(K_q)\leq\varepsilon$. Since $K_q$ is forward invariant, an
induction on $m$ gives
$\operatorname{diam}_{d_m}(K_q)\leq\varepsilon$ for every $m$.
We claim that for every $j\in\{0,\dots,q\}$, all $x,y\in K_j$, and all $r\ge 0$,
\begin{equation}\label{eq:descent}
d_{q-j+r}(x,y)\le\max\{d_{q-j}(x,y),\varepsilon\}.
\end{equation}
Since $K_0=X$, the case $j=0$ is
\begin{equation}\label{eq:16}
d_{q+r}(x,y)\le\max\{d_q(x,y),\varepsilon\}\qquad\text{for all } x,y\in X,\ r\ge 0 .
\end{equation}
We prove \eqref{eq:descent} by induction on $j$. For $j=q$, it is the
diameter bound: $x,y\in K_q$ gives $d_r(x,y)\le\operatorname{diam}_{d_r}(K_q)\le\varepsilon$.
Assume the claim for $j+1$ and let $x,y\in K_j$ and $r\ge 1$ (for $r=0$ there is
nothing to prove). By definition $\widehat{F}(K_j)=K_{j+1}$, so
$F(x),F(y)\subseteq K_{j+1}$, and the induction hypothesis (with the same $r$) gives,
for all $u\in F(x)$ and $v\in F(y)$,
\[
d_{q-j-1+r}(u,v)\le\max\{d_{q-j-1}(u,v),\varepsilon\}.
\]
Therefore, we can also conclude that
\[
H_{d_{q-j-1+r}}(F(x),F(y))\le\max\bigl\{H_{d_{q-j-1}}(F(x),F(y)),\varepsilon\bigr\}.
\]
by taking the infimum over $v\in F(y)$, then the supremum over
$u\in F(x)$, and repeating with $x$ and $y$ interchanged. Consequently, we have
\begin{align*}
 d_{q-j+r}(x,y)
&=\max\bigl\{d(x,y),H_{d_{q-j-1+r}}(F(x),F(y))\bigr\}\\
&\le\max\bigl\{d(x,y),H_{d_{q-j-1}}(F(x),F(y)),\varepsilon\bigr\}\\
&=\max\{d_{q-j}(x,y),\varepsilon\},   
\end{align*}
which is \eqref{eq:descent} for $j$.
Since $d_q$ is continuous and $d_q(x,x)=0$,
\eqref{eq:16} implies that
$d(x,y)\to0$ entails $\overline d(x,y)\to0$. Together with
$d\leq\overline d$, this proves that $d$ and $\overline d$ generate the same
topology. Thus $(X,\overline d)$ is compact and complete.

\paragraph{Step 3: Dynamical rank and geodesic distance.}
Choose $N$ sufficiently large that
$K_N\subseteq B_d(x^*,\varepsilon)$. For $x\neq x^*$, define its
global-image depth by
\begin{equation}
    \tau(x)
    :=
    \max\left\{
        n\in\mathbb{N}_0:
        x\in K_n
    \right\},
    \qquad
    \tau(x^*):=+\infty,
    \qquad
    \ell(x):=\tau(x)-N.
    \label{eq:dynamical-rank}
\end{equation}
The construction of $\tau$ is motivated by the construction of $n_x$ in \cite[Theorem 3.3]{Luchian2018} by Luchian. The maximum is finite for $x\neq x^*$ because
$\bigcap_nK_n=\{x^*\}$. Also, $\ell(x)\geq-N$ for every $x$. Most
importantly, every admissible transition increases the rank: if
$u\in F(x)$, then $x\in K_{\tau(x)}$ implies
$u\in\widehat F(K_{\tau(x)})=K_{\tau(x)+1}$, and hence
\begin{equation}
    \ell(u)\geq\ell(x)+1.
    \label{eq:rank-increase}
\end{equation}
Define the symmetric rank-weighted distance
\begin{equation}
    \rho(x,y)
    :=
    c^{\min\{\ell(x),\ell(y)\}}\overline d(x,y),
    \label{eq:rank-weighted-distance}
\end{equation}
with the convention $c^{+\infty}=0$. The function $\rho$ need not satisfy
the triangle inequality. We therefore take its geodesic distance:
\begin{equation}
    D(x,y)
    :=
    \inf_{\substack{m\geq1\\x=x_0,\ldots,x_m=y}}
    \sum_{i=1}^{m}\rho(x_{i-1},x_i).
    \label{eq:geodesic-closure}
\end{equation}
Symmetry and the triangle inequality are immediate for $D$. We next verify positivity of $D$.

Suppose first that $x\neq y$ and
$\tau(x)\leq\tau(y)<+\infty$. Consider an arbitrary chain (or sequence) from $x$ to $y$: $ x=z_0,z_1,\ldots,z_m=y.$ We call each consecutive pair $(z_{i-1},z_i)$ a link of the chain. Its
$\rho$-length is
\[
    \rho(z_{i-1},z_i)
    =
    c^{\min\{\ell(z_{i-1}),\ell(z_i)\}}
    \overline d(z_{i-1},z_i).
\]
The factor
$c^{\min\{\ell(z_{i-1}),\ell(z_i)\}}$ will be called the rank multiplier
of the link. Set $r:=\tau(y)+1$. If $z\notin K_r$, then
$\tau(z)\leq\tau(y)$ and hence $\ell(z)\leq\ell(y)$. Since $0<c<1$,
this implies $c^{\ell(z)}\geq c^{\ell(y)}$.

There are now two cases. First, suppose that the chain does not meet
$K_r$. Then every point $z_i$ lies outside $K_r$, so the rank
multiplier of every link is at least $c^{\ell(y)}$. Therefore,
\[
\begin{aligned}
    \sum_{i=1}^m \rho(z_{i-1},z_i)
    \geq
    c^{\ell(y)}
    \sum_{i=1}^m \overline d(z_{i-1},z_i) 
    \geq
    c^{\ell(y)}\overline d(x,y).
\end{aligned}
\]

Second, suppose that the chain meets $K_r$, and let $j$ be the smallest
index such that $z_j\in K_r$. Then
$z_0,\ldots,z_{j-1}\notin K_r$. Hence every link in the initial segment
$z_0,\ldots,z_j$ has rank multiplier at least $c^{\ell(y)}$, and
\[
\begin{aligned}
    \sum_{i=1}^m \rho(z_{i-1},z_i)
    \geq
    c^{\ell(y)}
    \sum_{i=1}^j \overline d(z_{i-1},z_i) 
    \geq
    c^{\ell(y)}\overline d(x,z_j) 
    \geq
    c^{\ell(y)}
    \operatorname{dist}_{\overline d}
    \!\left(x,K_{\tau(y)+1}\right).
\end{aligned}
\]
Since every chain falls into one of these two cases, taking the infimum
over all chains gives
\[
    D(x,y)
    \geq
    c^{\ell(y)}
    \min\left\{
        \overline d(x,y),
        \operatorname{dist}_{\overline d}
        \!\left(x,K_{\tau(y)+1}\right)
    \right\}.
\]
Finally, $x\neq y$ gives $\overline d(x,y)>0$, while
$\tau(x)\leq\tau(y)$ implies $x\notin K_{\tau(y)+1}$. Since
$K_{\tau(y)+1}$ is closed, its $\overline d$-distance from $x$ is
strictly positive. Thus $D(x,y)>0$. Therefore
\begin{equation}
    D(x,y)
    \geq
    c^{\ell(y)}
    \min\left\{
        \overline d(x,y),
        \operatorname{dist}_{\overline d}
        \!\left(x,K_{\tau(y)+1}\right)
    \right\}
    >0.
    \label{eq:positive-definiteness-finite-rank}
\end{equation}
If $y=x^*$, the same argument with
$K_{\tau(x)+1}$ gives
\begin{equation}
    D(x,x^*)
    \geq
    c^{\ell(x)}
    \operatorname{dist}_{\overline d}
    \!\left(x,K_{\tau(x)+1}\right)
    >0.
    \label{eq:positive-definiteness-endpoint}
\end{equation}
Thus $D$ is a metric.

Now if we take a chain $(x,y)$ with length one and apply the lower bound $\ell\geq-N$, we get
$D(x,y)\leq c^{-N}\overline d(x,y)$. Hence the identity map from the compact
space $(X,\overline d)$ to the Hausdorff space $(X,D)$ is continuous and
bijective. It is therefore a homeomorphism. Thus $D$ is topologically
equivalent to both $\overline d$ and $d$, and $(X,D)$ is compact and
complete.

\paragraph{Step 4: Construct the contraction.}
Fix $x,y\in X$ and $u\in F(x)$. By Step~2, we know that $\overline{d}$ and $d$ generate the same
topology, so $F(y)$ is compact in $(X,\overline{d})$ and the continuous function
$\overline{d}(u,\cdot)$ attains its infimum on $F(y)$ at some $v\in F(y)$. Since
$u\in F(x)$, \eqref{eq:hausdorff-nonexpansive} gives
\[
\overline{d}(u,v)=\operatorname{dist}_{\overline{d}}\bigl(u,F(y)\bigr)
\le\sup_{u'\in F(x)}\operatorname{dist}_{\overline{d}}\bigl(u',F(y)\bigr)
\le H_{\overline{d}}\bigl(F(x),F(y)\bigr)\le\overline{d}(x,y).
\]
Combining this
with \eqref{eq:rank-increase} yields
\begin{equation}
    \rho(u,v)
    \leq
    c\rho(x,y).
    \label{eq:rho-contraction}
\end{equation}
Now fix a chain $x=x_0,\ldots,x_m=y$ and a point $u_0\in F(x_0)$. Applying
\eqref{eq:rho-contraction} successively, we can construct a chain $u_0,\dots, u_m$ by choosing $u_i\in F(x_i)$ such that
$\rho(u_{i-1},u_i)\leq c\rho(x_{i-1},x_i)$ at each $i \in [m]$. Then
\begin{equation}
    D(u_0,u_m)
    \leq
    c\sum_{i=1}^{m}\rho(x_{i-1},x_i).
    \label{eq:lifted-chain-bound}
\end{equation}
Taking the infimum over all chains gives
\[\operatorname{dist}_D(u_0,F(y))\leq cD(x,y).\]
Then taking the supremum over $u_0\in F(x)$ gives
\[
    \sup_{u\in F(x)}
    \operatorname{dist}_D(u,F(y))
    \leq cD(x,y).
\]
Interchanging $x$ and $y$ gives the corresponding estimate from
$F(y)$ to $F(x)$. Taking the maximum of the two directed estimates proves
\eqref{eq:constructed-nadler-contraction}.

\paragraph{Step 5: Transfer to the original metric.}
Let $S:=K_N$, so that $S\subseteq B_d(x^*,\varepsilon)$. If
$x\notin S$, then $\ell(x)<0$. Every chain from $x$ to $x^*$ must enter
$S$, and every link before the first entrance has $\rho$-length at least its
$\overline d$-length. Consequently,
\begin{equation}
    D(x,x^*)
    \geq
    \operatorname{dist}_{\overline d}(x,S).
    \label{eq:distance-to-core-bound}
\end{equation}
If $D(x,x^*)\leq\varepsilon$, then either $x\in S$, in which case
$d(x,x^*)\leq\varepsilon$, or
$\operatorname{dist}_{d}(x,S)\leq
\operatorname{dist}_{\overline d}(x,S)\leq\varepsilon$. In the latter case,
\begin{equation}
    d(x,x^*)
    \leq
    \operatorname{dist}_{d}(x,S)
    +
    \sup_{z\in S}d(z,x^*)
    \leq2\varepsilon.
    \label{eq:endpoint-transfer-proof}
\end{equation}
This proves \eqref{eq:endpoint-transfer}.

For the pairwise statement, suppose first that $x,y\notin S$. A chain from
$x$ to $y$ either avoids $S$, in which case its length is at least
$\overline d(x,y)$, or enters $S$, in which case its initial segment has
length at least $\operatorname{dist}_{\overline d}(x,S)$. Hence
\begin{equation}
    D(x,y)
    \geq
    \min\left\{
        \overline d(x,y),
        \operatorname{dist}_{\overline d}(x,S)
    \right\}.
    \label{eq:pairwise-lower-bound}
\end{equation}
Thus $D(x,y)\leq\varepsilon$ implies either $d(x,y)\leq\varepsilon$ or
$d(x,x^*)\leq2\varepsilon$. If one of $x$ and $y$ already belongs to
$S$, then its $d$-distance from $x^*$ is at most $\varepsilon$.
This proves \eqref{eq:pairwise-transfer} and completes the proof.

\begin{remark}
Theorem~\ref{thm:daskalakis-nadler-converse} is a mathematical universality
result: every robustly globally convergent compact-valued iteration admits a
complete equivalent metric under which all branches contract geometrically.
Its construction is not automatically efficient. In particular, it uses
the globally reachable sets $K_t$, the infinite supremum defining
$\overline d$, and the shortest-chain closure defining $D$. These objects
need not have polynomial-size circuit representations. Consequently, a
computational converse suitable for a \textsf{CLS} reduction requires additional
succinctness assumptions, such as an efficiently evaluable potential
that decreases for every $u\in F(x)$. The compact, Hausdorff-continuous
formulation above is nevertheless compatible with the representation of
correspondences used in computational Kakutani problems
\cite{PapadimitriouVlatakisGkaragkounisZampetakis2023}.  
\end{remark}
\subsection{Proof of Corollary \ref{cor:robust-global-convergence}}
Since $F(x^*)=\{x^*\}$, the contraction inequality implies that every
$u\in F(x)$ satisfies $D(u,x^*)\leq cD(x,x^*)$. This gives
\eqref{eq:trajectory-linear-convergence}. It also yields, for every
$u\in F(x)$,
\begin{align}
    D(x,u)
    &\geq
    D(x,x^*)-D(u,x^*)
    \geq
    (1-c)D(x,x^*), \nonumber\\
    D(x,u)
    &\leq
    D(x,x^*)+D(u,x^*)
    \leq
    (1+c)D(x,x^*).
    \label{eq:residual-two-sided-proof}
\end{align}
Taking the infimum over $u\in F(x)$ proves
\eqref{eq:residual-error-bound}. In particular, a fixed point has zero
distance from $x^*$, proving uniqueness. Finally,
\eqref{eq:residual-error-bound} and
\eqref{eq:trajectory-linear-convergence} give
\[
    D(x_t,x^*)
    \leq
    \frac{c^t}{1-c}r_D(x_0).
\]
which proves \eqref{eq:iteration-complexity-bound} by solving $t$ such that $\frac{c^t}{1-c}r_D(x_0) \le \varepsilon$ and then apply \eqref{eq:endpoint-transfer}.

\section{Omitted Proofs in Section \ref{sec:large-margin-triplet-nadler}}
\label{sec:appendix-triplet-loss}
\subsection{Technical lemmas}

We first isolate the quantitative facts used by the reduction.
\begin{lemma}[Conversion from $\varepsilon$-FOSP to $\varepsilon$-KKT point]
\label{lem:fosp-to-kkt-conversion}
Let $ K=[0,1]^q,$ and suppose that $\mathcal L$ is continuously differentiable on $K$ and $\nabla\mathcal L$ is $\Lambda$-Lipschitz on $K$ for
some $\Lambda>0$. Let $\delta>0$, and assume that $z\in K$ is a $\delta$-FOSP for $\mathcal{L}$. Then 
\[
    y
    :=
    \Pi_K\left(
        z-\frac{1}{\Lambda}\nabla\mathcal L(z)
    \right)
    \label{eq:fosp-to-kkt-projected-step}
\]
is a $2\sqrt{\Lambda\delta}$-KKT point of $\mathcal L$ over
$K$.
\end{lemma}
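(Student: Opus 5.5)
The plan is to treat $y$ as one projected-gradient step with step $1/\Lambda$ and use the box structure coordinatewise. The proof splits into two parts: first bound the step length $\|y-z\|_2$ by $\sqrt{\delta/\Lambda}$, then convert the projection optimality conditions at $y$ into the KKT inequalities, absorbing the gradient change from $z$ to $y$ through smoothness.

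\textbf{Step 1: a short step.} Write $g=\nabla\mathcal L(z)$ and $y=\Pi_K(z-g/\Lambda)$. Projection optimality gives $z-g/\Lambda-y\in N_K(y)$. Testing this normal-cone inequality at $u=z\in K$ yields $\langle z-g/\Lambda-y,\,z-y\rangle\le 0$. Rearranged, this reads
\[
\|z-y\|_2^2\le \tfrac1\Lambda\langle g,z-y\rangle.
\]
Now apply the $\delta$-FOSP condition at $u=y$, which gives $\langle y-z,g\rangle\ge-\delta$, that is, $\langle g,z-y\rangle\le\delta$. Combining the two, $\|z-y\|_2^2\le\delta/\Lambda$, so $\|z-y\|_2\le\sqrt{\delta/\Lambda}$.

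\textbf{Step 2: KKT conditions at $y$.} Since $K$ is a box, the projection acts coordinatewise: $y_i=\min\{1,\max\{0,z_i-g_i/\Lambda\}\}$.
\begin{itemize}
\item If $0<y_i<1$, then $y_i=z_i-g_i/\Lambda$, so $|g_i|=\Lambda|z_i-y_i|$.
\item If $y_i=1>0$, then $z_i-g_i/\Lambda\ge1\ge z_i$, so $g_i\le0$. The requirement $g_i\le\text{small}$ holds trivially, and only the condition for $x_i>0$ is needed.
\item If $y_i=0$, then symmetrically $g_i\ge0$.
\end{itemize}
In every case, whenever $y_i>0$ we get $g_i\le\Lambda|z_i-y_i|$, and whenever $y_i<1$ we get $g_i\ge-\Lambda|z_i-y_i|$. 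Hence $|g_i|$ exceeds zero on the wrong side by at most $\Lambda\|z-y\|_2\le\sqrt{\Lambda\delta}$.

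\textbf{Step 3: transfer to $\nabla\mathcal L(y)$.} The $\Lambda$-Lipschitz gradient gives
\[
|\partial_i\mathcal L(y)-g_i|\le\|\nabla\mathcal L(y)-g\|_2\le\Lambda\|y-z\|_2\le\sqrt{\Lambda\delta}.
\]
Adding the two error terms gives the bound $2\sqrt{\Lambda\delta}$ in each KKT inequality.

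The only delicate point is Step 1, namely choosing the test points $u=z$ in the normal-cone inequality and $u=y$ in the FOSP condition so that the two inequalities chain. The rest is bookkeeping in the coordinatewise case analysis.
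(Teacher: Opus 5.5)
Your proof is correct and follows the paper's argument. You use the same test points ($u=z$ in the normal-cone inequality and $u=y$ in the FOSP condition) to get $\|z-y\|_2\le\sqrt{\delta/\Lambda}$, and your coordinatewise clipping analysis is the sign pattern of the box normal cone applied to the paper's certificate $n=\Lambda(z-y)-\nabla\mathcal L(z)\in N_K(y)$. The only cosmetic difference is that the paper bounds $\|\nabla\mathcal L(y)+n\|_2\le 2\Lambda\|z-y\|_2$ in one step before passing to coordinates, whereas you bound the two error terms separately per coordinate; both give $2\sqrt{\Lambda\delta}$.
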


\begin{proof}
Let $g:=\nabla\mathcal L(z).$ By the first-order characterization of Euclidean projection,
\eqref{eq:fosp-to-kkt-projected-step} implies $$ z-\frac{1}{\Lambda}g-y\in
    N_K(y).$$ Since $z\in K$, we may test
$ z-\frac{1}{\Lambda}g-y$ against $z-y$. This gives
\begin{align}
    0
    \geq
    \left\langle
        z-\frac{1}{\Lambda}g-y,
        z-y
    \right\rangle
    =
    \|z-y\|_2^2
    -
    \frac{1}{\Lambda}
    \langle g,z-y\rangle.
    \label{eq:projection-displacement-first}
\end{align}
Consequently, $\|z-y\|_2^2
    \leq
    \frac{1}{\Lambda}
    \langle g,z-y\rangle.$ Applying the $\delta$-FOSP condition with $u=y$ yields $\langle g,z-y\rangle \leq \delta.$
Combining this inequality with
\eqref{eq:projection-displacement-first}, we obtain $\|z-y\|_2
    \leq
    \sqrt{\frac{\delta}{\Lambda}}.$
Now define
\[
    n
    :=
    \Lambda(z-y)-g
    \in
    N_K(y).
    \label{eq:approximate-normal-certificate}
\]
The $\Lambda$-Lipschitz continuity of the gradient and imply
\begin{align}
    \|\nabla\mathcal L(y)+n\|_2
    &=
    \left\|
        \nabla\mathcal L(y)
        -
        \nabla\mathcal L(z)
        +
        \Lambda(z-y)
    \right\|_2
    \\
    &\leq
    \|\nabla\mathcal L(y)-\nabla\mathcal L(z)\|_2
    +
    \Lambda\|z-y\|_2
    \\
    &\leq
    2\Lambda\|z-y\|_2
    \\
    &\leq
    2\sqrt{\Lambda\delta}.
    \label{eq:approximate-normal-residual}
\end{align}
It follows in particular that $\|\nabla\mathcal L(y)+n\|_\infty\leq 2\sqrt{\Lambda\delta}.$
It remains to translate
\eqref{eq:approximate-normal-residual} into the coordinatewise KKT
conditions. For the box $K=[0,1]^q$, every $n\in N_K(y)$ satisfies
\[
    n_i
    \begin{cases}
        \leq 0, & y_i=0,\\
        =0, & 0<y_i<1,\\
        \geq 0, & y_i=1.
    \end{cases}
    \label{eq:box-normal-cone-coordinates}
\]
Suppose that $y_i>0$. If $y_i<1$, then $n_i=0$; if
$y_i=1$, then $n_i\geq0$. In either case,
\[
    \partial_i\mathcal L(y)
    \leq
    \partial_i\mathcal L(y)+n_i
    \leq
    \left|
        \partial_i\mathcal L(y)+n_i
    \right|
    \leq
    2\sqrt{\Lambda\delta}.
\]
Similarly, suppose that $y_i<1$. If $y_i>0$, then $n_i=0$; if
$y_i=0$, then $n_i\leq0$. Hence
\[
    \partial_i\mathcal L(y)
    \geq
    \partial_i\mathcal L(y)+n_i
    \geq
    -
    \left|
        \partial_i\mathcal L(y)+n_i
    \right|
    \geq
    -2\sqrt{\Lambda\delta}.
\]
Finally, substituting $\delta=\frac{\varepsilon^2}{4\Lambda}$ gives $ 2\sqrt{\Lambda\delta}
    =
    \varepsilon,$ so an $\varepsilon^2/(4\Lambda)$-FOSP is converted into an
$\varepsilon$-KKT point. We claim that the point $y$ is polynomial time computable because the projection onto the box $K$ is coordinatewise clipping to $[0,1]$ and it is computable by a polynomial-size circuit.
\end{proof}

\begin{lemma}[Large margin leads to non-negativity]
\label{lem:triplet-quadratic-structure}
Let
$W:=\sum_{t\in\mathcal C} w_t$, and assume that $W>0$ and
$\alpha \ge d_{\mathrm{emb}}$. Then the following properties hold.
\begin{enumerate}
    \item[\rm(i)]
    For every $z\in K$ and every triplet $t\in\mathcal C$, we have
    \[
        \alpha-d_{\mathrm{emb}}
        \leq
        \|z_{i_t}-z_{j_t}\|_2^2
        -
        \|z_{i_t}-z_{k_t}\|_2^2
        +
        \alpha
        \leq
        \alpha+d_{\mathrm{emb}}.
    \]
    Consequently, $\mathcal L$ agrees on $K$ with the quadratic polynomial
    \begin{equation}
        \mathcal L(z)
        =
        \sum_{t\in\mathcal C} w_t
        \left(
            \|z_{i_t}-z_{j_t}\|_2^2
            -
            \|z_{i_t}-z_{k_t}\|_2^2
            +
            \alpha
        \right).
        \label{eq:large-margin-quadratic-loss}
    \end{equation}
    \item[\rm(ii)]
    For every $z\in K$, $ W(\alpha-d_{\mathrm{emb}})
        \leq
        \mathcal L(z)
        \leq
        W(\alpha+d_{\mathrm{emb}}).$ Define $R:=2d_{\mathrm{emb}}W$ and
    \begin{equation}
        p(z)
        :=
        \frac{
            \mathcal L(z)-W(\alpha-d_{\mathrm{emb}})
        }{R}.
        \label{eq:normalized-triplet-potential}
    \end{equation}
    Then $p(z)\in[0,1]$ for every $z\in K$.
\end{enumerate}
\end{lemma}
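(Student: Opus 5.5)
The plan is to bound each hinge argument pointwise on the box and then read off both parts. Fix $z\in K$ and a triplet $t=(i_t,j_t,k_t)$. Every embedding involved, whether movable ($z_v\in[0,1]^{d_{\mathrm{emb}}}$) or a pivot ($a_A,a_B\in[0,1]^{d_{\mathrm{emb}}}$), lies in the unit cube $[0,1]^{d_{\mathrm{emb}}}$. Hence for any two such vectors $x,y$ each coordinate difference satisfies $|x_s-y_s|\le 1$, and so $0\le\|x-y\|_2^2\le d_{\mathrm{emb}}$. Applying this to the pairs $(z_{i_t},z_{j_t})$ and $(z_{i_t},z_{k_t})$ gives
\[
-d_{\mathrm{emb}}\le \|z_{i_t}-z_{j_t}\|_2^2-\|z_{i_t}-z_{k_t}\|_2^2\le d_{\mathrm{emb}}.
\]
Adding $\alpha$ yields the two-sided bound in (i). Since $\alpha\ge d_{\mathrm{emb}}$, the lower bound $\alpha-d_{\mathrm{emb}}$ is nonnegative. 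Therefore each $[\cdot]_+$ acts as the identity on $K$, and summing with the weights $w_t$ gives the quadratic expression \eqref{eq:large-margin-quadratic-loss}. Degenerate triplets, for example $j_t=k_t$ or repeated indices, need no separate treatment, because the pairwise bound holds for any two points of the cube.

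For (ii), I multiply the bounds of (i) by $w_t\ge 0$ and sum over $\mathcal C$. This gives $W(\alpha-d_{\mathrm{emb}})\le\mathcal L(z)\le W(\alpha+d_{\mathrm{emb}})$. Subtracting $W(\alpha-d_{\mathrm{emb}})$ puts the numerator of $p$ in $[0,2d_{\mathrm{emb}}W]=[0,R]$. Because $W>0$ and $d_{\mathrm{emb}}\ge 1$, we have $R>0$, and dividing by $R$ gives $p(z)\in[0,1]$.

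There is no real obstacle here. The only point to get right is the elementary bound $\|x-y\|_2^2\le d_{\mathrm{emb}}$ on the unit cube, together with the observation that it covers the fixed pivots as well as the movable points.
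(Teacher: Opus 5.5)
Your proof is correct and follows the paper's argument: the bound $0\le\|a-b\|_2^2\le d_{\mathrm{emb}}$ on the unit cube gives the two-sided hinge bound, $\alpha\ge d_{\mathrm{emb}}$ makes each $[\cdot]_+$ act as the identity, and weighted summation gives (ii). Your remarks that the pivots also lie in the cube, that $w_t\ge 0$ is what lets you sum the bounds, and that $R>0$ fill in small details the paper leaves implicit.
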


\begin{proof}
For any $a,b\in[0,1]^{d_{\mathrm{emb}}}$, one has
$0\leq\|a-b\|_2^2\leq d_{\mathrm{emb}}$. Hence 
\[
\|z_{i_t}-z_{j_t}\|_2^2
        -
        \|z_{i_t}-z_{k_t}\|_2^2
        +
        \alpha
\]
lies in
$[\alpha-d_{\mathrm{emb}},\alpha+d_{\mathrm{emb}}]$ and is non-negative. This proves \eqref{eq:large-margin-quadratic-loss}. The same bound
implies
$W(\alpha-d_{\mathrm{emb}})\leq\mathcal L(z)\leq
W(\alpha+d_{\mathrm{emb}})$, so \eqref{eq:normalized-triplet-potential}
takes values in $[0,1]$.
\end{proof}
Yan et al. shows that the loss function for triplet loss can be written as the quadratic formula of \textsc{QuadraticProgram-KKT} with an additional constant \cite[Theorem 4.1]{YanEtAlTriplet}. This brings about the following lemma. 
\begin{lemma}[Lipschitz  bound for quadratic formula]
\label{lem:lipchitz-quadratic-bound}
Let $$\mathcal L(z)= \frac{1}{2}z^\top H z+h^\top z+c,$$ where $H\in\mathbb{Q}^{q\times q}$ is symmetric and $c$ is constant. Define
$ B_H=\max_{\ell\in[q]}\sum_{j=1}^{q}|H_{\ell j}|,\,\Lambda=\max\{1,B_H\},\,G=\max\{1,\|h\|_\infty+B_H\},$ and $L_p=\frac{G}{R}.$  Then, for all $x,y,z\in K$, $\nabla \mathcal L$ is $\Lambda$-Lipschitz  with $\|\nabla\mathcal L(z)\|_\infty \leq G,$ and $$|p(x)-p(y)| \leq L_p\|x-y\|_1.$$ The matrix $H$, the vector $h$, and all constants introduced above can be computed from the triplet instance in polynomial time and have polynomial
encoding length.
\end{lemma}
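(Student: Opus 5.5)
The plan is to compute the gradient and Hessian of the quadratic form explicitly and to bound them by row sums of $H$. Since $\mathcal L(z)=\tfrac12 z^\top Hz+h^\top z+c$ with $H$ symmetric, we have $\nabla\mathcal L(z)=Hz+h$. For any $x,y$ this gives $\nabla\mathcal L(x)-\nabla\mathcal L(y)=H(x-y)$. The Euclidean operator norm of a symmetric matrix is at most $\sqrt{\|H\|_1\|H\|_\infty}=\max_\ell\sum_j|H_{\ell j}|=B_H$; here $\|H\|_1=\|H\|_\infty$ by symmetry. Hence $\|\nabla\mathcal L(x)-\nabla\mathcal L(y)\|_2\le B_H\|x-y\|_2\le\Lambda\|x-y\|_2$. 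For the sup-norm bound, take $z\in K=[0,1]^q$, so $|z_j|\le1$. Then $|(Hz+h)_\ell|\le\sum_j|H_{\ell j}|+|h_\ell|\le B_H+\|h\|_\infty\le G$.

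For the potential, $p=(\mathcal L-W(\alpha-d_{\mathrm{emb}}))/R$ by Lemma~\ref{lem:triplet-quadratic-structure}, so $|p(x)-p(y)|=|\mathcal L(x)-\mathcal L(y)|/R$. By the mean value theorem along the segment $[x,y]\subseteq K$ (convexity of the box), $|\mathcal L(x)-\mathcal L(y)|\le\sup_{\xi\in K}\|\nabla\mathcal L(\xi)\|_\infty\|x-y\|_1\le G\|x-y\|_1$, using Hölder's inequality between $\ell_\infty$ and $\ell_1$. This yields the constant $L_p=G/R$. Note that $R=2d_{\mathrm{emb}}W>0$ because we assume $W>0$.

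The remaining, and main, step is the computational claim: $H$, $h$, $c$ must be extracted in polynomial time with polynomial encoding length. I would expand each triplet term explicitly. The term $w_t(\|z_i-z_j\|^2-\|z_i-z_k\|^2+\alpha)$ equals $w_t(\|z_j\|^2-\|z_k\|^2-2\langle z_i,z_j-z_k\rangle+\alpha)$. Its entries contribute $\pm 2w_t$ or $\pm w_t$ to at most $O(d_{\mathrm{emb}})$ coordinates of $H$ per triplet. When a point is a fixed pivot $A$ or $B$, its known coordinates are moved into the linear term $h$ or the constant $c$. This is the structure from \cite[Theorem 4.1]{YanEtAlTriplet}. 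Summing over $|\mathcal C|$ triplets gives entries that are sums of polynomially many input rationals times small integers or pivot coordinates, so they have polynomial bit length. Then $B_H$, $\Lambda$, $G$, $R$ and $L_p$ follow by a polynomial number of additions, maxima and one division. The bookkeeping of the pivot cases is the only place where care is needed; I expect no conceptual obstacle.
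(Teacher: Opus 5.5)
Your proposal is correct and follows the paper's proof essentially step for step. Both arguments use $\nabla\mathcal L(z)=Hz+h$, the bound $\|H\|_2\le\sqrt{\|H\|_1\|H\|_\infty}=B_H$ via symmetry, the $\ell_\infty$ gradient bound on the box, the mean value theorem with $\ell_\infty$--$\ell_1$ duality for $p$, and expansion of the squared distances for the computational claim, with your explicit expansion and the remark that $R>0$ because $W>0$ being slightly more detailed than the paper.
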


\begin{proof}
For $x,y \in K$, we can show that $\nabla \mathcal{L}(z) = Hz+h$ and 
\[
\nabla \mathcal L(x)-\nabla \mathcal L(y) = H(x-y).
\]
Because $H$ is symmetric, its maximum absolute row sum equals its maximum
absolute column sum. Therefore,
\begin{equation}
    \|H\|_2
    \leq
    \sqrt{\|H\|_1\|H\|_\infty}
    =
    B_H
    \leq
    \Lambda.
    \label{eq:triplet-hessian-bound}
\end{equation}

Thus $\nabla\mathcal L$ is $\Lambda$-Lipschitz in $\ell_2$. Since
$\|z\|_\infty\leq1$ on $K$, we also have
$\|Hz+h\|_\infty\leq B_H+\|h\|_\infty\leq G$. The mean-value theorem then
gives, for all $x,y\in K$,
\begin{equation}
    |p(x)-p(y)|
    \leq
    \frac{1}{R}
    \sup_{z\in K}\|\nabla\mathcal L(z)\|_\infty
    \|x-y\|_1
    \leq
    L_p\|x-y\|_1.
    \label{eq:triplet-potential-lipschitz}
\end{equation}

Expanding the squared distances in \eqref{eq:large-margin-quadratic-loss} into quadratic form will produce $H$, $h$, and $c$ using polynomially many rational arithmetic operations, which proves the final claim.
\end{proof}

\begin{lemma}[Projected-gradient certificate]
\label{lem:triplet-projected-gradient-certificate}
Under the assumptions of Lemma~\ref{lem:triplet-quadratic-structure} and \ref{lem:lipchitz-quadratic-bound}, define
\begin{equation}
    T(z)
    :=
    \Pi_K\!\left(
        z-\frac{1}{\Lambda}\nabla\mathcal L(z)
    \right),
    \label{eq:triplet-projected-gradient-map}
\end{equation}
where $\Pi_K$ is Euclidean projection onto $K$, and set
\begin{equation}
    \delta
    :=
    \min\left\{
        \frac12,
        \frac{\varepsilon^2}{8qR\Lambda}
    \right\}.
    \label{eq:triplet-delta}
\end{equation}

The map $T$ is represented by a polynomial-size arithmetic circuit and is
$2$-Lipschitz with respect to $\ell_1$. Furthermore, if $x\in K$ satisfies
$p(T(x))\geq p(x)-\delta$, then $T(x)$ is an $\varepsilon$-FOSP of
$\mathcal L$.
\end{lemma}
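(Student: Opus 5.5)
The plan has three parts: build the circuit for $T$ and bound its Lipschitz constant, show the potential decreases by a fixed amount along each projected-gradient step, and then convert small decrease into approximate stationarity of the point $T(x)$.

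\textbf{Circuit and Lipschitz bound.} By Lemma~\ref{lem:lipchitz-quadratic-bound}, $\nabla\mathcal L(z)=Hz+h$ has rational coefficients. Projection onto the box $K$ is coordinatewise clipping $\min\{1,\max\{0,\cdot\}\}$. Hence $T$ is an affine map followed by clipping, which is a polynomial-size circuit with no true multiplications. Clipping is $1$-Lipschitz coordinatewise, so
\[
\|T(x)-T(y)\|_1\le \|(I-H/\Lambda)(x-y)\|_1\le (1+\|H\|_1/\Lambda)\|x-y\|_1 .
\]
Since $H$ is symmetric, $\|H\|_1$ (the maximum column sum) equals $B_H\le\Lambda$. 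This gives the claimed factor $2$.

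\textbf{Sufficient decrease.} Write $g=\nabla\mathcal L(x)$ and $y=T(x)$. The projection characterization gives $\langle x-g/\Lambda-y,\,u-y\rangle\le0$ for all $u\in K$. Taking $u=x$ yields $\langle g,y-x\rangle\le-\Lambda\|x-y\|_2^2$. Since $\mathcal L$ is quadratic with $\|H\|_2\le\Lambda$ (by \eqref{eq:triplet-hessian-bound}), the descent lemma gives
\[
\mathcal L(y)\le \mathcal L(x)+\langle g,y-x\rangle+\tfrac{\Lambda}{2}\|y-x\|_2^2\le \mathcal L(x)-\tfrac{\Lambda}{2}\|x-y\|_2^2 .
\]
The hypothesis $p(y)\ge p(x)-\delta$ means $\mathcal L(y)\ge\mathcal L(x)-R\delta$. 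Combining the two bounds gives $\|x-y\|_2\le\sqrt{2R\delta/\Lambda}$.

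\textbf{Stationarity of $y$.} This is the main step, and I would argue it as in Lemma~\ref{lem:fosp-to-kkt-conversion}. Set $n=\Lambda(x-y)-g\in N_K(y)$, so $\langle n,u-y\rangle\le0$ for every $u\in K$. Then for any $u\in K$,
\[
\langle u-y,\nabla\mathcal L(y)\rangle \ge \langle u-y,\nabla\mathcal L(y)+n\rangle \ge -\|u-y\|_2\,\|\nabla\mathcal L(y)-g+\Lambda(x-y)\|_2 \ge -\sqrt q\cdot 2\Lambda\|x-y\|_2 .
\]
Here $\|u-y\|_2\le\sqrt q$ because $K=[0,1]^q$, and the last norm is at most $2\Lambda\|x-y\|_2$ because $\nabla\mathcal L$ is $\Lambda$-Lipschitz. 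Inserting the bound on $\|x-y\|_2$ gives the error term $2\sqrt{q}\,\Lambda\sqrt{2R\delta/\Lambda}=\sqrt{8qR\Lambda\delta}$.

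With $\delta\le\varepsilon^2/(8qR\Lambda)$, this error is at most $\varepsilon$, so $y=T(x)$ is an $\varepsilon$-FOSP. The cap $\delta\le1/2$ is irrelevant for this implication and is only a parameter convenience for the reduction. The only care needed is in the constants: the Hessian bound must be the spectral one from Lemma~\ref{lem:lipchitz-quadratic-bound}, and the diameter of $K$ must be taken as $\sqrt q$.
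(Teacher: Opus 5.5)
Your proposal is correct and follows the paper's proof essentially step for step: the same $\ell_1$ operator-norm bound $1+B_H/\Lambda\le 2$, the same sufficient-decrease estimate $\|x-y\|_2\le\sqrt{2R\delta/\Lambda}$ (from projection optimality with $u=x$ plus the descent lemma), and the same final bound $\sqrt{8qR\Lambda\delta}\le\varepsilon$. Your normal-cone vector $n=\Lambda(x-y)-g\in N_K(y)$ is a repackaging of the paper's inequality $\langle\nabla\mathcal L(x),u-y\rangle\ge\Lambda\langle x-y,u-y\rangle$, so the stationarity step is identical as well.
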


\begin{proof}
Projection onto the box $K$ is coordinatewise clipping to $[0,1]$, so it is
computable by a polynomial-size circuit and is nonexpansive in $\ell_1$.
Using $\nabla\mathcal L(x)-\nabla\mathcal L(y)=H(x-y)$ and
\[\|H\|_{1,1}=\max_{\ell\in[q]}\sum_{j=1}^{q}|H_{\ell j}|=B_H\] for symmetric matrix $H$, we obtain
\begin{align}
    \|T(x)-T(y)\|_1
    &\leq
    \left\|
        (x-y)
        -
        \frac{1}{\Lambda}
        \bigl(\nabla\mathcal L(x)-\nabla\mathcal L(y)\bigr)
    \right\|_1
    \notag\\
    &\leq
    \left(1+\frac{B_H}{\Lambda}\right)
    \|x-y\|_1
    \leq
    2\|x-y\|_1.
    \label{eq:triplet-update-lipschitz}
\end{align}
Here $\|\cdot\|_{1,1}$ is the operator norm when both input and output are measured by $\ell_1$ norm. Now fix $x\in K$ with $p(T(x))\geq p(x)-\delta$ and write $y:=T(x)$.
Projection optimality with the comparison point $x$ gives
$\langle\nabla\mathcal L(x),y-x\rangle
\leq-\Lambda\|y-x\|_2^2$. By $\Lambda$-smoothness,
\begin{align}
    \mathcal L(y)
    \leq
    \mathcal L(x)
    +
    \langle\nabla\mathcal L(x),y-x\rangle
    +
    \frac{\Lambda}{2}\|y-x\|_2^2
    \leq
    \mathcal L(x)
    -
    \frac{\Lambda}{2}\|y-x\|_2^2.
    \label{eq:triplet-descent}
\end{align}
Since $\mathcal L=Rp+W(\alpha-d_{\mathrm{emb}})$, the assumed potential
inequality and \eqref{eq:triplet-descent} imply
$\|y-x\|_2\leq\sqrt{2R\delta/\Lambda}$.

For any $u\in K$, the variational inequality for the projection in
\eqref{eq:triplet-projected-gradient-map} yields

\begin{equation}
    \langle\nabla\mathcal L(x),u-y\rangle
    \geq
    \Lambda\langle x-y,u-y\rangle.
    \label{eq:triplet-projection-vi}
\end{equation}
Combining \eqref{eq:triplet-projection-vi} with the $\Lambda$-Lipschitz
continuity of the gradient and using $\|u-y\|_2\leq\sqrt q$, we obtain
\begin{align}
    \langle u-y,\nabla\mathcal L(y)\rangle
    \geq
    -2\Lambda\|x-y\|_2\|u-y\|_2
    \geq
    -\sqrt{8qR\Lambda\delta}
    \geq
    -\varepsilon.
    \label{eq:triplet-vi-stationarity}
\end{align}
Thus $y=T(x)$ satisfies Definition~\ref{def:large-margin-triplet-fosp}.
\end{proof}

\begin{lemma}[Potential ultrametric]
\label{lem:triplet-potential-ultrametric}
Let $p:K\to[0,1]$ be the potential from
\eqref{eq:normalized-triplet-potential}, and define

\begin{equation}
    d_p(x,y)
    :=
    \begin{cases}
        0, & x=y,\\[1mm]
        1+\max\{p(x),p(y)\}, & x\neq y.
    \end{cases}
    \label{eq:triplet-potential-metric}
\end{equation}

Then $d_p$ is a complete ultrametric on $K$ with diameter at most two. In
addition, for any off-diagonal pairs $(a,b)$ and $(a',b')$,

\begin{equation}
    |d_p(a,b)-d_p(a',b')|
    \leq
    L_p\bigl(\|a-a'\|_1+\|b-b'\|_1\bigr).
    \label{eq:triplet-metric-lipschitz}
\end{equation}
\end{lemma}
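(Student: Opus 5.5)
The plan has three parts: verify that $d_p$ is an ultrametric, verify diameter and completeness, and then prove the Lipschitz estimate \eqref{eq:triplet-metric-lipschitz} by reducing it to the Lipschitz bound on $p$ from Lemma~\ref{lem:lipchitz-quadratic-bound}. The first two parts repeat the argument already given in the proof of Lemma~\ref{lem:clo-to-nadler}. The only new input is that $p$ now takes values in $[0,1]$ by Lemma~\ref{lem:triplet-quadratic-structure}(ii), and that its $\ell_1$-Lipschitz constant $L_p$ is given explicitly.

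\textbf{Metric and ultrametric properties.} Symmetry and $d_p(x,x)=0$ are immediate from \eqref{eq:triplet-potential-metric}. Positivity for $x\neq y$ follows from $d_p(x,y)\ge 1>0$. For the strong triangle inequality $d_p(x,z)\le\max\{d_p(x,y),d_p(y,z)\}$, I would first dispose of the cases with repeated points directly (if $x=z$ the left side is $0$; if $y=x$ or $y=z$ both sides agree). For pairwise distinct $x,y,z$, note that $\max\{d_p(x,y),d_p(y,z)\}\ge\max\{1+p(x),1+p(z)\}=d_p(x,z)$. An ultrametric is in particular a metric.

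\textbf{Diameter and completeness.} Since $p\in[0,1]$, every nonzero distance lies in $[1,2]$, which gives diameter at most two. For completeness, take a $d_p$-Cauchy sequence and apply the Cauchy condition with tolerance $1/2$: all late pairs must then coincide, so the sequence is eventually constant and therefore converges.

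\textbf{Lipschitz estimate.} For off-diagonal pairs $(a,b)$ and $(a',b')$, both evaluations use the branch $1+\max\{p(\cdot),p(\cdot)\}$, so the constant $1$ cancels on subtraction. The elementary inequality $|\max\{s,t\}-\max\{s',t'\}|\le\max\{|s-s'|,|t-t'|\}$ then bounds the difference by $\max\{|p(a)-p(a')|,|p(b)-p(b')|\}$. Applying Lemma~\ref{lem:lipchitz-quadratic-bound}, this is at most $L_p\max\{\|a-a'\|_1,\|b-b'\|_1\}$, which is at most $L_p(\|a-a'\|_1+\|b-b'\|_1)$. I would justify the max inequality by a two-line case check: assume without loss of generality that $\max\{s,t\}\ge\max\{s',t'\}$ and that the left maximum is attained at $s$; then the difference is at most $s-s'$.

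The main point requiring care is that the estimate must hold for all off-diagonal pairs, including those where $a$ is near $b$. The restriction $a\neq b$, $a'\neq b'$ is exactly what avoids the jump of $d_p$ at the diagonal. Beyond that, I expect no real obstacle; everything else is routine.
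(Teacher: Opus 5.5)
Your proposal is correct and follows the paper's proof essentially step for step. For pairwise distinct points you use the lower bound $d_p(x,y)\ge 1+p(x)$ to get the strong triangle inequality, and completeness and the diameter bound both come from nonzero distances lying in $[1,2]$. The Lipschitz estimate follows from cancelling the constant $1$ on off-diagonal pairs, the $1$-Lipschitz property of $\max$, and the $\ell_1$-Lipschitz bound on $p$ from Lemma~\ref{lem:lipchitz-quadratic-bound}.
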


\begin{proof}
If $x,y,z$ are pairwise distinct, then
$d_p(x,z)\geq1+p(x)$ and $d_p(z,y)\geq1+p(y)$, so
$d_p(x,y)\leq\max\{d_p(x,z),d_p(z,y)\}$. The remaining cases are immediate.
Thus $d_p$ is an ultrametric. Every nonzero distance belongs to $[1,2]$;
consequently, every $d_p$-Cauchy sequence is eventually constant. Hence
$(K,d_p)$ is complete and has diameter at most two.

For off-diagonal pairs, the constant term in
\eqref{eq:triplet-potential-metric} cancels. The elementary Lipschitz property
of the maximum and Lemma~\ref{lem:lipchitz-quadratic-bound} give
\begin{align*}
    |d_p(a,b)-d_p(a',b')|
    &\leq
    \max\left\{
        |p(a)-p(a')|,
        |p(b)-p(b')|
    \right\}\\
    &\leq
    L_p\bigl(\|a-a'\|_1+\|b-b'\|_1\bigr).
\end{align*}
This proves \eqref{eq:triplet-metric-lipschitz}.
\end{proof}

\subsection{Proof of Theorem \ref{thm:large-margin-triplet-to-nadler} and Corollary \ref{cor:large-margin-triplet-cls-complete}}

\begin{proof}[Proof of Theorem \ref{thm:large-margin-triplet-to-nadler}]
If $W=\sum_{t\in\mathcal C} w_t=0$, the objective is constant. In this case, the reduction maps the
input to any fixed valid $\beta$-\textsc{Projected-Nadler} instance (for $\beta=1$) and decodes every
answer as $z=0$, which is an exact FOSP. Assume that $W>0$, and
construct $p$ from Lemma \ref{lem:triplet-quadratic-structure}, $T, \delta$ from Lemma \ref{lem:triplet-projected-gradient-certificate}, and $d_p$ from
Lemma~\ref{lem:triplet-potential-ultrametric}. Define the singleton correspondence
$F_{\mathcal L}(x):=\{T(x)\}$ and its projection circuit by
$\Pi_{F_{\mathcal L}}(x,z):=T(x)$. Since every value is a singleton, this is
an exact projection circuit under $d_p$. The target instance uses
$$D=2, \quad k=1-\delta/2, \quad \varepsilon_1=1/2, \quad \varepsilon_2=\delta/8, \quad L_P=2, \quad L_d=L_p.$$    
Lemma~\ref{lem:triplet-potential-ultrametric} proves completeness of metric space and the
diameter bound, and $\delta\leq1/2$ ensures $k\in(0,1)$. Also, $(1-k)\varepsilon_1 = \delta/4 > \delta/8 = \varepsilon_2 > 0$, which satisfies the promise on accuracy.

Suppose first that the solver returns a point $x$ of type
\textnormal{(N1)}. Then $d_p(x,T(x))\leq1/2$. Because every nonzero
$d_p$-distance is at least one, $x=T(x)$. In particular,
$p(T(x))=p(x)$, so Lemma~\ref{lem:triplet-projected-gradient-certificate}
shows that $x$ is an $\varepsilon$-FOSP.

Now suppose that the solver returns a certificate of type \textnormal{(N2)}.
Since the projection circuit ignores its second argument, either branch
of the certificate gives
\begin{equation*}
    d_p(T(x),T(y))
    > k\,d_p(x,y)+\varepsilon_2
    > k\,d_p(x,y).
\end{equation*}
In particular, $T(x)\neq T(y)$, which also implies $x\neq y$.
We show that at least one $v\in\{x,y\}$ satisfies
$p(T(v))\geq p(v)-\delta$.
Suppose otherwise, so that
$p(T(x))<p(x)-\delta$ and $p(T(y))<p(y)-\delta$.
Since both pairs $(x,y)$ and $(T(x),T(y))$ consist of distinct points,
the off-diagonal formula for $d_p$ gives
\begin{align}
    d_p(T(x),T(y))
    &= 1+\max\{p(T(x)),p(T(y))\} \notag\\
    &< 1+\max\{p(x),p(y)\}-\delta\notag\\
    &= d_p(x,y)-\delta \notag\\
    &\leq \left(1-\frac{\delta}{2}\right)d_p(x,y)\notag\\
    &= k\,d_p(x,y),
    \label{eq:triplet-n2-contradiction}
\end{align}
where the last inequality follows from $d_p(x,y)\leq 2$.
This contradicts the returned certificate.
Thus, the decoder checks the potential-decrease condition at $x$ and $y$,
selects a point $v$ satisfying $p(T(v))\geq p(v)-\delta$,
and returns $T(v)$.
By Lemma~\ref{lem:triplet-projected-gradient-certificate},
this point is an $\varepsilon$-FOSP.

Finally, equation \eqref{eq:triplet-update-lipschitz} rules out type
\textnormal{(N3)} for $\beta=1$, and equation \eqref{eq:triplet-metric-lipschitz} rules out
type \textnormal{(N4)}. The matrices, constants, and arithmetic circuits used
above have polynomial encoding length; the metric circuit uses two
evaluations of $p$, a maximum gate, and comparison gates for equality, and
the decoder evaluates at most two projected-gradient updates. Therefore both
the instance map and the solution decoder run in polynomial time.
\end{proof}

\begin{proof}[Proof of Corollary \ref{cor:large-margin-triplet-cls-complete}]
The \textsf{CLS}-membership follows from Theorem \ref{thm:large-margin-triplet-to-nadler} and \ref{thm:nadler-clo-equivalence}. We now prove the \textsf{CLS}-hardness. We reduce from \textsc{QuadraticProgram-KKT} to \textsc{Large-Margin-Triplet-FOSP}. The reduction in \cite[Theorem 4.1]{YanEtAlTriplet} shows that $\mathcal{L}$ with the margin $\alpha = d_{\mathrm{emb}}$ can be written as a quadratic formula of \textsc{QuadraticProgram-KKT} with an additional constant, which already satisfies the large margin assumption. Now it remains to show that an approximate FOSP can still be converted into an approximate KKT point. This step can be done by applying Lemma \ref{lem:lipchitz-quadratic-bound} then Lemma \ref{lem:fosp-to-kkt-conversion}.
\end{proof}
\end{document}